\newtheorem{lemma}{Lemma}
\newcommand{\qck}{$\mathrm{QC}_k$\xspace}
\newcommand{\fdqck}{$\mathrm{FDQC}_k$\xspace}
\DeclareMathOperator{\ind}{ind}
\begin{document}

\title{Non-local finite-depth circuits for constructing SPT states \\ and quantum cellular automata}

\date{\today}
\author{David T. Stephen}
\email{davidtstephen@gmail.com}
\affiliation{Department of Physics and Center for Theory of Quantum Matter, University of Colorado Boulder, Boulder, Colorado 80309 USA}
\affiliation{Department of Physics, California Institute of Technology, Pasadena, CA 91125, USA}
\author{Arpit Dua}
\email{adua@caltech.edu}
\affiliation{Department of Physics, California Institute of Technology, Pasadena, CA 91125, USA}
\author{Ali Lavasani}
\affiliation{Kavli Institute for Theoretical Physics, University of California, Santa Barbara, CA, 93106}
\author{Rahul Nandkishore}
\affiliation{Department of Physics and Center for Theory of Quantum Matter, University of Colorado Boulder, Boulder, Colorado 80309 USA}

\begin{abstract}
    Whether a given target state can be prepared by starting with a simple product state and acting with a finite-depth quantum circuit is a key question in condensed matter physics and quantum information science. It underpins classifications of topological phases, as well as the understanding of topological quantum codes, and has obvious relevance for device implementations. Traditionally, this question assumes that the quantum circuit is made up of unitary gates that are {\it geometrically local}.  
    Inspired by the advent of noisy intermediate-scale quantum devices, we reconsider this question with {\it $k$-local} gates, i.e. gates that act on no more than $k$ degrees of freedom, but are not restricted to be geometrically local. First, we construct explicit finite-depth circuits of symmetric $k$-local gates which create symmetry-protected topological (SPT) states from an initial a product state. Our construction applies both to SPT states protected by global symmetries and subsystem symmetries, but not to those with higher-form symmetries, which we conjecture remain nontrivial.
    Next, we show how to implement arbitrary translationally invariant quantum cellular automata (QCA) in any dimension using finite-depth circuits of $k$-local gates. {These results imply that the topological classifications of SPT phases and QCA both collapse to a single trivial phase in the presence of $k$-local interactions.} {We furthermore argue that SPT phases are fragile to \textit{generic} $k$-local symmetric perturbations.} We conclude by discussing the implications for other phases, such as fracton phases, and surveying future directions. 
    Our analysis opens a new experimentally motivated conceptual direction examining  the stability of phases and the feasibility of state preparation  without the assumption of geometric locality.
   \end{abstract}

 \maketitle

The exploration of {\it topological phases of matter} has been a major theme of modern condensed matter physics (for introductions, see \cite{wen,Wen2017}), with far-reaching implications for quantum information and quantum computation (for introductions, see \cite{Bombin, Pachos}). 
These phases of matter, when defined on lattices, have been classified using the complexity of preparation of the associated ground states using local quantum circuits \cite{Chen2010}; 
a state that can be accessed by starting with a product state and acting with a finite-depth quantum circuit of geometrically local gates (FDQC) can be said to be `easy' to prepare, and one that cannot be, is difficult. This characterization is intimately related to the notion of {\it topological stability}, which states that topological phases are robust to geometrically local perturbations (which should be symmetry restricted in the case of symmetry-protected topological (SPT) phases \cite{spt,Chen2013}), since topological stability implies that topologically non-trivial states cannot be connected to trivial states by a (symmetric) FDQC \cite{Bravyi2006,Chen2010}. Recently, the notion of FDQCs in relation to topological phases has been extended in various directions, such as allowing projective measurements \cite{Piroli2021,Bravyi2022,Tantivasadakarn2022a}, or by extending to linear depth quantum circuits \cite{HuangChen,Liu2022,Wang2022}, but still demanding geometric locality. The geometric locality is, of course, a very natural constraint to impose in traditional solid-state settings. However, we are witnessing the rapid development of experimental capabilities in the fields of quantum simulation and noisy intermediate-scale quantum devices (for a recent review, see \cite{Altman}), which have come to provide a new context in which to explore topological phases (see e.g. \cite{LukinTopo, TerhalTopo}). In this new setting, the geometric locality is not necessarily guaranteed and there can arise {\it $k$-local} interactions i.e., interactions that are few-body (acting on no more than $k$ degrees of freedom), but not restricted to be geometrically local (see e.g. \cite{MartinisNonlocal, MonroeNonlocal, LevNonlocal,Bourassa_2021}). Do the results on the difficulty of topological state preparation survive in this new setting, and related, does the classification of topological phases of matter and the associated notion of topological stability survive in a setting where there can be $k$-local perturbations?  

Another topological classification that may be affected by $k$-local interactions is that of locality-preserving unitary operators, also known as quantum cellular automata (QCA) \cite{Farrelly2020}.
An FDQC is an obvious example of a QCA, but there are also QCA that cannot be written as an FDQC, such as the lattice translation operator. The topological classification of phases of QCA arises from the question of whether two QCA can be smoothly connected along a locality-preserving path, with the topologically trivial phase consisting of FDQCs as they can be connected to the identity. In one and two dimensions, this classification is given by an index theory that essentially shows that every QCA can be decomposed as an FDQC and a translation \cite{Gross2012,Freedman2020}. In three dimensions, the classification is incomplete as there appear to be QCA that are neither FDQCs nor translations, but significant progress is being made \cite{Haah2022,Haah2021,Shirley2022}. The classification of QCA also plays a role in the classification of topological phases \cite{Stephen2019,Haah2021,Haah2022,Shirley2022,Fidkowski2020}, particularly Floquet phases \cite{Po2016,Fidkowski2019,Poetal}.

In this work, we examine the fate of the `topological stability' of topological phases and QCA in the presence of $k$-local perturbations by asking whether one can construct a reference state or QCA in the putatively topological phase using an FDQC made out of $k$-local (but not geometrically local) gates. 
We show that SPT phases with global or subsystem symmetries are {\it not} stable to $k$-local perturbations by explicitly constructing FDQCs of symmetric $k$-local gates that trivialize the fixed-point states by mapping them to symmetric product states. {Since all states within a given phase can be connected to the fixed-point state by a symmetric FDQC \cite{Chen2010}, our circuits for fixed-point states imply the existence of similar circuits that disentangle any state in a given SPT phase. We also argue that SPT order should be unstable to generic symmetric $k$-local noise}. Similarly, we show that every translationally-invariant QCA on a periodic lattice in any spatial dimension can be realized as an FDQC of $k$-local gates that commute with the same symmetries as the QCA. Therefore, all QCA belong to the same trivial phase when $k$-local gates are allowed. However, $k$-local perturbations do not trivialize everything - indeed it is known that topological order (such as the toric code model) is stable to $k$-local perturbations \cite{Aharonov2018}. More generally, it is known that code states of quantum error correcting codes can not be disentangled in finite-depth by $k$-local unitary gates \cite{Bravyi2006,Aharonov2018} (see Appendix~\ref{app:QECC_klocal_nontriviality} for a brief review of the proof).
We argue that fracton order and Floquet topological codes are also stable to $k$-local perturbations, and we conjecture the same is true of SPT order protected by higher-form symmetries. 

{
While our main focus is on `in principle' topological classifications in the absence of geometric locality, our work also has practical implications. In particular, we note that there has recently been a significant effort to prepare and analyze topological phases in quantum simulators and quantum computers \cite{Azses2020,Xiao2021,Semeghini2021,Smith2022,Lu2022,Mi2022,Dumitrescu2022}. In this context, our results show that certain operations which require a circuit depth that is linear in system size with local interactions can be done in finite depth using $k$-local interactions, significantly reducing the time required for implementation.}

In summary, this paper introduces a new (experimentally motivated) perspective on topological stability, showing that certain topological classifications can collapse in the presence of $k$-local interactions, and also opens up a new route for the efficient preparation of certain topological states on quantum devices. 

The rest of this paper is organized as follows. In Sec.~\ref{sec: def} we provide a precise definition of $k$-local circuits. In Sec.~\ref{sec:spt} we show that these circuits can trivialize arbitrary SPTs protected by global or subsystem symmetries, but we provide evidence that this is not possible for SPTs protected by higher-form symmetries. In Sec.~\ref{sec:qca} we show that these circuits can also trivialize arbitrary QCA. In Sec.~\ref{sec:randomcirc} we present numerical evidence that such circuits can trivialize SPT states in monitored circuits. We conclude in Sec.~\ref{sec:discussion}, where we also discuss the $k$-local non-triviality of states with true topological order (including fracton phases and Floquet topological codes), as well as discussing the implications of our results and some future directions. In the Appendix, we discuss an alternative classification based on {\it finite-time} $k$-local circuits, which are even more powerful than the finite-depth $k$-local circuits discussed in the main paper.

\section{Definition of $k$-local circuits}
\label{sec: def}
{
In this section, we lay out the definitions of the notions of locality that we will be concerned with throughout the paper. The first and most common notion of locality is geometric locality. Any lattice system has a natural notion of distance between pairs of sites, which allows us to define a geometrically local unitary gate as one which acts only on sites that can be contained within a ball of some finite radius. Here and throughout, the word finite means independent of system size, \textit{i.e.} finite even in the thermodynamic limit. Naturally, such a gate acts only on a finite number of sites in the lattice. One can construct a circuit of geometrically local gates
\begin{equation}
    U=\prod_{\ell=1}^D\left( \prod_{i} u_{\ell,i} \right)
\end{equation}
where each gate $u_{\ell,i}$ is geometrically local and the gates within a given layer $\ell$ have non-overlapping support, such that they can be applied in parallel. We call such a unitary $U$ a quantum circuit (QC). When the number of layers $D$ is finite, $U$ is called a finite-depth quantum circuit (FDQC). It is also interesting to define symmetric QCs, which are circuits in which each gate individually commutes with some symmetry operator.

We can extend the notion of the geometric locality to $k$-locality. A gate is called $k$-local if it has support on at most $k$ sites. Thus a $k$-local gate shares the few-body property of a geometrically local gate while ignoring the relative position and distance between spins. We define a $k$-local quantum circuit, \qck, to be composed of $k$-local gates $u_{\ell,i}$ where $k$ is finite, such that the gates are applied in layers and in each layer they have non-overlapping support. When the number of layers is finite, we call the circuit a finite-depth \qck (\fdqck).

Clearly, every QC is also a \qck, but in general, a \qck is more powerful in the sense that writing a many-body unitary operator as a \qck can sometimes be accomplished with lower depth than is required to write it as a QC. As a simple example, consider the unitary operator which generates the $N$-qubit Greenberger–Horne–Zeilinger (GHZ) state from a product state. This can only be implemented with a linear-depth QC due to the presence of long-range correlations in the GHZ state \cite{Aharonov2018}. However, it can be done with a \qck having a depth that is logarithmic in $N$, see Refs.~\cite{Cruz2019,Liao2021} for example. Similarly, the unitary which generates the two-dimensional toric code ground state, an example of topological order, requires a linear-depth QC \cite{Bravyi2006} but can be done in a log-depth \qck \cite{konig2009exact,Liao2021}. Notably, neither of these unitaries can be written as an \fdqck \cite{Aharonov2018}, so we say that the GHZ and toric code states remain non-trivial in the $k$-local setting. In contrast, the unitaries discussed in this paper will primarily be circuits that require linear depth using a (symmetric) QC, but can be performed in finite-depth using a (symmetric) \qck, so we say they become trivial in the $k$-local setting.
}

\section{$k$-local instability of SPT\lowercase{s}} \label{sec:spt}

In this section, we ask whether SPT states can be prepared by symmetric FDQC$_k$. In Secs.~\ref{sec:1D} and \ref{sec:2D} we consider global on-site symmetries. While it is known that a linear-depth symmetric QC is needed to create an SPT ground state \cite{HuangChen} from a symmetric product state, we will show that a symmetric \fdqck is sufficient. We first give an intuitive physical argument as to why this is the case by studying the boundaries of SPT phases, and we then construct an explicit finite-depth $k$-local symmetric circuit that disentangles fixed-point SPT states. In Sec.~\ref{sec:SSPT}, we show that SPT phases protected by subsystem symmetries (SSPT phases) can also become trivial in the $k$-local scenario. Conversely, we argue in Sec.~\ref{sec:higherformSPT} that SPT phases protected by higher-form symmetries remain non-trivial even in the $k$-local scenario. In Appendix \ref{app:finitetime}, we give alternative finite-time $k$-local constructions of SPT states which have the advantage that all interactions are local except for one special qubit which can interact with all others.

\subsection{1D SPT phases with global symmetries} \label{sec:1D}

Let us first consider 1D SPT phases. The characteristic feature of 1D SPT order is the existence of zero-energy edge modes that are protected by the bulk symmetry. On periodic boundaries, the ground state is unique. But when an edge is introduced, {degenerate ground states which differ only in a region exponentially close to the boundary will appear}. This boundary degeneracy is robust in the sense that no local, symmetric perturbation can split the degeneracy. However, the degeneracy is not robust to $k$-local interactions, as a symmetric interaction can be used to couple the two edges in such a way as to split the degeneracy. This suggests that 1D SPT orders are not robust to $k$-local symmetric interactions. However, there may still be some non-trivial bulk properties that cannot be removed by $k$-local interactions. We show that this is not the case by explicitly constructing a symmetric \fdqck which maps generic SPT fixed-point states to product states. 

We can understand why such a circuit should exist using a simple folding argument. Namely, consider a 1D SPT ordered state $|\psi\rangle$ on a ring. Suppose that we ``fold'' the ring, bringing the two opposite sides close to each other. In the bulk of the folded system, it looks like we have stacked the state with its spatially reversed self, see Fig.~\ref{fig:1d_disentangler}. However, it is well known that SPT phases are invertible, meaning that there exists a second state $|\psi^{-1}\rangle$ such that the joint system $|\psi\rangle\otimes |\psi^{-1}\rangle$ is in a trivial SPT phase. For this statement to make sense, it is important to specify how the symmetry acts on the joint system. If $|\psi\rangle$ has symmetry $G$ with an on-site representation $U(g)$, then the symmetry acts on the joint system ``diagonally'', \textit{i.e.} with representation $U(g)\otimes U(g)$ of $G$. If we instead considered the symmetry group $G\times G$ represented by $U(g)\otimes U(h)$ with $g$ not necessarily equal to $h$, then the stacked system is still in a nontrivial phase with respect to this larger symmetry. In the folded system, the global symmetry indeed acts in the same way across the whole ring, so the total symmetry group is still only $G$.

It turns out that the spatial inverse of the SPT fixed-point states is in fact the inverse state in the above sense \cite{Chen2013}, so the bulk of the folded system looks like a trivial SPT phase. Then, there must be a symmetric finite-depth circuit that disentangles the bulk to a product state. Looking at this picture without folding, the disentangling circuit couples the two distant edges of the ring, therefore requiring long-range $k$-local gates.

\begin{figure}
    \centering
    \includegraphics[width=\linewidth]{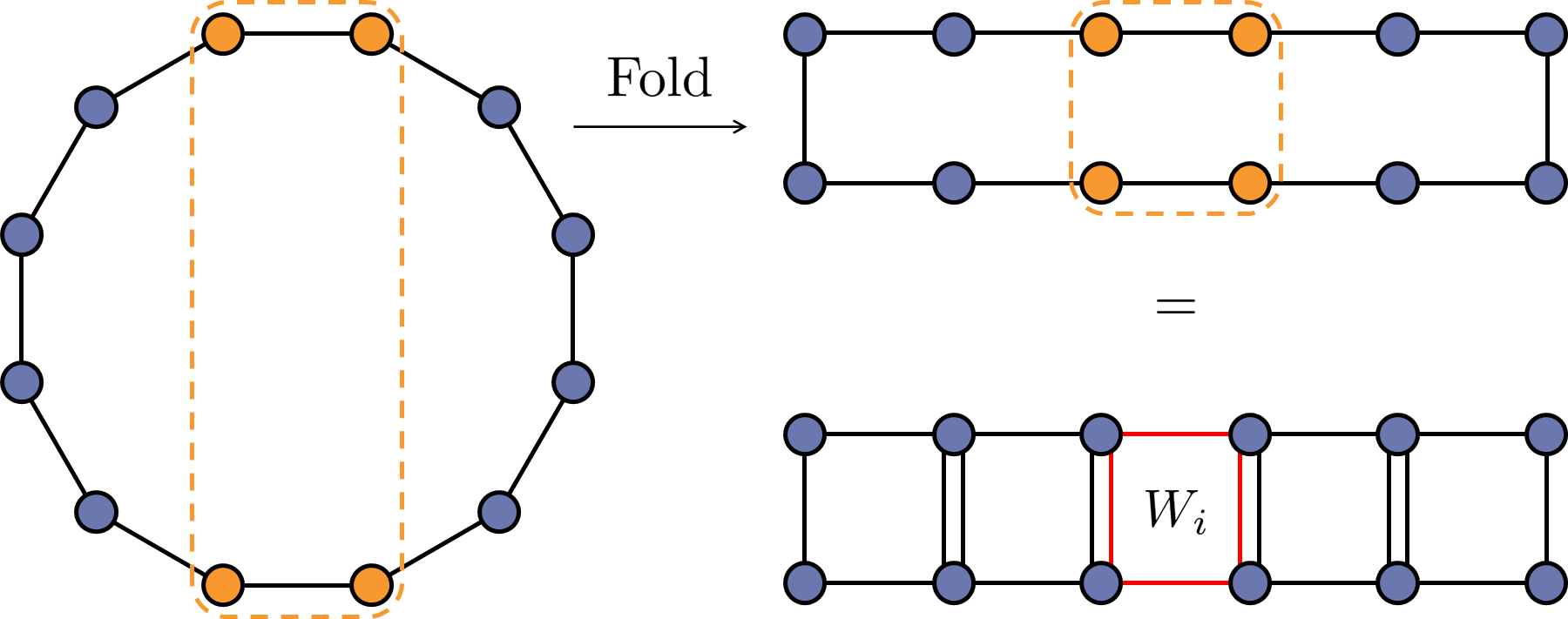}
    \caption{Disentangling 1D SPT phases with $k$-local symmetric gates. After folding a 1D chain with periodic boundary conditions in half, the resulting system looks like a stack of the chain and its spatial inverse in the bulk. This stack can be disentangled with symmetric gates acting on the highlighted qubits, which correspond to $k$-local gates in the original system. For fixed-point SPT states, these gates are the $W_i$ defined in Eq.~\ref{eq:wi_cluster}, one such gate is highlighted in red where each line is a $CZ$ gate. Note that the adjacent vertical gates cancel each other pairwise in the bulk.}
    \label{fig:1d_disentangler}
\end{figure}

Let us explicitly construct such a circuit. We give only a single example here, as the general case is covered by the construction in Sec.~\ref{sec:qca}.
The example we consider is the 1D cluster state \cite{Briegel2001}, which can be created from a product state using a finite circuit in the following way,
\begin{equation} \label{eq:1d_clus}
    |\psi_C\rangle = \left(\prod_{i=1}^N CZ_{i,i+1}\right) |+\rangle^{\otimes N}
\end{equation}
where $|+\rangle = \frac{1}{\sqrt{2}}(|0\rangle + |1\rangle)$ and $CZ = I - 2|11\rangle\langle 11|$ is the controlled-$Z$ gate. Assuming $N$ is even, $|\psi_C\rangle$ has a $\mathbb{Z}_2\times\mathbb{Z}_2$ symmetry generated by $X_{\mathrm{odd}} = \prod_{i=1}^{N/2} X_{2i+1}$ and $X_{\mathrm{even}} = \prod_{i=1}^{N/2} X_{2i}$. 

Importantly, while the circuit of controlled-$Z$ gates commutes with this symmetry as a whole, the individual gates do not. Since the 1D cluster state has non-trivial SPT order \cite{Son2012}, there does not exist a symmetric FDQC which maps it to a product state \cite{Chen2011,HuangChen}. However, it is possible using a symmetric \fdqck. Consider the gates,
\begin{equation} \label{eq:wi_cluster}
    W_i = CZ_{i,i+1} CZ_{i+1,N-i} CZ_{N-i,N-i+1} CZ_{N-i+1,i},
\end{equation}
which are depicted in Fig.~\ref{fig:1d_disentangler}. It is straightforward to check that $W_i$ commutes with the $\mathbb{Z}_2\times\mathbb{Z}_2$ symmetry. Furthermore, we have,
\begin{equation} \label{eq:1dspt_disent}
    \left( \prod_{i=1}^{N/2-1} W_i \right)|\psi_C\rangle = |+\rangle^{\otimes N},
\end{equation}
as shown in Fig.~\ref{fig:1d_disentangler}. Since the $W_i$ all commute with each other, and since the support of each $W_i$ overlaps with the support of finitely many others, they can be applied in a finite number of layers such that this is a symmetric \fdqck which trivializes the cluster state.

It is instructive to note that $W_i$ is the operator that creates a small 4-site cluster state on a ring, which explains why it is symmetric. The $k$-local disentangling circuit can therefore be interpreted as a ``bubbling'' procedure in which one decomposes the $N$-site cluster state into a number of small 4-site cluster states and then disentangles each 4-site cluster state with a symmetric $k$-local gate. 

\subsection{2D and higher-dimensional SPT phases} \label{sec:2D}

\begin{figure}
    \centering
    \includegraphics[width=0.8\linewidth]{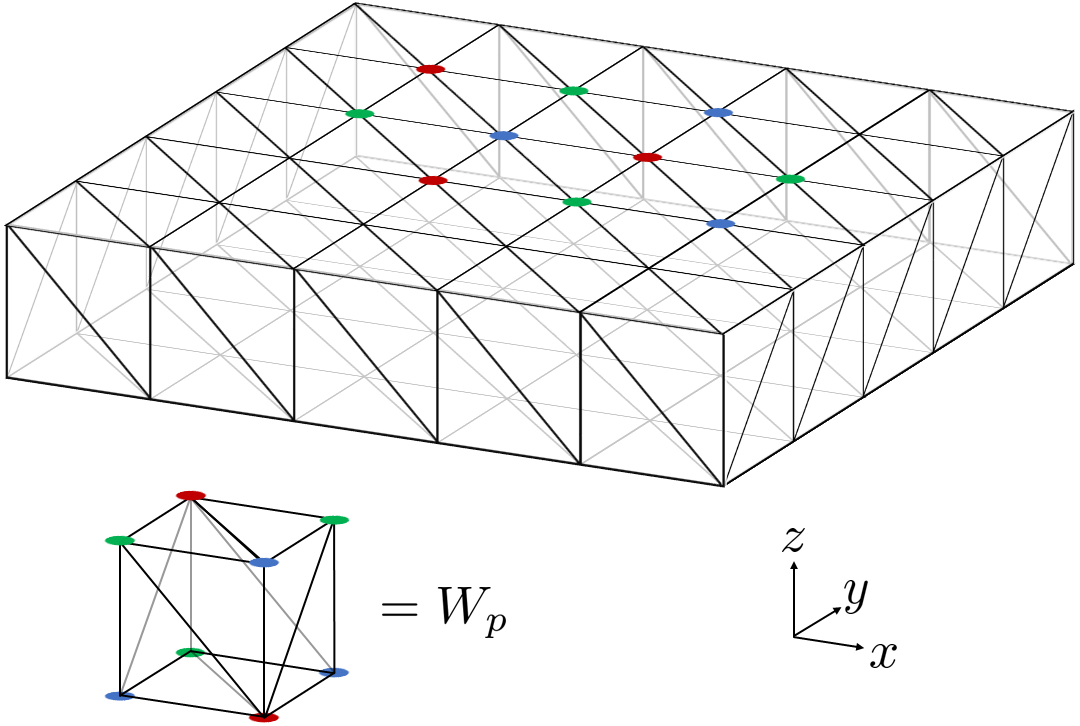}
    \caption{Disentangling 2D SPT phases with $k$-local symmetric gates. We show a folded system with spherical boundary conditions, which leads to a thin rectangular prism. A torus is obtained by identifying the left and right faces. A subset of the lattice coloring is shown. The rectangular prism geometry can be decomposed into smaller rectangular prisms. The symmetric gate $W_p$ is defined by acting with $CCZ$ on all 12 of the triangular faces of this prism. When applying $W_p$ to every prism, all bulk gates that act on faces parallel to the $z$-axis cancel between neighboring prisms, leaving only gates on the surface of the large rectangular prism.}
    \label{fig:2d_disentangler}
\end{figure}

The folding argument from the previous section carries over equally well to SPT phases with global symmetries in 2D and higher. We give only the 2D argument explicitly, as the generalization to higher dimensions is straightforward. We again give a single example as the general case (in all dimensions) is covered by Sec.~\ref{sec:qca}.

As in 1D, we can predict that 2D SPT phases are trivial with $k$-local interactions by considering the boundary. Consider the example of a topological insulator, which has a gapless helical edge on the boundary of a disk. With local symmetric interactions, the edge cannot be gapped out without breaking the symmetry (although see \cite{ChouTI, KimchiTI}). However two opposite points on the boundary have helical currents moving in opposite directions, which could backscatter off each other if coupled by a $k$-local perturbation. Therefore, it is possible to introduce a $k$-local symmetric term that couples opposite points and gaps out the edge without breaking symmetry. 

As before, we will confirm this intuition by constructing explicit symmetric \fdqck disentanglers. As our example, we choose the 2D hypergraph state, first defined in Ref.~\cite{Yoshida2016}. The state is defined on a triangular lattice with one qubit per site. We choose boundary conditions of a sphere for our demonstration, but a torus would work equally well. The state is defined as follows,
\begin{equation} \label{eq:hypergraph}
    |\psi_H\rangle = \left(\prod_{\triangle} CCZ_\triangle\right) |+\rangle^{\otimes N}
\end{equation}
where $CCZ$ acts on the three qubits around a triangle as $CCZ = I-2|111\rangle\langle 111|$. This state has SPT order protected by a $\mathbb{Z}_2\times\mathbb{Z}_2\times\mathbb{Z}_2$ symmetry \cite{Yoshida2016}. This symmetry relies on the fact that the lattice is 3-colorable, meaning that each site can be assigned a color (red, blue, or green) such that neighboring sites have different colors, see Fig.~\ref{fig:2d_disentangler}. The symmetry is then generated by the operators $X_R$, $X_B$, and $X_G$, which are tensor products of $X$ on every red, blue, and green site, respectively. We remark that this state is closely related to the Levin-Gu state \cite{Levin2012} which is an example of 2D SPT order with $\mathbb{Z}_2$ symmetry \footnote{The Levin-Gu state can be obtained by acting on $|\psi_H\rangle$ with $CZ$ on every neighboring pair of spins and $Z$ on every spin \cite{Wei2018a}.}. As before, the circuit of $CCZ$'s is symmetric as a whole on closed boundary conditions, but the individual gates are not symmetric.

By folding the 2D system, we get a state defined on the surface of a thin rectangular prism. This rectangular prism can be decomposed into a number of small rectangular prisms with triangular faces, see Fig.~\ref{fig:2d_disentangler}. We label these prisms by $p\in P$. For each such prism $p$, we define a gate,
\begin{equation}
    W_p = \prod_{\triangle\in p} CCZ_\triangle,
\end{equation}
which is depicted in Fig.~\ref{fig:2d_disentangler}. As in the 1D case, this operator can be interpreted as creating a small instance of $|\psi_H\rangle$ on a triangulation of a sphere, and it, therefore, respects the $\mathbb{Z}_2\times\mathbb{Z}_2\times\mathbb{Z}_2$ symmetry. Applying this operator to every prism, we have,
\begin{equation}
    \left(\prod_{p\in P} W_p\right) |\psi_H\rangle = |+\rangle^{\otimes N},
\end{equation}
so this is a symmetric \fdqck disentangler.

\subsection{SPT phases with subsystem symmetries}
\label{sec:SSPT}

\begin{figure}
    \centering
    \includegraphics[width=0.8\linewidth]{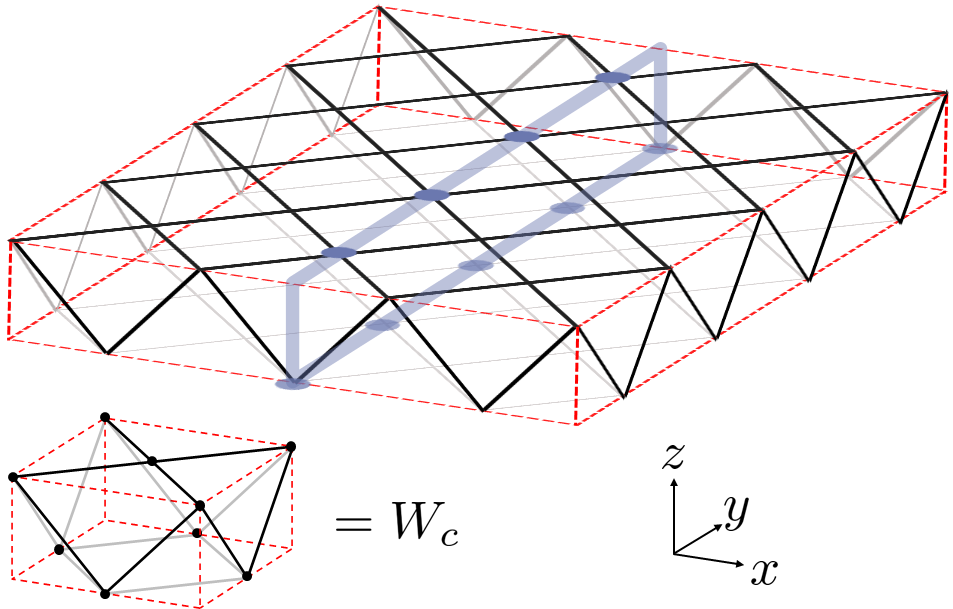}
    \caption{Disentangling 2D SSPT phases with $k$-local symmetric gates. The dashed lines are shown as a guide for the eye. A torus is obtained by identifying the left and right opposite faces. The gate $W_c$ consists of $CZ$s on every solid edge, with qubits on located at the small dots (not drawn on the upper figure). The thick solid loop and dots indicate the support of one line-like subsystem symmetry generator.
    }
    \label{fig:sspt_disentangler}
\end{figure}

We now turn to SPT phases protected by subsystem symmetries. These are symmetries that act on rigid, lower-dimensional submanifolds of the entire system, such as straight lines across a 2D lattice \cite{Raussendorf2019,You2018,Devakul2019,Devakul2018class,You2020}. These are similar to 1D SPTs, in that they are characterized by an extensive degeneracy on the edge \cite{You2018,Devakul2019}. Indeed, if a 2D SSPT phase on a cylinder is treated like a quasi-1D system along the cylinder's length, then it behaves like a 1D SPT phase with a sub-extensive number of symmetry generators \cite{Stephen2019}.

The prototypical SSPT phase is represented by the 2D cluster state \cite{Briegel2001}. This state consists of qubits on a 2D square lattice and is defined as follows,
\begin{equation}
    |\psi_{2DC}\rangle = \left(\prod_{\langle ij\rangle} CZ_{i,j}\right)|+\rangle^{\otimes N},
\end{equation}
where the product is over all nearest neighbors in the square lattice. The symmetries of this model form rigid diagonal lines spanning the square lattice, defined as,
\begin{equation}
    U_{c,\pm} = \prod_x X_{(x,c\pm x)}
\end{equation}
where $i=(x,y)$ is a coordinate on the 2D square lattice. Similar to the cases of global symmetry, one can use the folding trick to create $|\psi_{2DC}\rangle$ on a closed 2D manifold using a symmetric \fdqck consisting of the symmetric $k$-local gates $W_c$ defined in Fig.~\ref{fig:sspt_disentangler}. 

We remark that in order for a diagonal line of symmetry to act the same on the top and bottom layers of the fold as in Fig.~\ref{fig:sspt_disentangler}---which is necessary for the folding argument to work---we needed to put $|\psi_{2DC}\rangle$ on a $45^\circ$-rotated square lattice. We note that this results in some triangular faces appearing on spherical boundaries (which are absent on the torus). For general subsystem symmetries, which can have more complex geometries such as fractal geometries \cite{Devakul2018class}, we expect that a symmetric \fdqck will only be able to create SSPT states on spatial manifolds with nice enough geometry such that there are distant regions in space where the symmetry mirrors itself.

\subsection{SPT phases with higher-form symmetries}
\label{sec:higherformSPT}

In this section, we argue that in contrast to global and subsystem symmetries, phases protected by higher-form symmetries are robust to $k$-local symmetric interactions. In general, a $q$-form symmetry is one that acts on closed codimension-$q$ submanifolds of space. In contrast to subsystem symmetries, these submanifolds are not rigid and can be deformed freely. These symmetries are local in the sense that they may have support on only a finite number of sites. For example, a 1-form symmetry in 3D is generated by objects acting on any closed 2D surface embedded in the 3D space. {A simple example of a state with SPT order under higher-form symmetries is the 3D cluster state defined in Ref.~\cite{Raussendorf2005a} (see also Ref.~\cite{Roberts2020} for a more detailed discussion of the SPT order) which is also an example of the general Walker-Wang construction \cite{Walker2012}.}

We first observe that the folding argument used for global symmetries does not carry over to the case of higher-form symmetries. This is because folding, in this case, does not correspond to stacking in the usual sense. Recall that stacking requires the symmetry to act in the same way on the two layers. When we fold a system with global symmetry, acting with the symmetry on the whole system automatically has identical action on the two layers of the fold. However, this is not the case when we fold a system with higher-form symmetry since the symmetry generators, being local, can act independently on either layer. Because of this, the bulk of the folded system resembles two stacked SPT states, each with its own independent symmetry, and such a system has a non-trivial SPT order. 

{
Second, we observe that any $k$-local interaction that commutes with a higher-form symmetry is also locally symmetric, meaning the gates can be decomposed into a sum of tensor products of geometrically local symmetric unitaries. This is due to the simple fact that the higher-form symmetry group itself contains operators that act non-trivially only in local regions of space. This is much different from the case of global symmetries, where a symmetric $k$-local interaction can violate symmetry locally while still preserving it globally. Because of this, the class of $k$-local gates that we can employ is severely restricted compared to the case of global symmetry. Indeed, the \fdqck{}'s used to disentangle the SPT states above only commute with the symmetry globally as they contain operators such as a long-range $ZZ$ pair, which can transfer symmetry charge over long distances. In Sec.~\ref{sec:generic}, we argue that the fact that symmetric $k$-local interactions can violate the symmetry in local regions of space is crucial to their ability to destroy SPT order. Accordingly, we expect that $k$-local gates that preserve the symmetry locally, which is always the case for higher-form symmetric gates, are insufficient to destroy SPT order.
}

Finally, examining the boundary of a higher-form SPT also suggests that it may be $k$-local non-trivial. Higher-form SPTs in 3D, for example, can support topologically ordered boundary theories such as a 2D toric code appearing on the boundary of the 3D cluster state \cite{Walker2012,Yoshida2016,Roberts2020}. However, unlike in the boundaries of systems with global symmetry discussed above, $k$-local interactions cannot trivialize this boundary theory because the 2D toric code, and any number of stacks of it, are $k$-local non-trivial \cite{Aharonov2018}.  

\section{Unified construction of symmetric $k$-local circuits for QCA} \label{sec:qca}

\begin{figure}
    \centering
    \includegraphics[width=\linewidth]{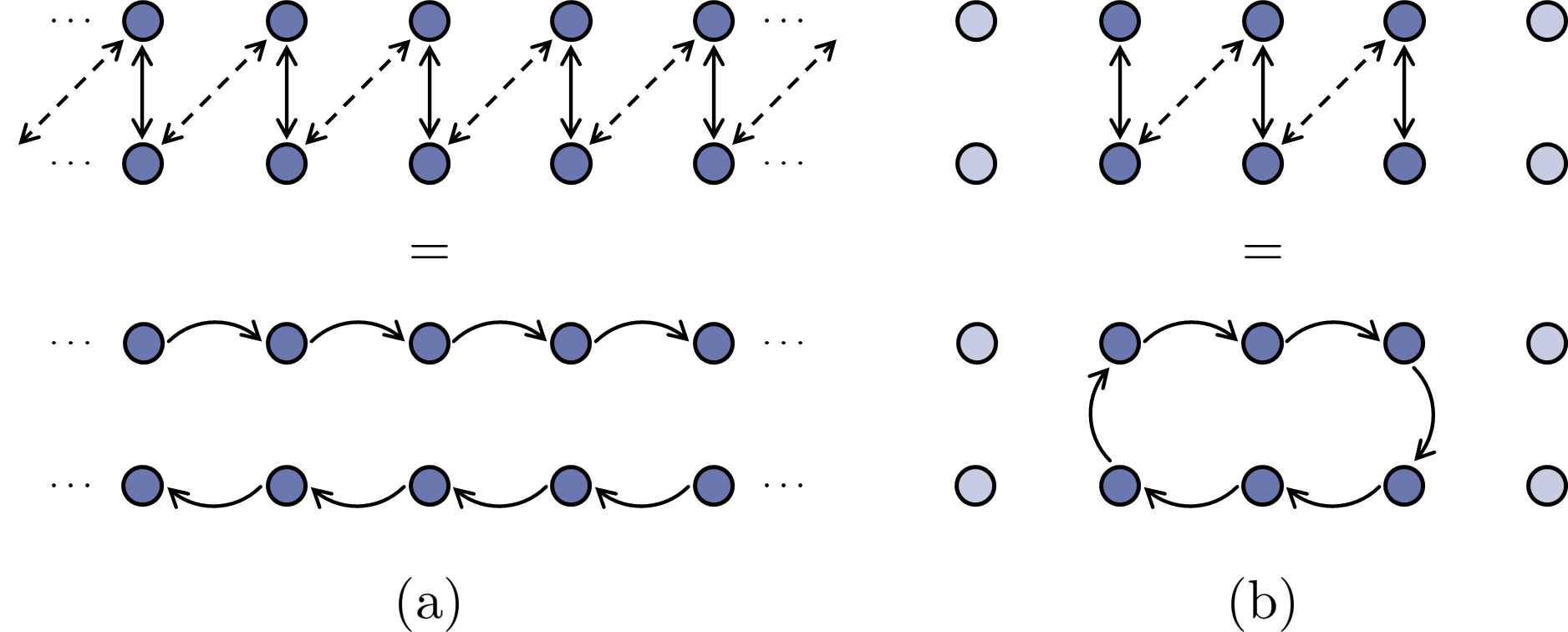}
    \caption{(a) A pair of counter-propagating 1D QCA implementing the shift operation, where information flows in the directions of the arrows, can be realized by a depth-2 $k$-local circuit of SWAP gates, indicated by double-headed arrows, where all solid arrows first act in parallel, then all dotted arrows act in parallel. (b) Truncating this circuit stitches the top and bottom lines of qubits into a single ring acted on by the shift QCA.}
    \label{fig:shift_qca}
\end{figure}

In this section, we show that locality-preserving unitary operators, also known as quantum cellular automata (QCA) \cite{Farrelly2020}, can be implemented as \fdqck{}'s. We give a generic construction of these \fdqck{}'s that works in any spatial dimension $d$. We only deal with translationally-invariant QCA, but this condition can likely be relaxed to a certain extent. We also assume the QCA acts on a lattice with mirror symmetry in all $d$ directions. 

An FDQC is a trivial example of a QCA. However, there are also QCA that cannot be expressed as an FDQC, such as the 1D shift QCA $Q_{S}$ which acts on any operator $O_i$ supported on site $i$ by translating it by one site, $Q_S O_iQ_S^{-1} = O_{i+1}$. When only local gates are employed, implementing this operation requires QC whose depth grows linearly with system size. In 1D and 2D, it has been shown that all QCA are composed of shifts and FDQCs \cite{Gross2012,Freedman2020}. In 3D, however, there are believed to be \textit{non-trivial} QCA which are neither shifts nor circuits \cite{Haah2022,Haah2021,Shirley2022}. {Our result constructs \fdqck{}'s for all QCA, which, to the best of our knowledge, provides the first circuit representation of nontrivial 3D QCA. The existence of these \fdqck{}'s implies that all QCA become trivial in the $k$-local scenario, such that the topological classification of QCA collapses to a single trivial phase when $k$-local gates are allowed.} Additionally, if the QCA commutes with some global symmetry, it may not be possible to write it as a \textit{symmetric} FDQC, even if it is an FDQC. Nonetheless, the \fdqck we construct consists of symmetric gates (in the case of global symmetries), so this result gives an alternative demonstration of the triviality of SPT order by applying our construction to the QCA which generate the SPT states, such as those in Eq.~\ref{eq:1d_clus} and Eq.~\ref{eq:hypergraph}.

The main ideas which lead to this result are the following. Given a QCA $Q$ acting in $d$ spatial dimensions, we start with the known fact that $Q\otimes Q^{-1}$ acting on two copies of a system can be realized by an FDQC \cite{Arrighi2011,Farrelly2020}. By truncating this FDQC to a finite region of space, we obtain a $k$-local circuit acting on a single conjoined system. On one half of this system, the circuit acts like $Q$, while on the other half, it acts like $Q^{-1}$. We then show that $Q^{-1}$ is equivalent to the spatial inversion of $Q$ in one direction, which we denote as $\bar{Q}$, up to a circuit of $(d-1)$-dimensional QCA that can be applied in two parallel layers. The action of $Q$ on one half and $\bar{Q}$ on the other is nothing more than $Q$ applied to a periodic system. An inductive argument then concludes that $Q$ can be implemented as a \fdqck. This idea is demonstrated in Fig.~\ref{fig:shift_qca} for the simple example of the shift QCA. In this case, the second part of the argument is not needed as $\bar{Q}$ is already equal to $Q^{-1}$. {We work out three explicit examples of the general construction, including the shift QCA, in Appendix \ref{sec:qca_ex}.}

\subsection{Construction from 1D Margolus representation}

Consider a QCA $Q$ which acts on a Hilbert space $\mathcal{H}=(\mathbb{C}^d)^{\otimes N}$. Now construct a doubled Hilbert space $\mathcal{H}_A\otimes \mathcal{H}_B$ made of two copies of $\mathcal{H}$. For every degree of freedom $i$ in $\mathcal{H}$, we have two degrees of freedom $[i]_A$ and $[i]_B$ in $\mathcal{H}_A\otimes \mathcal{H}_B$. Let $Q_{A}$ ($Q_B$) denote $Q$ acting on $\mathcal{H}_{A}$ ($\mathcal{H}_{B}$) and write $V=Q_B^{-1}Q_A$. Now write $\mathcal{S}_{AB}=\prod_i S_i$ where $S_i$ is the SWAP operation that exchanges sites $[i]_A$ and $[i]_B$. For now, we imagine that $A$ and $B$ are geometrically close such that $S_i$ is a local operator. Then we have \cite{Arrighi2011,Farrelly2020},
\begin{align}
    V &= Q_B^{-1}Q_A \nonumber \\
    &=  \mathcal{S}_{AB} Q^{-1}_A \mathcal{S}_{AB} Q_A \nonumber \\
    &= \left(\prod_i S_i\right) \left(\prod_i V_i\right),
\end{align}
where,
\begin{equation}
    V_i = Q_A^{-1} S_i Q_A.
\end{equation}
Since $S_i$ is a local operator, and $Q_A$ is locality-preserving, $V_i$ is a local operator. Since the $V_i$ all commute with each other for all $i$, the above formula can be parallelized into a finite-depth circuit realizing $V$. 

Now we imagine truncating the above circuit as follows,
\begin{equation} \label{eq:vr}
    V_R =\left(\prod_{i\in R} S_i\right) \left(\prod_{i\in R} V_i\right)
\end{equation}
where $R$ is some finite connected subset of sites. Far outside of the region $R$, $V_R$ will act as the identity. Deep inside the region $R$, $V_R$ will act as $Q_AQ^{-1}_B$. To understand what happens near the boundaries of $R$, we will employ the so-called Margolus representation of a QCA. We will first describe this for 1D systems and then show how to extend to higher dimensional systems in the next section.

Take the physical space to be a 1D chain with sites indexed by a single integer $i$. By blocking a finite number of sites, \textit{i.e.} enlarging the unit cell, it is always possible to make $Q$ have unit range, meaning that if $O_i$ is an operator supported on a site $i$, then $QO_iQ^{-1}$ is supported at most on sites $i-1,i,i+1$. Then, according to the results of Ref.~\cite{Schumacher2004}, the QCA can be written in the following Margolus representation,
\begin{equation} \label{eq:standard}
    Q = \left( \prod_{i} v_{2i-1,2i} \right)\left( \prod_{i} u_{2i,2i+1} \right)
\end{equation}
where $u$ is a unitary operator
mapping from the $d^2$-dimensional Hilbert space $\mathbb{C}^d\otimes \mathbb{C}^d$ to the $\ell r$-dimensional Hilbert space $\mathbb{C}^\ell\otimes\mathbb{C}^r$ where $d$ is the dimension of a unit cell and $\ell r=d^2$. Similarly, $v$ is a unitary operator mapping from $\mathbb{C}^r\otimes \mathbb{C}^\ell$ to $\mathbb{C}^d\otimes \mathbb{C}^d$.
Graphically, we can represent the right-hand side of this equation as,
\begin{equation} \label{eq:standard_form}
    \includegraphics[scale=0.44]{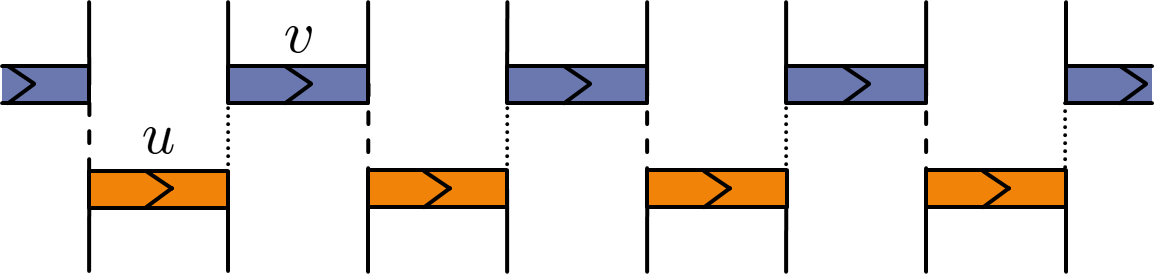}
\end{equation}
where the solid, dashed, and dotted lines have dimensions $d$, $\ell$, and $r$, respectively. The arrows indicate the orientation of the unitaries $u$ and $v$  and are expressed formally by the spatial ordering of the Hilbert space labels on the left and the right. This will become important when we invert this orientation, as we define shortly. The standard form encompasses all 1D QCA, even the shift QCA whose standard form is described in Appendix \ref{sec:qca_ex_shift}. 

{We emphasize that the $\ell$ and $r$-dimensional Hilbert spaces are not physical, and are rather a technical tool used to write the Margolus representation. Accordingly, the operators $u$ and $v$ are not proper unitary gates, since their input and output Hilbert spaces are not equivalent. Rather, they are used as building blocks to define proper unitary gates such as $Q$ itself.}

\begin{figure*}
    \centering
    \includegraphics[width=\linewidth]{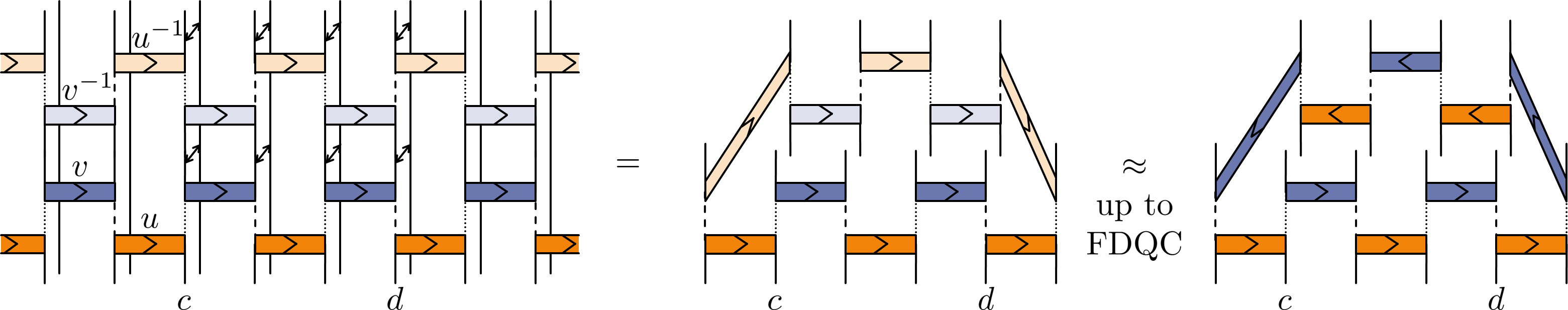}
    \caption{The truncated circuit $V_R$ where the 1D QCA $Q$ is expressed in standard form. The downwards- (upwards-) shifted vertical lines correspond to degrees of freedom in the $A$ ($B$) subsystem. The lighter-colored rectangles represent the inverses $u^{-1}$ and $v^{-1}$ as indicated in the first expression. An arrow between two lines indicates the SWAP operation $S_i$ between degrees of freedom in the two subsystems. In the second expression, most gates have canceled pairwise, leaving only those shown. {The degrees of freedom acted on by the remaining gates form the ring $\mathcal{R}$.} For visual clarity, the vertical lines have been shortened. The third expression is equivalent to the second expression up to the symmetric finite-depth quantum circuits (FDQCs) $W_1$ and $W_2$; see Eq.~\ref{eq:W1W2}. Note that the strict equivalence up to symmetric FDQCs is in one spatial dimension, $d=1$. In higher dimensions ($d>1$), we use the compactified picture such that the equivalence is up to a circuit that is finite-depth along the non-compactified direction and in general, a $(d-1)$-dimensional QCA along the $d-1$ compactified directions; see main text.}
    \label{fig:vr}
\end{figure*}

We now insert the standard form of $Q$ into the definition of $V_R$ where we take $R$ to be a segment of the 1D line $R=[c,d]$. Without loss of generality, suppose that $c$ is odd and $d$ is even such that the length of $R$ is even. We then have, 
\begin{align}
    V_R =& \left(\prod_{i=c}^d S_i\right) \left(\prod_{i=c}^d V_i \right) \nonumber \\
    =& \left(\prod_{i=c}^d S_i \right) Q_A^{-1} \left( \prod_{i=c}^d S_i\right) Q_A \nonumber \\
    =&\left(u^{-1}_{[c-1]_A,[c]_B}u^{-1}_{[d]_B,[d+1]_A}\right)
    \nonumber \\
    & \left(\prod_{i=\frac{c+1}{2}}^{d/2-1} u^{-1}_{[2i]_B,[2i+1]_B}\right) \left(\prod_{i=\frac{c+1}{2}}^{d/2} v_{[2i-1]_A,[2i]_A}\right)
    \nonumber \\
    &\left(\prod_{i=\frac{c+1}{2}}^{d/2} v^{-1}_{[2i-1]_B,[2i]_B}\right)
    \left(\prod_{i=\frac{c-1}{2}}^{d/2} u_{[2i]_A,[2i+1]_A}\right)
\end{align}
The second and third equalities above are depicted on the left and middle of Fig.~\ref{fig:vr}, respectively. As is clear from Fig.~\ref{fig:vr}, the $u$, $u^{-1}$, $v$, $v^{-1}$ operators can be parallelized into two layers of disjoint operators. Namely, the $u$ and $v^-1$ operators can act in parallel in the first layer, and the $v$ and $u^{-1}$ operators act in the second layer.
Furthermore, we see that the degrees of freedom within the range $[c-1,d+1]$ have been stitched into a single periodic 1D system, a ring of length $L=2(d-c+1)+2$ with spins in the $A$ ($B$) subsystem forming the ``front'' (``back'') of the ring that we denote as $\mathcal{R}$. Whenever we use the notation $\mathcal{R}$, we imagine it as representing a finite periodic array of sites that can be ordered counter-clockwise as $[c]_A,\dots,[d+1]_A,[d]_B,[d-1]_B,\dots,[c]_B,[c-1]_A$. 

The operator $V_R$ is not exactly $Q$ acting on the finite periodic system defined by $\mathcal{R}$, which is depicted in the third expression in Fig.~\ref{fig:vr}. Instead, $V_R$ realizes $Q$ on the front of $\mathcal{R}$ and $\bar{Q}^{-1}$ on the back of $\mathcal{R}$, with the two operators being blended near the edges.
To fix this, we show that $V_R$ is equivalent to $Q$ up to composition with an FDQC. {Given any 1D QCA $Q$}, we show that its inverse $Q^{-1}$ is related to its orientation-reversed self $\bar{Q}$ by an FDQC. This is intuitively clear for 1D QCA such as the shift QCA: reversing the direction of the shift is the same as inverting it. Generally, $\bar{Q}$ will not equal $Q^{-1}$, but they will differ only by an FDQC in 1D \footnote{Indeed, if we let $\ind(Q)$ be the rational-valued index of $Q$ as defined in Ref.~\cite{Gross2012}, then it is straightforward to show that $\ind(Q^{-1})=\ind(Q)^{-1}=\ind(\bar{Q})$, which implies that $Q^{-1}$ and $\bar{Q}$ differ by an FDQC in 1D \cite{Gross2012}.}. 

To explicitly construct the circuit which maps $\bar{Q}$ to $Q^{-1}$, we introduce the unitary $w = \bar{v}u$ where $\bar{v}$ is the spatial reversal (opposite orientation) of $v$ obtained by exchanging the left and right input and output Hilbert spaces. {That is, $v_{i,j}=\bar{v}_{j,i}$.} Graphically,
\begin{equation}
    \includegraphics[scale=0.44]{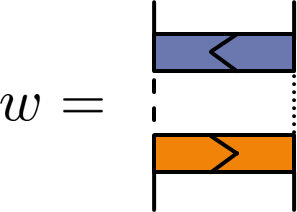}
\end{equation}
Note that $w$, unlike $u$ and $v$, is a proper unitary gate that maps $\mathbb{C}^d\otimes \mathbb{C}^d$ to itself. Observe that we have $wu^{-1}=\bar{v}$ and $v^{-1}\bar{w}=\bar{u}$ where $\bar{w}$ is again the spatial reversal of $w$. Then we have,
\begin{equation}
   \left( \prod_{i} w_{2i,2i+1} \right) Q^{-1}\left( \prod_{i} \bar{w}_{2i-1,2i} \right)=\bar{Q}.
\end{equation}
Since the $w$ ($\bar{w}$) gates on either side of $Q^{-1}$ are non-overlapping, they can be applied in parallel. Therefore,  $Q^{-1}$ and $\bar{Q}$ are related by composition with FDQCs. Using this, we define the circuits acting on the ``back'' of $\mathcal{R}$,
\begin{align}
\label{eq:W1W2}
    W_1&=\left(w_{[c-1]_A,[c]_B}w_{[d]_B,[d+1]_A}\right) \left(\prod_{i=\frac{c+1}{2}}^{d/2-1} w_{[2i]_B,[2i+1]_B}\right) \nonumber \\
    W_2&=\left(\prod_{i=\frac{c+1}{2}}^{d/2} \bar{w}_{[2i-1]_B,[2i]_B}\right),
\end{align}
such that,
\begin{align} \label{eq:qcafdqck}
    W_1V_R W_2  =&\left(\bar{v}_{[c-1]_A,[c]_B}\bar{v}_{[d]_B,[d+1]_A}\right)
    \nonumber \\
    & \left(\prod_{i=\frac{c+1}{2}}^{d/2-1} \bar{v}_{[2i]_B,[2i+1]_B}\right) \left(\prod_{i=\frac{c+1}{2}}^{d/2} v_{[2i-1]_A,[2i]_A}\right)
    \nonumber \\
    &\left(\prod_{i=\frac{c+1}{2}}^{d/2} \bar{u}_{[2i-1]_B,[2i]_B}\right)
    \left(\prod_{i=\frac{c-1}{2}}^{d/2} u_{[2i]_A,[2i+1]_A}\right)
\end{align}
The operator $W_1V_RW_2$ is 
shown in the third expression in
Fig.~\ref{fig:vr}, from which it is clear that $W_1V_RW_2$ is nothing but $Q$ applied to $\mathcal{R}$. According to the 1D ring topology of $\mathcal{R}$, $V_R$ is not geometrically local since it contains gates coupling qubits on opposite sides of $\mathcal{R}$. However, it is still $k$-local. Therefore, $W_1V_RW_2$ is a \fdqck that realizes $Q$ on a finite ring, where $Q$ is an arbitrary 1D QCA. {The elementary gates which make up the \fdqck are $S_i$, $V_i$, and $w$. In Appendix \ref{sec:qca_ex}, we derive these gates and demonstrate the general construction for several explicit examples.}

\subsection{Application to compactified higher-dimensional systems} \label{sec:qca_higherdim}

Having shown that all 1D QCA can be realized as a \fdqck, we now move on to higher dimensional QCA acting on $d$-dimensional lattices. For simplicity, we assume we have a simple hypercubic lattice structure although the construction should generalize to any translationally invariant QCA on a lattice with mirror symmetries. We also assume without loss of generality that the QCA has a unit range in all $d$ directions. It is known that a local Margolus form is not possible for two-dimensional QCA and higher \cite{Arrighi2008}. However, we can still obtain the desired results by applying the 1D Margolus representation, which was also used to understand the index theory of higher dimensional QCA \cite{Freedman2020}. To do this, we simply compactify the $d$-dimensional QCA $Q$ along all spatial dimensions except for one to obtain a quasi-1D chain of supersites $i$ each containing a number of sites that is extensive in the $d-1$ compactified dimensions. Since $Q$ has a unit range, it will spread operators contained in supersite $i$ only to supersites $i-1,i,i+1$ such that $Q$ can be viewed as a 1D QCA of unit range acting on the compactified system. Therefore, it may be written in the Maroglus representation of Eq.~\ref{eq:standard_form} where each solid vertical line now represents one supersite.

Applying the construction described above then automatically gives a \fdqck on the supersites realizing $Q$. However, we must be careful to confirm that the depth of the $k$-local circuit and the value of $k$ are both independent of system size in the compactified dimensions. This is clearly true of $V_R$, whose definition in Eq.~\ref{eq:vr} is unaffected by the compactification, and hence it is still $k$-local and finite-depth. Note that while the definition of $V_R$ depends on which dimensions we choose to compactify, $V_R$ is unaffected by whether we actually compactify those dimensions or not. What remains then is to check that the unitary $w$ used to define $W_1$ and $W_2$ can be realized as an \fdqck. Note that $W_1$ and $W_2$ each consists of a parallel application of $w$. Then, if $w$ can be realized as an \fdqck, so can $W_1$ and $W_2$. We show that this is the case by showing that $w$ is in fact a $(d-1)$-dimensional QCA and then using an inductive argument.

We now show that $w$ is locality-preserving as well as transitionally invariant in the compactified dimensions, \textit{i.e.} it is a $(d-1)$-dimensional QCA. Let $O$ be a local operator. Recall that $w=\bar{v}u$. Let $uOu^{-1} = \sum_k A^k \otimes B^k$ where $A^k$ and $B^k$ are operators supported only on the $\ell$- and $r$-dimensional Hilbert spaces that come out of $u$, respectively. Using the Schmidt Decomposition we can always choose $A^k$ ($B^k$) to come from a linearly independent set of operators acting on $\mathbb{C}^\ell$ ($\mathbb{C}^r$). Let $C^k=\bar{v}A^k\bar{v}^{-1}$ and $D^k=\bar{v}B^k\bar{v}^{-1}$. Note that since $v$ is a unitary map and since $A^k$ and $B^k$ are linearly independent, $C^k$ and $D^k$ are linearly independent as well. Then from linearity, we have,
\begin{align}\label{eq:wow}
    wOw^{-1} = \sum_k C^kD^k.
\end{align}
As we show below, the locality-preserving property of the original QCA $Q$ ensures that each $C^k$ and $D^k$ in this sum is localized around $O$ (meaning that they all are contained in a ball of finite radius centered around $O$), and thus $w$ is locality-preserving. 

Consider the action of $Q$ on $O$. In the following calculation, we will be more explicit with site indices; we write $O_{2i,2i+1}$ to represent the two (super)site operators $O$ acting on (super)sites $2i$ and $2i+1$. Then,
\begin{align}
    &QO_{2i,2i+1}Q^{-1} \nonumber \\
    &= \left( \prod_{j} v_{2j-1,2j} \right)\sum_k A^k_{2i}\otimes B_{2i+1}^k \left( \prod_{j} v^{-1}_{2j-1,2j} \right) \nonumber \\
    &= \sum_k \left(v_{2i-1,2i}A^k_{2i}v^{-1}_{2i-1,2i}\right) \otimes \left(v_{2i+1,2i+2}B^k_{2i+1}v^{-1}_{2i+1,2i+2}\right)  \nonumber \\
    &=\sum_k \bar{C}^k_{2i-1,2i} \otimes \bar{D}^k_{2i+1,2i+2}
    \label{eq:nonlocalop}
\end{align}
{In the first equation, we used the Margolus representation of $Q$ and conjugated $O$ by the $u$ operators.} The bars appearing on $C^k$ and $D^k$ again indicate spatial reversal since $A^k$ and $B^k$ have been conjugated by $v$ rather than $\bar{v}$. As is shown in Appendix \ref{app:w}, since $Q O_{2i,2i+1} Q^{-1}$ is contained around $O$, linear independence ensures that each $\bar{C}^k$ and $\bar{D}^k$ should be contained around $O$. This in turn shows that $w O w^{-1}$ in Eq.~\ref{eq:wow} is also contained around $O$, and hence $w$ is also locality-preserving in the $d-1$ compactified dimensions. A similar argument shows that $w$ is translationally invariant in the compactified directions if $Q$ is (see {Lemma \ref{lemma:trans} in} Appendix \ref{app:w} for details). In other words, $w$ is a $(d-1)$-dimensional QCA. 

We finish the proof using an inductive argument. We have already explicitly shown how to realize any 1-dimensional QCA as a \fdqck. Now suppose we can realize any $d$-dimensional QCA as a \fdqck. Then, given a $(d+1)$-dimensional QCA $Q$, we have shown how to prepare it using the \fdqck $V_R$ and the unitary $w$. Since $w$ is a $d$-dimensional QCA, we can by assumption realize it, and hence $Q$ itself, as an \fdqck. We note that $w$ is not an FDQC in general, so this inductive step is necessary. For example, if we consider the 2D QCA that shifts operators diagonally, then $w$ will be a 1D shift QCA, {as demonstrated in Appendix \ref{sec:qca_ex_2dshift}}.

\subsection{Symmetric QCA and SPT phases} \label{sec:qca_symm}

We now turn to the symmetry properties of the \fdqck{}'s constructed in the previous section. Suppose $Q$ is a symmetric QCA, meaning that it commutes with a \textit{global} symmetry $U(g) = u(g)^{\otimes N}$ for $g\in G$ (we discuss higher-form symmetries at the end). Note that, in contrast to a symmetric QC, where each gate in the circuit is individually symmetric, here we don't necessarily have a way to break $Q$ into smaller pieces, so we require only that $Q$ as a whole is symmetric, $[Q,U(g)]=0$. Let $U_{A}(g)$ ($U_{B}(g)$) denote $U(g)$ acting on the $A$ ($B$) subsystem as defined in the previous section. $U_A(g) U_B(g)$ clearly commutes with the swap operators $S_i$, since it is a tensor product of the same operator $u(g)$ on every site. Given that $Q_A$ is assumed to be symmetric, $V_i=Q_AS_iQ_A^{-1}$ commutes with $U_A(g) U_B(g)$ as well, and thus all gates in the circuit $V_R$ defined in Eq.~\ref{eq:vr} commute with $U_A(g) U_B(g)$ for all $g\in G$.
When we view the sites acted on by $V_R$ as the ring $\mathcal{R}$, $U_A(g) U_B(g)$ is just $U(g)$ acting on all sites in the ring, so it is just the global symmetry of the ring which we denote as $U_\mathcal{R}(g)$. For $c \le i \le d$, $V_i$ and $S_i$ have trivial support outside $\mathcal{R}$, and hence the fact that they commute with $U_A(g)U_B(g)$ readily shows that they also commute with $U_\mathcal{R}(g)$. Next, it follows from the results of \cite{Cirac2017} that $w$ commutes with $U(g)$ for all $g\in G$, see Appendix \ref{app:w} for detail. Therefore, $W_1 V_R W_2$ is a symmetric \fdqck whose gates commute with $U_\mathcal{R}(g)$ for all $g\in G$.

This result has implications for SPT phases. Observe that the \fdqck{}'s used to disentangle the SPT fixed-point states in Sec.~\ref{sec:spt} generate the same unitary operators as the non-symmetric FDQCs used to define the states in the first place, which we call SPT entanglers. That is, our constructions didn't just trivialize the fixed-point states, they achieved the stronger task of {expressing the SPT entanglers themselves as symmetric \fdqck{}'s}. This perspective allows us to apply our results on representing QCA as symmetric \fdqck{}'s to SPT phases. For example, fixed-point states for a large class of bosonic SPT phases (the ``in-cohomology'' phases) with global symmetry are given by the cocycle states defined in Ref.~\cite{Chen2013}. These states are in turn defined by FDQCs which commute with the global symmetry, but they are not symmetric circuits since the individual gates are not symmetric. Now, our construction allows these circuits to be written as \fdqck{}'s with symmetric gates. This shows that all in-cohomology SPT phases are trivial in the $k$-local scenario. {We demonstrate this idea for the example of the 1D cluster state in Appendix \ref{sec:qca_ex_cluster}.} 

Our construction can be applied to some beyond-cohomology SPT phases as well, although these are less well-understood at the Hamiltonian level. One example of a beyond-cohomology SPT phase in 4D with $\mathbb{Z}_2$ symmetry was given in Ref.~\cite{Fidkowski2020}, where a $\mathbb{Z}_2$-symmetric SPT entangler was also constructed. This entangler is a FDQC, so our construction can be applied to get a symmetric \fdqck. An interesting direction for further work requires extending our results to consider anti-unitary symmetries, \textit{i.e.} time reversal, which would allow us to address the $k$-local triviality of the 3D beyond-cohomology phase of Ref.~\cite{Burnell2014}.

Finally, we note that our approach immediately fails for higher-form symmetries. This is because the gates in $V_R$ commute only with $U_A(g)U_B(g)$ and not $U_{A}(g)$ or $U_{B}(g)$  individually. That is, the only symmetry operators that commute with $V_R$ are those that act in the same way on the front and back half of $\mathcal{R}$. But the full higher-form symmetry group includes operators that act differently on the two halves, as discussed in Sec.~\ref{sec:higherformSPT}. So $V_R$ does not commute with the full higher-form symmetry group. 

{
\section{Stability of SPT phases under generic $k$-local interactions} \label{sec:generic}

In the previous section, we have shown the existence of symmetric \fdqck{}'s which trivialize SPT phases. However, as these circuits are fine-tuned, this does not necessarily imply that SPT phases are unstable to \textit{generic} symmetric $k$-local interactions. Here, we argue that this is the case, meaning that generic symmetric $k$-local interactions will destroy SPT order. We first consider specific instances of interactions, and then later argue why the same results should hold generically.

We focus on the case of 1D SPT phases, but we expect that similar arguments will carry over to higher dimensions as well. We take $|\psi_{C}\rangle$ (Eq.~\ref{eq:1d_clus}) as the fixed-point state in a non-trivial SPT phase and construct perturbed states $|\psi\rangle$
by applying a short time-independent Hamiltonian time evolution, $|\psi\rangle=e^{-itH}|\psi_C\rangle$ for some $H$. We will consider several choices of $H$ subject to certain locality and symmetry constraints. By taking $t$ to be arbitrarily small, this gives a perturbed state that is arbitrarily close to the unperturbed state (in terms of fidelity per site). Therefore, we consider this a model of the effect of weak $k$-local noise on an SPT phase, which may or may not destroy the SPT order. This is in contrast to our exact disentangling \fdqck{}'s obtained in Sec.~\ref{sec:spt}, which require strong $k$-local interactions.

In order to diagnose the presence or absence of SPT order in the resulting state, we use the string order parameter \cite{Pollmann2012a}, which can be defined as,
\begin{equation} \label{eq:sop}
    S(a,b) = Z_aY_{a+1}\left(\prod_{i=a+2}^{b-1} X_i \right)Y_{b}Z_{b+1}.
\end{equation}
In the non-trivial SPT phase containing $|\psi_C\rangle$, the string order generically saturates to a non-zero value as its length is increased, whereas it goes to zero exponentially quickly in the trivial phase. In particular, we have $\langle \psi_C|S(a,b)|\psi_C\rangle=1$. To reduce the number of length scales, we evaluate the string order parameter over half of the system, \textit{i.e.} $S:= S(0,N/2)$, and study its behavior as a function of the system size $N$.

Let us first consider local, asymmetric noise, generated by the Hamiltonian $H^{(1)} = -\sum_i Z_i$. Since $Z_i$ anti-commutes with some symmetry generators of $|\psi_{C}\rangle$, this is not a symmetric Hamiltonian. Therefore, we expect that the time-evolved state will be in a trivial SPT phase and the string order will decay exponentially to zero with $N$. The state after an evolution time $t$ is,
\begin{equation}
    |\psi^{(1)}\rangle = \prod_{i=1}^N e^{itZ_i} |\psi_C\rangle .
\end{equation}
Define the subset of sites $\mathcal{I}=\{1,2,\dots,N/2\}$ which has the property that $Z_i$ anti-commutes with $S$ if $i\in\mathcal{I}$ and commutes otherwise. Then we can straightforwardly evaluate the string order in this perturbed state,
\begin{align}
    \langle \psi^{(1)}|S|\psi^{(1)}\rangle 
    &= \langle \psi_C| \left(\prod_{i=1}^N e^{-itZ_i}\right) S \left( \prod_{i=1}^N e^{itZ_i}\right) |\psi_C\rangle 
    \nonumber \\
    &=\langle \psi_C| \left( \prod_{i\in \mathcal{I}}e^{-2itZ_i}\right) S|\psi_C\rangle 
    \nonumber \\
    &= \langle \psi_C|\prod_{i\in \mathcal{I}}(\cos{2t}-i\sin{2t}\ Z_i) |\psi_C\rangle 
    \nonumber \\
    &= \gamma^{N/2}
    \nonumber \\
\end{align}
where $\gamma = \cos 2t$ and we used the facts that $S|\psi_{C}\rangle = |\psi_{C}\rangle$ and 
$
\langle \psi_C|\prod_{i\in \mathcal{S}} Z_i|\psi_C\rangle=0 
$
for any non-empty index set $\mathcal{S}$. 
Since $\gamma < 1$ for any non-zero $t<<1$, we see that the string order decays to zero exponentially quickly, indicating that the state $|\psi^{(1)}\rangle$ has trivial SPT order. 

Now, let us repeat the same calculation for a symmetric, local Hamiltonian. We choose the perturbation $H^{(2)} = -\sum_i Z_{i-1}Z_{i+1}$, which commutes with the $\mathbb{Z}_2\times\mathbb{Z}_2$ symmetry. The time-evolved state is,
\begin{equation}
    |\psi^{(2)}\rangle = \prod_{i=1}^N e^{itZ_{i-1}Z_{i+1}}|\psi_C\rangle.
\end{equation}
The calculation of string order is largely the same. The key difference is that the set of sites $i$ for which $Z_{i-1}Z_{i+1}$ anti-commutes with $S$ is finite, containing only sites $i=0,1,N/2,N/2+1$. Therefore, we find that,
\begin{equation}
    \langle \psi^{(2)}|S|\psi^{(2)}\rangle = \gamma^4,
\end{equation}
so the string order is a non-zero constant independent of $N$, indicating that $|\psi^{(2)}_C\rangle$ has non-trivial SPT order as expected. 

Finally, we consider the case of $k$-local symmetric perturbations. Based on the disentangling circuits we constructed in the previous sections, we expect that this will trivialize the SPT order. Let us define a set $\mathcal{A}$ consisting of random pairings of sites $(i,j)$ such that every site is contained in exactly one pair, and $i$ and $j$ are either both even or both odd. Then, we consider the 2-local symmetric perturbation $H^{(3)} = -\sum_{(i,j)\in\mathcal{A}} Z_i Z_j$. We could also consider the case where every qubit interacts pairwise with every other qubit, but the calculation is greatly simplified by assuming each qubit only interacts with one other. Note also that this Hamiltonian has constant energy density despite being long-range interacting. As before, we consider the state,
\begin{equation}
    |\psi^{(3)}\rangle = \prod_{(i,j)\in\mathcal{A}} e^{itZ_i Z_j} |\psi_C\rangle.
\end{equation}
Now split $\mathcal{A}$ into two subsets $\mathcal{A}_e$ and $\mathcal{A}_o$ such that $(i,j)\in\mathcal{A}_e$ if $i$ and $j$ are either both in $\mathcal{I}$ or neither are, while $(i,j)\in\mathcal{A}_o$ if one if $i,j$ is in $\mathcal{I}$ and the other is not. Then, $Z_iZ_j$ anti-commutes with $S$ if and only if $(i,j)\in\mathcal{A}_o$. Following the above calculations, we find,
\begin{equation}
    \langle \psi^{(3)}|S|\psi^{(3)}\rangle = \gamma^{|A_o|}
\end{equation}
where $|\mathcal{A}_o|$ is the number of pairs in $\mathcal{A}_o$. Given a random pairing of sites described by the set $\mathcal{A}$, if we take any site $i\in\mathcal{I}$, its partner will be in $\mathcal{I}$ with probability $\sim 1/2$ since $\mathcal{I}$ contains half of the lattice sites. Therefore, we expect for typical pairings that $|\mathcal{A}_o|\approx N/4$ (as $|\mathcal{A}|=N/2$), so the string order decays exponentially with $N$, indicating trivial SPT order.

\begin{figure}
    \centering
    \includegraphics[width=\linewidth]{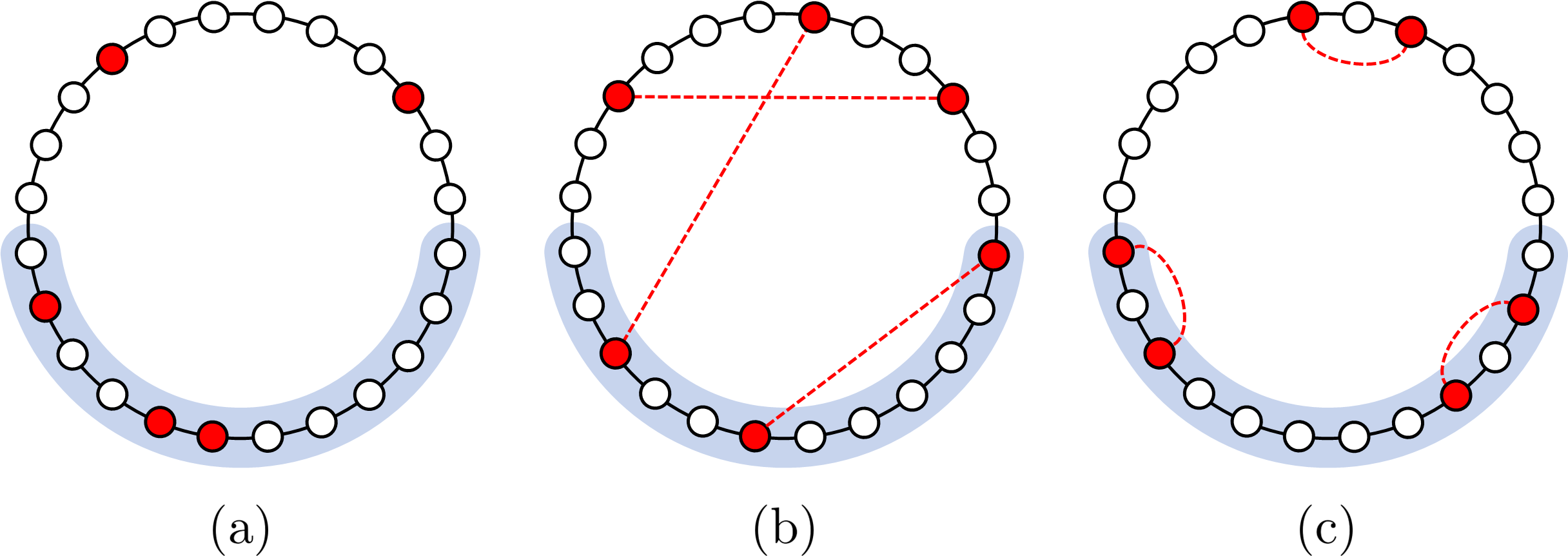}
    \caption{Illustration of typical components of the perturbed states for (a) random, (b) symmetric $k$-local, and (c) symmetric local perturbations. Each red circle denotes the application of one $Z$ operator, and the dashed lines connecting two circles represent symmetric pairs of $Z$'s. The shaded regions indicate the region over which the string order is evaluated. In a typical configuration, (c) will likely have an even number of $Z$'s in the shaded region, whereas (a) and (b) have no parity bias.}
    \label{fig:perturbed_clusters}
\end{figure}

The above analysis makes it clear why symmetric $k$-local perturbations destroy SPT order. Consider the two states $|\psi^{(1)}\rangle$ and $|\psi^{(3)}\rangle$. Each is a superposition over states of the form $\propto \prod_{i\in S}Z_i|\psi_C\rangle$ for some index sets $\mathcal{S}$. The only significant difference between the two states is that, in the case of $|\psi^{(3)}\rangle$, the global symmetry constraint requires $\prod_{i\in S}Z_i$ to contain an even number of $Z$'s on both sublattices. However, the string order parameter, which is evaluated only over half of the lattice sites, does not see this global constraint; there is a high probability for an odd number of $Z$'s to be applied to the region where the string order parameter acts, see Fig.~\ref{fig:perturbed_clusters}. This anti-commutation of the perturbations with the string order parameter leads to destructive interference which causes it to decay. In contrast, in $|\psi^{(2)}\rangle$, the $Z$'s always appear in pairs separated by a short distance, such that only those pairs which straddle the boundary of $\mathcal{I}$ will anti-commutes with $S$, which gives only a finite correction to the string order, see Fig.~\ref{fig:perturbed_clusters}. In other words, the $k$-local symmetric perturbations allow you to freely violate the symmetry in any local region, and this is what leads to the breakdown of the SPT order. Indeed, the value of the string order parameter within any region in which the symmetry is violated necessarily goes to zero \cite{Perez-Garcia2008}. From this reasoning, it is clear that any generic $k$-local perturbation will similarly destroy SPT order. 

On the other hand, if the $k$-local perturbation is \textit{locally symmetric}, meaning that the perturbation has a form like $O_iO'_j$ where $O_i$ and $O'_j$ are symmetric local operators, then this does not violate the symmetry locally, and we expect the SPT order will be robust to such perturbations. This symmetry restriction is the same as the symmetry restriction for higher-form SPTs as discussed in Sec.~\ref{sec:higherformSPT}.
}

\subsection{Instability of SPT states in monitored random circuits}\label{sec:randomcirc}

{
We have argued that SPT order is unstable to generic $k$-local symmetric perturbations, but it is stable to $k$-local perturbations which are locally symmetric. In this section, we give further numerical evidence of these claims. We consider monitored quantum circuits, which involve both unitary gates and projective measurements that are randomly applied to the state with some probability (for a review, see \cite{fisheretalreview}). In general, it has been observed that these elements compete with each other, driving the late-time state of the evolution to different regimes of behavior depending on their relative frequency. In particular, certain symmetric monitored random circuits have been studied which can sustain SPT order within a certain range of the circuit parameters \cite{LavasanietalSPT, friedmanetaladaptive}. We study the implication of our results for the stability of SPT order in this context and use this to give further evidence on which $k$-local circuits can and cannot trivialize SPT phases. 
}

Consider arranging $N$ qubits, initialized in the $|+\rangle ^ {\otimes N}$ state, on a 1D ring and applying the following quantum process: at each step, with probability $p$ a random two-qubit unitary $U$ is applied to the system or, with probability $1-p$ a qubit $i$ is chosen uniformly at random and $g_i\equiv Z_{i-1}X_iZ_{i+1}$ is measured. The latter tends to drive the state towards the 1D cluster state which satisfies $g_i|\psi_C\rangle=|\psi_C\rangle$, while the former tends to drive it away. {A time step is defined to consist of $N$ consecutive steps.} In the following, we consider four different ensembles of two-qubit unitaries: a) all two-qubit geometrically local Clifford unitaries, b) two-qubit geometrically local Clifford unitaries that respect the $\mathbb{Z}_2\times \mathbb{Z}_2$ symmetry generated by $X_{\mathrm{odd}}$ and $X_{\mathrm{even}}$,
c) $2$-local Clifford unitaries that respect the same $\mathbb{Z}_2\times \mathbb{Z}_2$ symmetry and finally, d) $2$-local Clifford unitaries that respect the $\mathbb{Z}_2^{N}$ symmetry generated by $X_i$ for $i=1,\cdots,N$, i.e., are diagonal in the local $X$-basis and are therefore locally symmetric. In the case of geometrically local unitaries, a site $i$ is chosen at random, and a two-qubit unitary, chosen randomly from the appropriate ensemble, is applied to qubits $i$ and $i+1$. As for $2$-local unitaries, two different sites $i$ and $j$ are chosen randomly and then a random unitary from the appropriate ensemble is applied to them. We are interested in the late-time states of this family of random circuits, {which we take to be the quantum state of the circuit after $T=N$ time steps.} 

At $p=0$, the circuit consists of only $g_i$ stabilizer measurements. Therefore, the late-time state of any realization of the random circuit would be an SPT state. The SPT nature of this state can be probed by the non-local analog of Edwards–Anderson glass-order parameter\cite{bahri2015localization,sang2021measurement, LavasanietalSPT},
\begin{align}
    s=\frac{2}{N(N-1)}\sum_{a<b}S(a,b)^2,
\end{align}
with $S(a,b)$ being the SPT string order parameter defined in Eq.~\ref{eq:sop}.
$s$ is equal to $1$ for $|\psi_C\rangle$. As described earlier, SPT states are characterized by $s>0$, while for trivial states or random states, $s=0$ in the thermodynamic limit. We are interested in $\bar s$, which is $s$ averaged over random circuit realizations. We note that $s$ is not an {\it experimentally} accessible quantity, but it is certainly accessible in simulations, which is sufficient for our present purposes. 

\begin{figure}
    \centering
    \subfigure[]{{\label{fig:local_u}\includegraphics[width=0.49\linewidth]{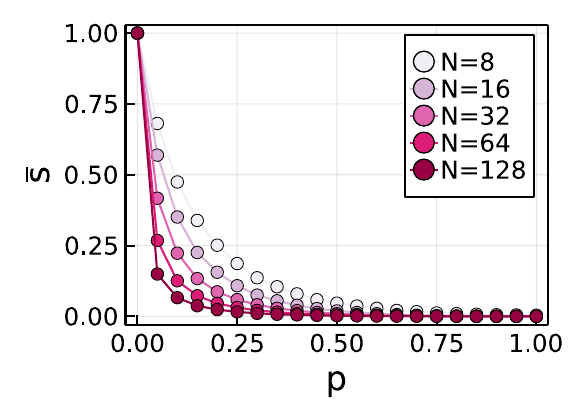}}}
    \subfigure[]{{\label{fig:local_sym_u}\includegraphics[width=0.49\linewidth]{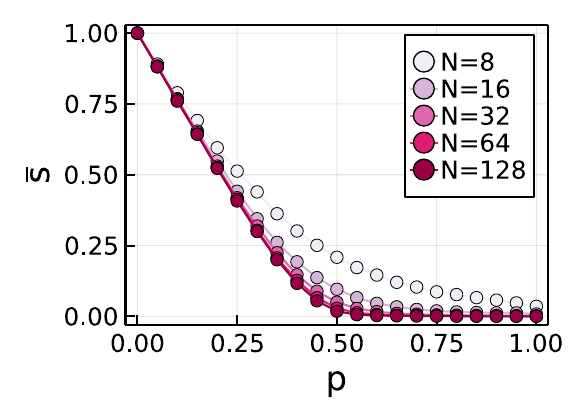}}}
    \subfigure[]{{\label{fig:2local_sym_u}\includegraphics[width=0.49\linewidth]{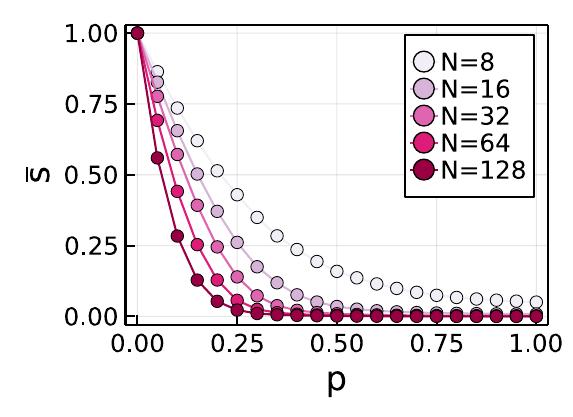}}}
    \subfigure[]{{\label{fig:2local_locally_sym_u}\includegraphics[width=0.49\linewidth]{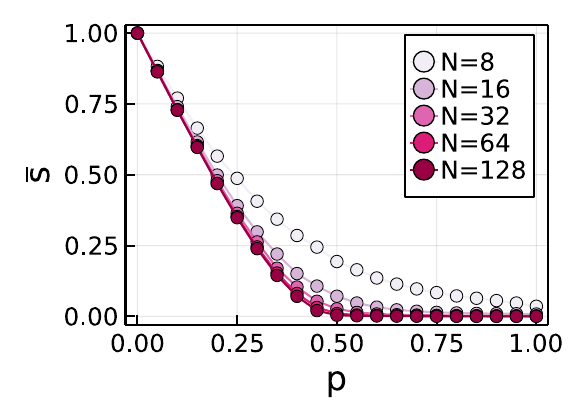}}}
    \caption{Averaged string order parameter $\bar{s}$ versus $p$ for monitored random circuits where the unitary gates are chosen randomly from a) local two-qubit Clifford unitaries, b) local two-qubit $\mathbb{Z}_2 \times \mathbb{Z}_2$ symmetric Clifford unitaries, c) 2-local $\mathbb{Z}_2 \times \mathbb{Z}_2$ symmetric Clifford unitaries, and d) 2-local $\mathbb{Z}_2^N$ symmetric Clifford unitaries.}
    \label{fig:sbar_plots}
\end{figure}
Fig.~\ref{fig:sbar_plots} shows $\bar{s}$ as a function of $p$ for each of the unitary ensembles described above. Fig.~\ref{fig:local_u} corresponds to the random circuit where the unitary gates are chosen from the ensemble of local two-qubit Clifford gates without imposing any symmetry restriction. As expected, the SPT structure in the late time state vanishes for any $p>0$ as the symmetry is violated. And on the other hand, if we restrict the local unitary gates to respect the $\mathbb{Z}_2 \times \mathbb{Z}_2$ symmetry, the SPT structure survives up to finite $p_c>0$, below which $\bar{s}$ saturates to a finite value, as is shown in Fig.~\ref{fig:local_sym_u}. Interestingly, when the unitary gates are chosen from the set of $2$-local Clifford unitaries that respect the $\mathbb{Z}_2\times \mathbb{Z}_2$ symmetry, the SPT structure vanishes again for any $p>0$, as illustrated in Fig.~\ref{fig:2local_sym_u}. This is consistent with our findings that SPT states can be trivialized by $k$-local symmetric unitary gates. Lastly, Fig.~\ref{fig:2local_locally_sym_u} corresponds to the circuit where the unitary gates are chosen from the highly constrained ensemble of 2-local Clifford unitaries that map $X_i$ to itself for all $i$, \textit{i.e.} they are locally symmetric.
Despite the fact that this set includes long-range entangling gates, we see that the SPT structure survives up to a finite $p>0$, which is consistent with our intuition that $k$-local gates that are locally symmetric are not much more powerful in terms of disentangling the SPT structure compared to geometrically local symmetric gates. {In Appendix \ref{app:spt_numerics} we provide further analysis of the numerical data shown in Fig.~\ref{fig:sbar_plots}}

\section{Discussion}
\label{sec:discussion}

We have constructed explicit finite-depth circuits consisting of (symmetric) $k$-local gates to create fixed-point SPT states and to realize all QCA. The circuits imply that the classification of SPT phases (with global or subsystem symmetries) and the classification of QCA collapse to a single trivial phase in the presence of $k$-local interactions. 
{This addresses {\it worst case stability} - whether there exists an FDQC of $k$-local gates that can trivialize a given state - and suffices to show that these phases (strictly speaking) do not exist in the presence of $k$-local interactions. We also addressed the case of {\it typical case stability} by giving analytical and numerical evidence that SPT order is also fragile in the presence of generic $k$-local symmetric perturbations.
}
We note that a key ingredient in our explicit circuits was to `fold' the system in such a way that it resembled a stack of the system with its inverted self. This naturally assumes some sort of mirror symmetry is present in the lattice. Therefore, it is natural to ask whether our results can be applied to systems on more general lattices and to what extent translational invariance can be relaxed.

We remark that SPT phases can be used as resources for measurement-based quantum computation (MBQC), see \cite{Wei2018a} for a review. That is, given a ground state in a suitable SPT phase, one can perform measurements on that ground state in order to simulate a quantum computation. While the computational capability of a state as a resource for MBQC is stable to local symmetric perturbations, our results seem to imply that it may be unstable to $k$-local symmetric perturbations in the case of global or subsystem symmetries. Indeed, in Ref.~\cite{Raussendorf2019}, the authors explicitly identified a symmetric $k$-local interaction whose presence would invalidate the strategy that was used to prove MBQC universality. In contrast, fault-tolerant MBQC can be achieved using SPT phases with higher-form symmetries \cite{Raussendorf2006}, which appear to be stable even under symmetric $k$-local interactions.

While we have focused on bosonic systems, we believe that many of our arguments should apply to fermionic systems as well. Indeed, our physical arguments for the triviality of SPT phases based on their boundary physics and invertibility should carry over {\it mutatis mutandis}. Specifically, the free fermion SPT phases (i.e., topological insulators and superconductors) are characterized by non-trivial boundary states, which are manifestly rendered trivial if one can couple distinct boundaries together via $k$-local interactions. Meanwhile, \cite{WangSenthil} showed that the set of fermionic SPT phases in three dimensions, protected by some combination of antiunitary symmetries and charge conservation, is exhausted by combinations of the free fermion SPTs and bosonic SPTs - and bosonic SPTs we have constructively shown to be trivializable. This suggests that these fermion SPTs should also be trivializable by $k$-local interactions. Similarly, we expect that many, if not all, fermionic QCA should become trivial in the $k$-local setting \cite{Fidkowski2019,Piroli2021a}. However, establishing rigorous results on fermionic SPTs and QCA is left to future work. 

In contrast to SPT states and QCA, intrinsic topological order is known to be stable under finite-depth $k$-local circuits \cite{Aharonov2018}, meaning that a non-trivial topological phase will not become trivial. However, this doesn't mean that the classification of intrinsic topological phases will remain unchanged. Indeed, in the $k$-local scenario, it is possible that the braiding of topological excitations becomes ill-defined, so some topological phases which differ only by braiding statistics may become the same phase. 
Also, in symmetry-enriched topological (SET) phases, the topological order is stable but the nontrivial symmetry fractionalization pattern of the SET may not be stable under $k$-local circuits. For instance, certain symmetry fractionalization patterns can be canceled by stacking with SPT states \cite{Barkeshli2019}. Since we know that SPT states can be prepared with finite-depth $k$-local circuits, such symmetry fractionalization patterns are not stable to finite-depth $k$-local circuits. A thorough investigation of the stability of intrinsic topological order and symmetry fractionalization is left for future work. 

\begin{figure}
    \centering   
    \subfigure[]{{\label{fig:toric_code_stack_ERG}}\includegraphics[width=0.85\linewidth]{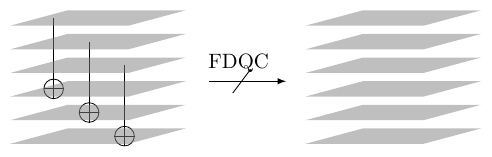}} \\ \vspace{5mm}
    \subfigure[]{{\label{fig:Xcube_ERG}}\includegraphics[width=0.9\linewidth]{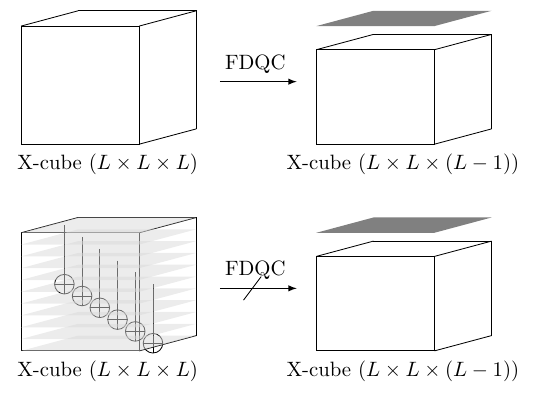}}
    \caption{(a) Action of a $k$-local circuit on a stack of toric codes, in general, leads to a model that is no longer a finite-depth local quantum circuit (FDQC) equivalent to a stack. (b) Top: X-cube model is foliated i.e., there exists a finite-depth local quantum circuit under which an $L\times L\times L$ X-cube model maps to an X-cube model of dimensions $L\times L\times (L-1)$ and a layer of toric code. Bottom: Action of a $k$-local circuit on the X-cube model, in general, leads to a model that is no longer finite-depth local quantum circuit (FDQC) equivalent to a foliated model.}
    \label{fig:stackERG}
\end{figure}

Another class of phases for which the action of $k$-local circuits is interesting is fracton phases (for reviews, see \cite{fractonarcmp, pretkoreview, radzihovskyreview}). The defining property of fracton phases is the restricted mobility of the excitations, which comes about because the non-trivial excitations are not locally creatable, but instead, arise at the `corners' of extended operators. For instance, in the X-cube model \cite{VijayHaahFu} the fractons arise at the corners of membrane operators. Since these excitations cannot be created (or destroyed) by any few-body operator, they cannot be moved by $k$-local perturbations (without creating additional excitations), and thus the restricted mobility survives. Nonetheless, not all properties of fracton phases are unchanged. For instance, the geometric nature of entanglement can be modified under $k$-local gates. To illustrate this,  we specialize to {\it gapped} fracton phases, which are characterized by nonlocal entanglement with geometric rigidity \cite{decipher, maetal, schmitzetal, bernevigentanglement}. We expect that this nonlocal entanglement survives under $k$-local gates, much like the corresponding entanglement structure in phases with intrinsic topological order, but (we will argue) the geometric structure does not survive. For instance, consider a stack of 2D toric codes. Each copy of 2D toric code has a topological order that is stable under $k$-local gates that couple degrees of freedom in that copy alone. However, under a $k$-local circuit that couples different copies of toric codes, the foliation structure of the stack can be lost and no longer recoverable using a geometrically local FDQC; see Fig.~\ref{fig:toric_code_stack_ERG}. This argument extends to foliated fracton codes (see e.g. \cite{shirleyfoliated}). Under a finite-depth geometrically local quantum circuit (FDQC), a stack of two-dimensional topological codes, which is said to form the foliations of the fracton order, can be extracted. For extracting one layer of a two-dimensional model alone, the `exfoliating' circuit applies local disentangling operations on the boundary of the fracton order. However, a $k$-local circuit can entangle the extracted foliations back into the fracton order in a way that it can no longer be disentanglable via an FDQC. In this sense, the structure of the entanglement has changed i.e., there is no ex-foliation via a geometrically-local FDQC. We illustrate this for the example of the X-cube model in Fig.~\ref{fig:Xcube_ERG}. Similar considerations apply to the more general notion of bifurcating entanglement renormalization which has been explored for fracton orders~\cite{Haah_2014,Dua_2020} and SPT states~\cite{Miguel_2021}. A detailed exploration of fracton phases under $k$-local circuits is a promising topic for future work. 

Our results have a bearing on the classification of chiral Floquet phases of matter.
The non-trivial nature of these models is based on the fact that the effect of the Floquet dynamics at the boundary of an open system cannot be realized by an FDQC. The boundary dynamics of the models in \cite{Po2016} are given exactly by 1D bosonic QCA. Hence, our results show that the boundaries are no longer anomalous if $k$-local interactions are allowed, as 1D QCA are all \fdqck. On the other hand, the radical Floquet phases introduced in \cite{Poetal} have boundary dynamics described by a shift of a Majorana fermion which has a fractional index. As such an index is beyond bosonic QCA, and the bulk model consists of bosonic degrees of freedom, it is plausible that these phases remain robust to $k$-local interactions unless ancillary fermionic degrees of freedom are added.

Recently, Floquet codes have also been discussed in the literature~\cite{Hastings_2021}. Such codes are defined by a series of instantaneous codes, each of which is an instance of intrinsic topological order. Hence, it is natural to expect that such codes are stable to $k$-local noise. It could also be interesting to discuss symmetry-enriched Floquet codes, bearing in mind the subtleties associated with defining symmetries for Floquet unitaries \cite{RegnaultFloquet}

Our work can also be extended by modifying the definition of a locality-preserving unitary. One can weaken the constraint of locality-preserving by allowing exponential tails in the definition of a local operator. We conjecture that there exist $k$-local circuits also for such \textit{approximate locality-preserving unitaries}~\cite{Ranard_2022}, as would be realized by Hamiltonian time evolution. As a concrete application, we expect that invertible chiral states (like integer quantum Hall states) can be trivialized by $k$-local Hamiltonians, using constructions similar to the ones presented in our paper, although since we expect that chiral phases cannot be captured by zero-correlation length models, there may not be a nicely behaved truncation to $k$-local circuits. Investigations of approximately locality-preserving unitaries would also connect to e.g. the literature on state preparation with long-range Hamiltonians \cite{LucasGorshkov}. Lastly, we can consider a $k$-locality-preserving unitary ($\mathrm{QCA}_k$) which maps local operators to $k$-local operators acting on at most $k$ qubits. While we have shown that every QCA a \fdqck, this may not be true for $\mathrm{QCA}_k$.

\begin{acknowledgments}
We thank Daniel Bulmash, Tyler Ellison, Mike Hermele, and Dominic Williamson for discussions. We thank Jeongwan Haah, Tyler Ellison, Drew Potter, and Carolyn Zhang for their feedback on the manuscript. 
Work by RN was supported by the U.S. Department of Energy (DOE), Office of Science, Basic Energy Sciences (BES) under Award \# DE-SC0021346. This work was begun during a visit to the Aspen Center for Physics (DTS, AD, and RN). The Aspen Center for Physics is supported by the National Science Foundation grant PHY-1607611. AD and DTS are supported by the Simons Foundation through the collaboration on Ultra-Quantum Matter (651440, DTS; 651438, AD). AD is supported by the Institute for Quantum Information and Matter, an NSF Physics Frontiers Center (PHY-1733907). AD, AL, and RN acknowledge the KITP program on Quantum Many-Body Dynamics and Noisy Intermediate-Scale Quantum Systems, where part of the work was completed. The KITP is supported by the National Science Foundation under Grant No. NSF PHY-1748958. The authors acknowledge the University of Maryland supercomputing resources (http://hpcc.umd.edu) made available for conducting the research reported in this paper.

\end{acknowledgments}

\bibliography{biblio.bib}

\begin{thebibliography}{98}%
\makeatletter
\providecommand \@ifxundefined [1]{%
 \@ifx{#1\undefined}
}%
\providecommand \@ifnum [1]{%
 \ifnum #1\expandafter \@firstoftwo
 \else \expandafter \@secondoftwo
 \fi
}%
\providecommand \@ifx [1]{%
 \ifx #1\expandafter \@firstoftwo
 \else \expandafter \@secondoftwo
 \fi
}%
\providecommand \natexlab [1]{#1}%
\providecommand \enquote  [1]{``#1''}%
\providecommand \bibnamefont  [1]{#1}%
\providecommand \bibfnamefont [1]{#1}%
\providecommand \citenamefont [1]{#1}%
\providecommand \href@noop [0]{\@secondoftwo}%
\providecommand \href [0]{\begingroup \@sanitize@url \@href}%
\providecommand \@href[1]{\@@startlink{#1}\@@href}%
\providecommand \@@href[1]{\endgroup#1\@@endlink}%
\providecommand \@sanitize@url [0]{\catcode `\\12\catcode `\$12\catcode
  `\&12\catcode `\#12\catcode `\^12\catcode `\_12\catcode `\%12\relax}%
\providecommand \@@startlink[1]{}%
\providecommand \@@endlink[0]{}%
\providecommand \url  [0]{\begingroup\@sanitize@url \@url }%
\providecommand \@url [1]{\endgroup\@href {#1}{\urlprefix }}%
\providecommand \urlprefix  [0]{URL }%
\providecommand \Eprint [0]{\href }%
\providecommand \doibase [0]{http://dx.doi.org/}%
\providecommand \selectlanguage [0]{\@gobble}%
\providecommand \bibinfo  [0]{\@secondoftwo}%
\providecommand \bibfield  [0]{\@secondoftwo}%
\providecommand \translation [1]{[#1]}%
\providecommand \BibitemOpen [0]{}%
\providecommand \bibitemStop [0]{}%
\providecommand \bibitemNoStop [0]{.\EOS\space}%
\providecommand \EOS [0]{\spacefactor3000\relax}%
\providecommand \BibitemShut  [1]{\csname bibitem#1\endcsname}%
\let\auto@bib@innerbib\@empty
\bibitem [{\citenamefont {Wen}(2004)}]{wen}%
  \BibitemOpen
  \bibfield  {author} {\bibinfo {author} {\bibfnamefont {Xiao-Gang}\
  \bibnamefont {Wen}},\ }\href@noop {} {\emph {\bibinfo {title} {Quantum field
  theory of many-body systems: from the origin of sound to an origin of light
  and electrons}}}\ (\bibinfo  {publisher} {Oxford University Press on
  Demand},\ \bibinfo {year} {2004})\BibitemShut {NoStop}%
\bibitem [{\citenamefont {Wen}(2017)}]{Wen2017}%
  \BibitemOpen
  \bibfield  {author} {\bibinfo {author} {\bibfnamefont {Xiao-Gang}\
  \bibnamefont {Wen}},\ }\bibfield  {title} {\enquote {\bibinfo {title}
  {Colloquium: Zoo of quantum-topological phases of matter},}\ }\href {\doibase
  10.1103/RevModPhys.89.041004} {\bibfield  {journal} {\bibinfo  {journal}
  {Rev. Mod. Phys.}\ }\textbf {\bibinfo {volume} {89}},\ \bibinfo {pages}
  {041004} (\bibinfo {year} {2017})}\BibitemShut {NoStop}%
\bibitem [{\citenamefont {Bomb{\'\i}n}(2013)}]{Bombin}%
  \BibitemOpen
  \bibfield  {author} {\bibinfo {author} {\bibfnamefont {H{\'e}ctor}\
  \bibnamefont {Bomb{\'\i}n}},\ }\bibfield  {title} {\enquote {\bibinfo {title}
  {An introduction to topological quantum codes},}\ }\href@noop {} {\
  (\bibinfo {year} {2013})},\ \Eprint {http://arxiv.org/abs/arXiv:1311.0277}
  {arXiv:1311.0277} \BibitemShut {NoStop}%
\bibitem [{\citenamefont {Lahtinen}\ and\ \citenamefont
  {Pachos}(2017)}]{Pachos}%
  \BibitemOpen
  \bibfield  {author} {\bibinfo {author} {\bibfnamefont {Ville}\ \bibnamefont
  {Lahtinen}}\ and\ \bibinfo {author} {\bibfnamefont {Jiannis~K.}\ \bibnamefont
  {Pachos}},\ }\bibfield  {title} {\enquote {\bibinfo {title} {{A Short
  Introduction to Topological Quantum Computation}},}\ }\href {\doibase
  10.21468/SciPostPhys.3.3.021} {\bibfield  {journal} {\bibinfo  {journal}
  {SciPost Phys.}\ }\textbf {\bibinfo {volume} {3}},\ \bibinfo {pages} {021}
  (\bibinfo {year} {2017})}\BibitemShut {NoStop}%
\bibitem [{\citenamefont {Chen}\ \emph {et~al.}(2010)\citenamefont {Chen},
  \citenamefont {Gu},\ and\ \citenamefont {Wen}}]{Chen2010}%
  \BibitemOpen
  \bibfield  {author} {\bibinfo {author} {\bibfnamefont {Xie}\ \bibnamefont
  {Chen}}, \bibinfo {author} {\bibfnamefont {Zheng-Cheng}\ \bibnamefont {Gu}},
  \ and\ \bibinfo {author} {\bibfnamefont {Xiao-Gang}\ \bibnamefont {Wen}},\
  }\bibfield  {title} {\enquote {\bibinfo {title} {Local unitary
  transformation, long-range quantum entanglement, wave function
  renormalization, and topological order},}\ }\href {\doibase
  10.1103/PhysRevB.82.155138} {\bibfield  {journal} {\bibinfo  {journal} {Phys.
  Rev. B}\ }\textbf {\bibinfo {volume} {82}},\ \bibinfo {pages} {155138}
  (\bibinfo {year} {2010})}\BibitemShut {NoStop}%
\bibitem [{\citenamefont {Senthil}(2015)}]{spt}%
  \BibitemOpen
  \bibfield  {author} {\bibinfo {author} {\bibfnamefont {T.}~\bibnamefont
  {Senthil}},\ }\bibfield  {title} {\enquote {\bibinfo {title}
  {Symmetry-protected topological phases of quantum matter},}\ }\href {\doibase
  10.1146/annurev-conmatphys-031214-014740} {\bibfield  {journal} {\bibinfo
  {journal} {Annual Review of Condensed Matter Physics}\ }\textbf {\bibinfo
  {volume} {6}},\ \bibinfo {pages} {299--324} (\bibinfo {year}
  {2015})}\BibitemShut {NoStop}%
\bibitem [{\citenamefont {Chen}\ \emph {et~al.}(2013)\citenamefont {Chen},
  \citenamefont {Gu}, \citenamefont {Liu},\ and\ \citenamefont
  {Wen}}]{Chen2013}%
  \BibitemOpen
  \bibfield  {author} {\bibinfo {author} {\bibfnamefont {Xie}\ \bibnamefont
  {Chen}}, \bibinfo {author} {\bibfnamefont {Zheng-Cheng}\ \bibnamefont {Gu}},
  \bibinfo {author} {\bibfnamefont {Zheng-Xin}\ \bibnamefont {Liu}}, \ and\
  \bibinfo {author} {\bibfnamefont {Xiao-Gang}\ \bibnamefont {Wen}},\
  }\bibfield  {title} {\enquote {\bibinfo {title} {Symmetry protected
  topological orders and the group cohomology of their symmetry group},}\
  }\href {\doibase 10.1103/PhysRevB.87.155114} {\bibfield  {journal} {\bibinfo
  {journal} {Phys. Rev. B}\ }\textbf {\bibinfo {volume} {87}},\ \bibinfo
  {pages} {155114} (\bibinfo {year} {2013})}\BibitemShut {NoStop}%
\bibitem [{\citenamefont {Bravyi}\ \emph {et~al.}(2006)\citenamefont {Bravyi},
  \citenamefont {Hastings},\ and\ \citenamefont {Verstraete}}]{Bravyi2006}%
  \BibitemOpen
  \bibfield  {author} {\bibinfo {author} {\bibfnamefont {S.}~\bibnamefont
  {Bravyi}}, \bibinfo {author} {\bibfnamefont {M.~B.}\ \bibnamefont
  {Hastings}}, \ and\ \bibinfo {author} {\bibfnamefont {F.}~\bibnamefont
  {Verstraete}},\ }\bibfield  {title} {\enquote {\bibinfo {title}
  {Lieb-robinson bounds and the generation of correlations and topological
  quantum order},}\ }\href {\doibase 10.1103/PhysRevLett.97.050401} {\bibfield
  {journal} {\bibinfo  {journal} {Phys. Rev. Lett.}\ }\textbf {\bibinfo
  {volume} {97}},\ \bibinfo {pages} {050401} (\bibinfo {year}
  {2006})}\BibitemShut {NoStop}%
\bibitem [{\citenamefont {Piroli}\ \emph
  {et~al.}(2021{\natexlab{a}})\citenamefont {Piroli}, \citenamefont
  {Styliaris},\ and\ \citenamefont {Cirac}}]{Piroli2021}%
  \BibitemOpen
  \bibfield  {author} {\bibinfo {author} {\bibfnamefont {Lorenzo}\ \bibnamefont
  {Piroli}}, \bibinfo {author} {\bibfnamefont {Georgios}\ \bibnamefont
  {Styliaris}}, \ and\ \bibinfo {author} {\bibfnamefont {J.~Ignacio}\
  \bibnamefont {Cirac}},\ }\bibfield  {title} {\enquote {\bibinfo {title}
  {Quantum circuits assisted by local operations and classical communication:
  Transformations and phases of matter},}\ }\href {\doibase
  10.1103/PhysRevLett.127.220503} {\bibfield  {journal} {\bibinfo  {journal}
  {Phys. Rev. Lett.}\ }\textbf {\bibinfo {volume} {127}},\ \bibinfo {pages}
  {220503} (\bibinfo {year} {2021}{\natexlab{a}})}\BibitemShut {NoStop}%
\bibitem [{\citenamefont {Bravyi}\ \emph {et~al.}(2022)\citenamefont {Bravyi},
  \citenamefont {Kim}, \citenamefont {Kliesch},\ and\ \citenamefont
  {Koenig}}]{Bravyi2022}%
  \BibitemOpen
  \bibfield  {author} {\bibinfo {author} {\bibfnamefont {Sergey}\ \bibnamefont
  {Bravyi}}, \bibinfo {author} {\bibfnamefont {Isaac}\ \bibnamefont {Kim}},
  \bibinfo {author} {\bibfnamefont {Alexander}\ \bibnamefont {Kliesch}}, \ and\
  \bibinfo {author} {\bibfnamefont {Robert}\ \bibnamefont {Koenig}},\
  }\bibfield  {title} {\enquote {\bibinfo {title} {Adaptive constant-depth
  circuits for manipulating non-abelian anyons},}\ }\href@noop {} {\  (\bibinfo
  {year} {2022})},\ \Eprint {http://arxiv.org/abs/arXiv:2205.01933}
  {arXiv:2205.01933} \BibitemShut {NoStop}%
\bibitem [{\citenamefont {Tantivasadakarn}\ \emph {et~al.}(2022)\citenamefont
  {Tantivasadakarn}, \citenamefont {Vishwanath},\ and\ \citenamefont
  {Verresen}}]{Tantivasadakarn2022a}%
  \BibitemOpen
  \bibfield  {author} {\bibinfo {author} {\bibfnamefont {Nathanan}\
  \bibnamefont {Tantivasadakarn}}, \bibinfo {author} {\bibfnamefont {Ashvin}\
  \bibnamefont {Vishwanath}}, \ and\ \bibinfo {author} {\bibfnamefont {Ruben}\
  \bibnamefont {Verresen}},\ }\bibfield  {title} {\enquote {\bibinfo {title} {A
  hierarchy of topological order from finite-depth unitaries, measurement and
  feedforward},}\ }\href@noop {} {\  (\bibinfo {year} {2022})},\ \Eprint
  {http://arxiv.org/abs/arXiv:2209.06202} {arXiv:2209.06202} \BibitemShut
  {NoStop}%
\bibitem [{\citenamefont {Huang}\ and\ \citenamefont {Chen}(2015)}]{HuangChen}%
  \BibitemOpen
  \bibfield  {author} {\bibinfo {author} {\bibfnamefont {Yichen}\ \bibnamefont
  {Huang}}\ and\ \bibinfo {author} {\bibfnamefont {Xie}\ \bibnamefont {Chen}},\
  }\bibfield  {title} {\enquote {\bibinfo {title} {Quantum circuit complexity
  of one-dimensional topological phases},}\ }\href {\doibase
  10.1103/PhysRevB.91.195143} {\bibfield  {journal} {\bibinfo  {journal} {Phys.
  Rev. B}\ }\textbf {\bibinfo {volume} {91}},\ \bibinfo {pages} {195143}
  (\bibinfo {year} {2015})}\BibitemShut {NoStop}%
\bibitem [{\citenamefont {Liu}\ \emph {et~al.}(2022)\citenamefont {Liu},
  \citenamefont {Shtengel}, \citenamefont {Smith},\ and\ \citenamefont
  {Pollmann}}]{Liu2022}%
  \BibitemOpen
  \bibfield  {author} {\bibinfo {author} {\bibfnamefont {Yu-Jie}\ \bibnamefont
  {Liu}}, \bibinfo {author} {\bibfnamefont {Kirill}\ \bibnamefont {Shtengel}},
  \bibinfo {author} {\bibfnamefont {Adam}\ \bibnamefont {Smith}}, \ and\
  \bibinfo {author} {\bibfnamefont {Frank}\ \bibnamefont {Pollmann}},\
  }\bibfield  {title} {\enquote {\bibinfo {title} {Methods for simulating
  string-net states and anyons on a digital quantum computer},}\ }\href
  {\doibase 10.1103/PRXQuantum.3.040315} {\bibfield  {journal} {\bibinfo
  {journal} {PRX Quantum}\ }\textbf {\bibinfo {volume} {3}},\ \bibinfo {pages}
  {040315} (\bibinfo {year} {2022})}\BibitemShut {NoStop}%
\bibitem [{\citenamefont {Wang}\ \emph {et~al.}(2022)\citenamefont {Wang},
  \citenamefont {Ma}, \citenamefont {Stephen}, \citenamefont {Hermele},\ and\
  \citenamefont {Chen}}]{Wang2022}%
  \BibitemOpen
  \bibfield  {author} {\bibinfo {author} {\bibfnamefont {Zongyuan}\
  \bibnamefont {Wang}}, \bibinfo {author} {\bibfnamefont {Xiuqi}\ \bibnamefont
  {Ma}}, \bibinfo {author} {\bibfnamefont {David~T.}\ \bibnamefont {Stephen}},
  \bibinfo {author} {\bibfnamefont {Michael}\ \bibnamefont {Hermele}}, \ and\
  \bibinfo {author} {\bibfnamefont {Xie}\ \bibnamefont {Chen}},\ }\href@noop {}
  {\enquote {\bibinfo {title} {Renormalization of ising cage-net model and
  generalized foliation},}\ } (\bibinfo {year} {2022}),\ \bibinfo {note} {in
  preparation}\BibitemShut {NoStop}%
\bibitem [{\citenamefont {Altman}\ \emph {et~al.}(2021)\citenamefont {Altman}
  \emph {et~al.}}]{Altman}%
  \BibitemOpen
  \bibfield  {author} {\bibinfo {author} {\bibfnamefont {Ehud}\ \bibnamefont
  {Altman}} \emph {et~al.},\ }\bibfield  {title} {\enquote {\bibinfo {title}
  {Quantum simulators: Architectures and opportunities},}\ }\href {\doibase
  10.1103/PRXQuantum.2.017003} {\bibfield  {journal} {\bibinfo  {journal} {PRX
  Quantum}\ }\textbf {\bibinfo {volume} {2}},\ \bibinfo {pages} {017003}
  (\bibinfo {year} {2021})}\BibitemShut {NoStop}%
\bibitem [{\citenamefont {Semeghini}\ \emph
  {et~al.}(2021{\natexlab{a}})\citenamefont {Semeghini}, \citenamefont
  {Levine}, \citenamefont {Keesling}, \citenamefont {Ebadi}, \citenamefont
  {Wang}, \citenamefont {Bluvstein}, \citenamefont {Verresen}, \citenamefont
  {Pichler}, \citenamefont {Kalinowski}, \citenamefont {Samajdar},
  \citenamefont {Omran}, \citenamefont {Sachdev}, \citenamefont {Vishwanath},
  \citenamefont {Greiner}, \citenamefont {Vuletić},\ and\ \citenamefont
  {Lukin}}]{LukinTopo}%
  \BibitemOpen
  \bibfield  {author} {\bibinfo {author} {\bibfnamefont {G.}~\bibnamefont
  {Semeghini}}, \bibinfo {author} {\bibfnamefont {H.}~\bibnamefont {Levine}},
  \bibinfo {author} {\bibfnamefont {A.}~\bibnamefont {Keesling}}, \bibinfo
  {author} {\bibfnamefont {S.}~\bibnamefont {Ebadi}}, \bibinfo {author}
  {\bibfnamefont {T.~T.}\ \bibnamefont {Wang}}, \bibinfo {author}
  {\bibfnamefont {D.}~\bibnamefont {Bluvstein}}, \bibinfo {author}
  {\bibfnamefont {R.}~\bibnamefont {Verresen}}, \bibinfo {author}
  {\bibfnamefont {H.}~\bibnamefont {Pichler}}, \bibinfo {author} {\bibfnamefont
  {M.}~\bibnamefont {Kalinowski}}, \bibinfo {author} {\bibfnamefont
  {R.}~\bibnamefont {Samajdar}}, \bibinfo {author} {\bibfnamefont
  {A.}~\bibnamefont {Omran}}, \bibinfo {author} {\bibfnamefont
  {S.}~\bibnamefont {Sachdev}}, \bibinfo {author} {\bibfnamefont
  {A.}~\bibnamefont {Vishwanath}}, \bibinfo {author} {\bibfnamefont
  {M.}~\bibnamefont {Greiner}}, \bibinfo {author} {\bibfnamefont
  {V.}~\bibnamefont {Vuletić}}, \ and\ \bibinfo {author} {\bibfnamefont
  {M.~D.}\ \bibnamefont {Lukin}},\ }\bibfield  {title} {\enquote {\bibinfo
  {title} {Probing topological spin liquids on a programmable quantum
  simulator},}\ }\href {\doibase 10.1126/science.abi8794} {\bibfield  {journal}
  {\bibinfo  {journal} {Science}\ }\textbf {\bibinfo {volume} {374}},\ \bibinfo
  {pages} {1242--1247} (\bibinfo {year} {2021}{\natexlab{a}})}\BibitemShut
  {NoStop}%
\bibitem [{\citenamefont {Marques}\ \emph {et~al.}(2022)\citenamefont
  {Marques}, \citenamefont {Varbanov}, \citenamefont {Moreira}, \citenamefont
  {Ali}, \citenamefont {Muthusubramanian}, \citenamefont {Zachariadis},
  \citenamefont {Battistel}, \citenamefont {Beekman}, \citenamefont {Haider},
  \citenamefont {Vlothuizen}, \citenamefont {Bruno}, \citenamefont {Terhal},\
  and\ \citenamefont {DiCarlo}}]{TerhalTopo}%
  \BibitemOpen
  \bibfield  {author} {\bibinfo {author} {\bibfnamefont {J.~F.}\ \bibnamefont
  {Marques}}, \bibinfo {author} {\bibfnamefont {B.~M.}\ \bibnamefont
  {Varbanov}}, \bibinfo {author} {\bibfnamefont {M.~S.}\ \bibnamefont
  {Moreira}}, \bibinfo {author} {\bibfnamefont {H.}~\bibnamefont {Ali}},
  \bibinfo {author} {\bibfnamefont {N.}~\bibnamefont {Muthusubramanian}},
  \bibinfo {author} {\bibfnamefont {C.}~\bibnamefont {Zachariadis}}, \bibinfo
  {author} {\bibfnamefont {F.}~\bibnamefont {Battistel}}, \bibinfo {author}
  {\bibfnamefont {M.}~\bibnamefont {Beekman}}, \bibinfo {author} {\bibfnamefont
  {N.}~\bibnamefont {Haider}}, \bibinfo {author} {\bibfnamefont
  {W.}~\bibnamefont {Vlothuizen}}, \bibinfo {author} {\bibfnamefont
  {A.}~\bibnamefont {Bruno}}, \bibinfo {author} {\bibfnamefont {B.~M.}\
  \bibnamefont {Terhal}}, \ and\ \bibinfo {author} {\bibfnamefont
  {L.}~\bibnamefont {DiCarlo}},\ }\bibfield  {title} {\enquote {\bibinfo
  {title} {Logical-qubit operations in an error-detecting surface code},}\
  }\href {\doibase 10.1038/s41567-021-01423-9} {\bibfield  {journal} {\bibinfo
  {journal} {Nature Physics}\ }\textbf {\bibinfo {volume} {18}},\ \bibinfo
  {pages} {80--86} (\bibinfo {year} {2022})}\BibitemShut {NoStop}%
\bibitem [{\citenamefont {Fowler}\ and\ \citenamefont
  {Martinis}(2014)}]{MartinisNonlocal}%
  \BibitemOpen
  \bibfield  {author} {\bibinfo {author} {\bibfnamefont {Austin~G.}\
  \bibnamefont {Fowler}}\ and\ \bibinfo {author} {\bibfnamefont {John~M.}\
  \bibnamefont {Martinis}},\ }\bibfield  {title} {\enquote {\bibinfo {title}
  {Quantifying the effects of local many-qubit errors and nonlocal two-qubit
  errors on the surface code},}\ }\href {\doibase 10.1103/PhysRevA.89.032316}
  {\bibfield  {journal} {\bibinfo  {journal} {Phys. Rev. A}\ }\textbf {\bibinfo
  {volume} {89}},\ \bibinfo {pages} {032316} (\bibinfo {year}
  {2014})}\BibitemShut {NoStop}%
\bibitem [{\citenamefont {Richerme}\ \emph {et~al.}(2014)\citenamefont
  {Richerme}, \citenamefont {Gong}, \citenamefont {Lee}, \citenamefont {Senko},
  \citenamefont {Smith}, \citenamefont {Foss-Feig}, \citenamefont {Michalakis},
  \citenamefont {Gorshkov},\ and\ \citenamefont {Monroe}}]{MonroeNonlocal}%
  \BibitemOpen
  \bibfield  {author} {\bibinfo {author} {\bibfnamefont {Philip}\ \bibnamefont
  {Richerme}}, \bibinfo {author} {\bibfnamefont {Zhe-Xuan}\ \bibnamefont
  {Gong}}, \bibinfo {author} {\bibfnamefont {Aaron}\ \bibnamefont {Lee}},
  \bibinfo {author} {\bibfnamefont {Crystal}\ \bibnamefont {Senko}}, \bibinfo
  {author} {\bibfnamefont {Jacob}\ \bibnamefont {Smith}}, \bibinfo {author}
  {\bibfnamefont {Michael}\ \bibnamefont {Foss-Feig}}, \bibinfo {author}
  {\bibfnamefont {Spyridon}\ \bibnamefont {Michalakis}}, \bibinfo {author}
  {\bibfnamefont {Alexey~V.}\ \bibnamefont {Gorshkov}}, \ and\ \bibinfo
  {author} {\bibfnamefont {Christopher}\ \bibnamefont {Monroe}},\ }\bibfield
  {title} {\enquote {\bibinfo {title} {Non-local propagation of correlations in
  quantum systems with long-range interactions},}\ }\href {\doibase
  10.1038/nature13450} {\bibfield  {journal} {\bibinfo  {journal} {Nature}\
  }\textbf {\bibinfo {volume} {511}},\ \bibinfo {pages} {198--201} (\bibinfo
  {year} {2014})}\BibitemShut {NoStop}%
\bibitem [{\citenamefont {Vaidya}\ \emph {et~al.}(2018)\citenamefont {Vaidya},
  \citenamefont {Guo}, \citenamefont {Kroeze}, \citenamefont {Ballantine},
  \citenamefont {Koll\'ar}, \citenamefont {Keeling},\ and\ \citenamefont
  {Lev}}]{LevNonlocal}%
  \BibitemOpen
  \bibfield  {author} {\bibinfo {author} {\bibfnamefont {Varun~D.}\
  \bibnamefont {Vaidya}}, \bibinfo {author} {\bibfnamefont {Yudan}\
  \bibnamefont {Guo}}, \bibinfo {author} {\bibfnamefont {Ronen~M.}\
  \bibnamefont {Kroeze}}, \bibinfo {author} {\bibfnamefont {Kyle~E.}\
  \bibnamefont {Ballantine}}, \bibinfo {author} {\bibfnamefont {Alicia~J.}\
  \bibnamefont {Koll\'ar}}, \bibinfo {author} {\bibfnamefont {Jonathan}\
  \bibnamefont {Keeling}}, \ and\ \bibinfo {author} {\bibfnamefont
  {Benjamin~L.}\ \bibnamefont {Lev}},\ }\bibfield  {title} {\enquote {\bibinfo
  {title} {Tunable-range, photon-mediated atomic interactions in multimode
  cavity qed},}\ }\href {\doibase 10.1103/PhysRevX.8.011002} {\bibfield
  {journal} {\bibinfo  {journal} {Phys. Rev. X}\ }\textbf {\bibinfo {volume}
  {8}},\ \bibinfo {pages} {011002} (\bibinfo {year} {2018})}\BibitemShut
  {NoStop}%
\bibitem [{\citenamefont {Bourassa}\ \emph {et~al.}(2021)\citenamefont
  {Bourassa}, \citenamefont {Alexander}, \citenamefont {Vasmer}, \citenamefont
  {Patil}, \citenamefont {Tzitrin}, \citenamefont {Matsuura}, \citenamefont
  {Su}, \citenamefont {Baragiola}, \citenamefont {Guha}, \citenamefont
  {Dauphinais}, \citenamefont {Sabapathy}, \citenamefont {Menicucci},\ and\
  \citenamefont {Dhand}}]{Bourassa_2021}%
  \BibitemOpen
  \bibfield  {author} {\bibinfo {author} {\bibfnamefont {J.~Eli}\ \bibnamefont
  {Bourassa}}, \bibinfo {author} {\bibfnamefont {Rafael~N.}\ \bibnamefont
  {Alexander}}, \bibinfo {author} {\bibfnamefont {Michael}\ \bibnamefont
  {Vasmer}}, \bibinfo {author} {\bibfnamefont {Ashlesha}\ \bibnamefont
  {Patil}}, \bibinfo {author} {\bibfnamefont {Ilan}\ \bibnamefont {Tzitrin}},
  \bibinfo {author} {\bibfnamefont {Takaya}\ \bibnamefont {Matsuura}}, \bibinfo
  {author} {\bibfnamefont {Daiqin}\ \bibnamefont {Su}}, \bibinfo {author}
  {\bibfnamefont {Ben~Q.}\ \bibnamefont {Baragiola}}, \bibinfo {author}
  {\bibfnamefont {Saikat}\ \bibnamefont {Guha}}, \bibinfo {author}
  {\bibfnamefont {Guillaume}\ \bibnamefont {Dauphinais}}, \bibinfo {author}
  {\bibfnamefont {Krishna~K.}\ \bibnamefont {Sabapathy}}, \bibinfo {author}
  {\bibfnamefont {Nicolas~C.}\ \bibnamefont {Menicucci}}, \ and\ \bibinfo
  {author} {\bibfnamefont {Ish}\ \bibnamefont {Dhand}},\ }\bibfield  {title}
  {\enquote {\bibinfo {title} {Blueprint for a scalable photonic fault-tolerant
  quantum computer},}\ }\href {\doibase 10.22331/q-2021-02-04-392} {\bibfield
  {journal} {\bibinfo  {journal} {Quantum}\ }\textbf {\bibinfo {volume} {5}},\
  \bibinfo {pages} {392} (\bibinfo {year} {2021})}\BibitemShut {NoStop}%
\bibitem [{\citenamefont {Farrelly}(2020)}]{Farrelly2020}%
  \BibitemOpen
  \bibfield  {author} {\bibinfo {author} {\bibfnamefont {Terry}\ \bibnamefont
  {Farrelly}},\ }\bibfield  {title} {\enquote {\bibinfo {title} {A review of
  {Q}uantum {C}ellular {A}utomata},}\ }\href {\doibase
  10.22331/q-2020-11-30-368} {\bibfield  {journal} {\bibinfo  {journal}
  {{Quantum}}\ }\textbf {\bibinfo {volume} {4}},\ \bibinfo {pages} {368}
  (\bibinfo {year} {2020})}\BibitemShut {NoStop}%
\bibitem [{\citenamefont {Gross}\ \emph {et~al.}(2012)\citenamefont {Gross},
  \citenamefont {Nesme}, \citenamefont {Vogts},\ and\ \citenamefont
  {Werner}}]{Gross2012}%
  \BibitemOpen
  \bibfield  {author} {\bibinfo {author} {\bibfnamefont {D.}~\bibnamefont
  {Gross}}, \bibinfo {author} {\bibfnamefont {V.}~\bibnamefont {Nesme}},
  \bibinfo {author} {\bibfnamefont {H.}~\bibnamefont {Vogts}}, \ and\ \bibinfo
  {author} {\bibfnamefont {R.~F.}\ \bibnamefont {Werner}},\ }\bibfield  {title}
  {\enquote {\bibinfo {title} {Index theory of one dimensional quantum walks
  and cellular automata},}\ }\href {\doibase 10.1007/s00220-012-1423-1}
  {\bibfield  {journal} {\bibinfo  {journal} {Commun. Math. Phys.}\ }\textbf
  {\bibinfo {volume} {310}},\ \bibinfo {pages} {419--454} (\bibinfo {year}
  {2012})}\BibitemShut {NoStop}%
\bibitem [{\citenamefont {Freedman}\ and\ \citenamefont
  {Hastings}(2020)}]{Freedman2020}%
  \BibitemOpen
  \bibfield  {author} {\bibinfo {author} {\bibfnamefont {Michael}\ \bibnamefont
  {Freedman}}\ and\ \bibinfo {author} {\bibfnamefont {Matthew~B.}\ \bibnamefont
  {Hastings}},\ }\bibfield  {title} {\enquote {\bibinfo {title} {Classification
  of quantum cellular automata},}\ }\href {\doibase 10.1007/s00220-020-03735-y}
  {\bibfield  {journal} {\bibinfo  {journal} {Communications in Mathematical
  Physics}\ }\textbf {\bibinfo {volume} {376}},\ \bibinfo {pages} {1171--1222}
  (\bibinfo {year} {2020})}\BibitemShut {NoStop}%
\bibitem [{\citenamefont {Haah}\ \emph {et~al.}(2022)\citenamefont {Haah},
  \citenamefont {Fidkowski},\ and\ \citenamefont {Hastings}}]{Haah2022}%
  \BibitemOpen
  \bibfield  {author} {\bibinfo {author} {\bibfnamefont {Jeongwan}\
  \bibnamefont {Haah}}, \bibinfo {author} {\bibfnamefont {Lukasz}\ \bibnamefont
  {Fidkowski}}, \ and\ \bibinfo {author} {\bibfnamefont {Matthew~B.}\
  \bibnamefont {Hastings}},\ }\bibfield  {title} {\enquote {\bibinfo {title}
  {Nontrivial quantum cellular automata in higher dimensions},}\ }\href
  {\doibase 10.1007/s00220-022-04528-1} {\bibfield  {journal} {\bibinfo
  {journal} {Communications in Mathematical Physics}\ } (\bibinfo {year}
  {2022}),\ 10.1007/s00220-022-04528-1}\BibitemShut {NoStop}%
\bibitem [{\citenamefont {Haah}(2021)}]{Haah2021}%
  \BibitemOpen
  \bibfield  {author} {\bibinfo {author} {\bibfnamefont {Jeongwan}\
  \bibnamefont {Haah}},\ }\bibfield  {title} {\enquote {\bibinfo {title}
  {Clifford quantum cellular automata: Trivial group in 2d and witt group in
  3d},}\ }\href {\doibase 10.1063/5.0022185} {\bibfield  {journal} {\bibinfo
  {journal} {Journal of Mathematical Physics}\ }\textbf {\bibinfo {volume}
  {62}},\ \bibinfo {pages} {092202} (\bibinfo {year} {2021})}\BibitemShut
  {NoStop}%
\bibitem [{\citenamefont {Shirley}\ \emph {et~al.}(2022)\citenamefont
  {Shirley}, \citenamefont {Chen}, \citenamefont {Dua}, \citenamefont
  {Ellison}, \citenamefont {Tantivasadakarn},\ and\ \citenamefont
  {Williamson}}]{Shirley2022}%
  \BibitemOpen
  \bibfield  {author} {\bibinfo {author} {\bibfnamefont {Wilbur}\ \bibnamefont
  {Shirley}}, \bibinfo {author} {\bibfnamefont {Yu-An}\ \bibnamefont {Chen}},
  \bibinfo {author} {\bibfnamefont {Arpit}\ \bibnamefont {Dua}}, \bibinfo
  {author} {\bibfnamefont {Tyler~D.}\ \bibnamefont {Ellison}}, \bibinfo
  {author} {\bibfnamefont {Nathanan}\ \bibnamefont {Tantivasadakarn}}, \ and\
  \bibinfo {author} {\bibfnamefont {Dominic~J.}\ \bibnamefont {Williamson}},\
  }\bibfield  {title} {\enquote {\bibinfo {title} {Three-dimensional quantum
  cellular automata from chiral semion surface topological order and beyond},}\
  }\href {\doibase 10.1103/PRXQuantum.3.030326} {\bibfield  {journal} {\bibinfo
   {journal} {PRX Quantum}\ }\textbf {\bibinfo {volume} {3}},\ \bibinfo {pages}
  {030326} (\bibinfo {year} {2022})}\BibitemShut {NoStop}%
\bibitem [{\citenamefont {Stephen}\ \emph {et~al.}(2019)\citenamefont
  {Stephen}, \citenamefont {Nautrup}, \citenamefont {Bermejo-Vega},
  \citenamefont {Eisert},\ and\ \citenamefont {Raussendorf}}]{Stephen2019}%
  \BibitemOpen
  \bibfield  {author} {\bibinfo {author} {\bibfnamefont {David~T.}\
  \bibnamefont {Stephen}}, \bibinfo {author} {\bibfnamefont {Hendrik~Poulsen}\
  \bibnamefont {Nautrup}}, \bibinfo {author} {\bibfnamefont {Juani}\
  \bibnamefont {Bermejo-Vega}}, \bibinfo {author} {\bibfnamefont {Jens}\
  \bibnamefont {Eisert}}, \ and\ \bibinfo {author} {\bibfnamefont {Robert}\
  \bibnamefont {Raussendorf}},\ }\bibfield  {title} {\enquote {\bibinfo {title}
  {Subsystem symmetries, quantum cellular automata, and computational phases of
  quantum matter},}\ }\href {\doibase 10.22331/q-2019-05-20-142} {\bibfield
  {journal} {\bibinfo  {journal} {{Quantum}}\ }\textbf {\bibinfo {volume}
  {3}},\ \bibinfo {pages} {142} (\bibinfo {year} {2019})}\BibitemShut {NoStop}%
\bibitem [{\citenamefont {Fidkowski}\ \emph {et~al.}(2020)\citenamefont
  {Fidkowski}, \citenamefont {Haah},\ and\ \citenamefont
  {Hastings}}]{Fidkowski2020}%
  \BibitemOpen
  \bibfield  {author} {\bibinfo {author} {\bibfnamefont {Lukasz}\ \bibnamefont
  {Fidkowski}}, \bibinfo {author} {\bibfnamefont {Jeongwan}\ \bibnamefont
  {Haah}}, \ and\ \bibinfo {author} {\bibfnamefont {Matthew~B.}\ \bibnamefont
  {Hastings}},\ }\bibfield  {title} {\enquote {\bibinfo {title} {Exactly
  solvable model for a $4+1\mathrm{D}$ beyond-cohomology symmetry-protected
  topological phase},}\ }\href {\doibase 10.1103/PhysRevB.101.155124}
  {\bibfield  {journal} {\bibinfo  {journal} {Phys. Rev. B}\ }\textbf {\bibinfo
  {volume} {101}},\ \bibinfo {pages} {155124} (\bibinfo {year}
  {2020})}\BibitemShut {NoStop}%
\bibitem [{\citenamefont {Po}\ \emph {et~al.}(2016)\citenamefont {Po},
  \citenamefont {Fidkowski}, \citenamefont {Morimoto}, \citenamefont {Potter},\
  and\ \citenamefont {Vishwanath}}]{Po2016}%
  \BibitemOpen
  \bibfield  {author} {\bibinfo {author} {\bibfnamefont {Hoi~Chun}\
  \bibnamefont {Po}}, \bibinfo {author} {\bibfnamefont {Lukasz}\ \bibnamefont
  {Fidkowski}}, \bibinfo {author} {\bibfnamefont {Takahiro}\ \bibnamefont
  {Morimoto}}, \bibinfo {author} {\bibfnamefont {Andrew~C.}\ \bibnamefont
  {Potter}}, \ and\ \bibinfo {author} {\bibfnamefont {Ashvin}\ \bibnamefont
  {Vishwanath}},\ }\bibfield  {title} {\enquote {\bibinfo {title} {Chiral
  floquet phases of many-body localized bosons},}\ }\href {\doibase
  10.1103/PhysRevX.6.041070} {\bibfield  {journal} {\bibinfo  {journal} {Phys.
  Rev. X}\ }\textbf {\bibinfo {volume} {6}},\ \bibinfo {pages} {041070}
  (\bibinfo {year} {2016})}\BibitemShut {NoStop}%
\bibitem [{\citenamefont {Fidkowski}\ \emph {et~al.}(2019)\citenamefont
  {Fidkowski}, \citenamefont {Po}, \citenamefont {Potter},\ and\ \citenamefont
  {Vishwanath}}]{Fidkowski2019}%
  \BibitemOpen
  \bibfield  {author} {\bibinfo {author} {\bibfnamefont {Lukasz}\ \bibnamefont
  {Fidkowski}}, \bibinfo {author} {\bibfnamefont {Hoi~Chun}\ \bibnamefont
  {Po}}, \bibinfo {author} {\bibfnamefont {Andrew~C.}\ \bibnamefont {Potter}},
  \ and\ \bibinfo {author} {\bibfnamefont {Ashvin}\ \bibnamefont
  {Vishwanath}},\ }\bibfield  {title} {\enquote {\bibinfo {title} {Interacting
  invariants for floquet phases of fermions in two dimensions},}\ }\href
  {\doibase 10.1103/PhysRevB.99.085115} {\bibfield  {journal} {\bibinfo
  {journal} {Phys. Rev. B}\ }\textbf {\bibinfo {volume} {99}},\ \bibinfo
  {pages} {085115} (\bibinfo {year} {2019})}\BibitemShut {NoStop}%
\bibitem [{\citenamefont {Po}\ \emph {et~al.}(2017)\citenamefont {Po},
  \citenamefont {Fidkowski}, \citenamefont {Vishwanath},\ and\ \citenamefont
  {Potter}}]{Poetal}%
  \BibitemOpen
  \bibfield  {author} {\bibinfo {author} {\bibfnamefont {Hoi~Chun}\
  \bibnamefont {Po}}, \bibinfo {author} {\bibfnamefont {Lukasz}\ \bibnamefont
  {Fidkowski}}, \bibinfo {author} {\bibfnamefont {Ashvin}\ \bibnamefont
  {Vishwanath}}, \ and\ \bibinfo {author} {\bibfnamefont {Andrew~C.}\
  \bibnamefont {Potter}},\ }\bibfield  {title} {\enquote {\bibinfo {title}
  {Radical chiral floquet phases in a periodically driven kitaev model and
  beyond},}\ }\href {\doibase 10.1103/PhysRevB.96.245116} {\bibfield  {journal}
  {\bibinfo  {journal} {Phys. Rev. B}\ }\textbf {\bibinfo {volume} {96}},\
  \bibinfo {pages} {245116} (\bibinfo {year} {2017})}\BibitemShut {NoStop}%
\bibitem [{\citenamefont {Aharonov}\ and\ \citenamefont
  {Touati}(2018)}]{Aharonov2018}%
  \BibitemOpen
  \bibfield  {author} {\bibinfo {author} {\bibfnamefont {Dorit}\ \bibnamefont
  {Aharonov}}\ and\ \bibinfo {author} {\bibfnamefont {Yonathan}\ \bibnamefont
  {Touati}},\ }\href@noop {} {\enquote {\bibinfo {title} {Quantum circuit depth
  lower bounds for homological codes},}\ } (\bibinfo {year} {2018}),\ \Eprint
  {http://arxiv.org/abs/arXiv:1810.03912} {arXiv:1810.03912} \BibitemShut
  {NoStop}%
\bibitem [{\citenamefont {Azses}\ \emph {et~al.}(2020)\citenamefont {Azses},
  \citenamefont {Haenel}, \citenamefont {Naveh}, \citenamefont {Raussendorf},
  \citenamefont {Sela},\ and\ \citenamefont {Dalla~Torre}}]{Azses2020}%
  \BibitemOpen
  \bibfield  {author} {\bibinfo {author} {\bibfnamefont {Daniel}\ \bibnamefont
  {Azses}}, \bibinfo {author} {\bibfnamefont {Rafael}\ \bibnamefont {Haenel}},
  \bibinfo {author} {\bibfnamefont {Yehuda}\ \bibnamefont {Naveh}}, \bibinfo
  {author} {\bibfnamefont {Robert}\ \bibnamefont {Raussendorf}}, \bibinfo
  {author} {\bibfnamefont {Eran}\ \bibnamefont {Sela}}, \ and\ \bibinfo
  {author} {\bibfnamefont {Emanuele~G.}\ \bibnamefont {Dalla~Torre}},\
  }\bibfield  {title} {\enquote {\bibinfo {title} {Identification of
  symmetry-protected topological states on noisy quantum computers},}\ }\href
  {\doibase 10.1103/PhysRevLett.125.120502} {\bibfield  {journal} {\bibinfo
  {journal} {Phys. Rev. Lett.}\ }\textbf {\bibinfo {volume} {125}},\ \bibinfo
  {pages} {120502} (\bibinfo {year} {2020})}\BibitemShut {NoStop}%
\bibitem [{\citenamefont {Xiao}\ \emph {et~al.}(2021)\citenamefont {Xiao},
  \citenamefont {Freericks},\ and\ \citenamefont {Kemper}}]{Xiao2021}%
  \BibitemOpen
  \bibfield  {author} {\bibinfo {author} {\bibfnamefont {Xiao}\ \bibnamefont
  {Xiao}}, \bibinfo {author} {\bibfnamefont {J.~K.}\ \bibnamefont {Freericks}},
  \ and\ \bibinfo {author} {\bibfnamefont {A.~F.}\ \bibnamefont {Kemper}},\
  }\bibfield  {title} {\enquote {\bibinfo {title} {Determining quantum phase
  diagrams of topological {K}itaev-inspired models on {NISQ} quantum
  hardware},}\ }\href {\doibase 10.22331/q-2021-09-28-553} {\bibfield
  {journal} {\bibinfo  {journal} {{Quantum}}\ }\textbf {\bibinfo {volume}
  {5}},\ \bibinfo {pages} {553} (\bibinfo {year} {2021})}\BibitemShut {NoStop}%
\bibitem [{\citenamefont {Semeghini}\ \emph
  {et~al.}(2021{\natexlab{b}})\citenamefont {Semeghini}, \citenamefont
  {Levine}, \citenamefont {Keesling}, \citenamefont {Ebadi}, \citenamefont
  {Wang}, \citenamefont {Bluvstein}, \citenamefont {Verresen}, \citenamefont
  {Pichler}, \citenamefont {Kalinowski}, \citenamefont {Samajdar},
  \citenamefont {Omran}, \citenamefont {Sachdev}, \citenamefont {Vishwanath},
  \citenamefont {Greiner}, \citenamefont {Vuletić},\ and\ \citenamefont
  {Lukin}}]{Semeghini2021}%
  \BibitemOpen
  \bibfield  {author} {\bibinfo {author} {\bibfnamefont {G.}~\bibnamefont
  {Semeghini}}, \bibinfo {author} {\bibfnamefont {H.}~\bibnamefont {Levine}},
  \bibinfo {author} {\bibfnamefont {A.}~\bibnamefont {Keesling}}, \bibinfo
  {author} {\bibfnamefont {S.}~\bibnamefont {Ebadi}}, \bibinfo {author}
  {\bibfnamefont {T.~T.}\ \bibnamefont {Wang}}, \bibinfo {author}
  {\bibfnamefont {D.}~\bibnamefont {Bluvstein}}, \bibinfo {author}
  {\bibfnamefont {R.}~\bibnamefont {Verresen}}, \bibinfo {author}
  {\bibfnamefont {H.}~\bibnamefont {Pichler}}, \bibinfo {author} {\bibfnamefont
  {M.}~\bibnamefont {Kalinowski}}, \bibinfo {author} {\bibfnamefont
  {R.}~\bibnamefont {Samajdar}}, \bibinfo {author} {\bibfnamefont
  {A.}~\bibnamefont {Omran}}, \bibinfo {author} {\bibfnamefont
  {S.}~\bibnamefont {Sachdev}}, \bibinfo {author} {\bibfnamefont
  {A.}~\bibnamefont {Vishwanath}}, \bibinfo {author} {\bibfnamefont
  {M.}~\bibnamefont {Greiner}}, \bibinfo {author} {\bibfnamefont
  {V.}~\bibnamefont {Vuletić}}, \ and\ \bibinfo {author} {\bibfnamefont
  {M.~D.}\ \bibnamefont {Lukin}},\ }\bibfield  {title} {\enquote {\bibinfo
  {title} {Probing topological spin liquids on a programmable quantum
  simulator},}\ }\href {\doibase 10.1126/science.abi8794} {\bibfield  {journal}
  {\bibinfo  {journal} {Science}\ }\textbf {\bibinfo {volume} {374}},\ \bibinfo
  {pages} {1242--1247} (\bibinfo {year} {2021}{\natexlab{b}})},\ \Eprint
  {http://arxiv.org/abs/https://www.science.org/doi/pdf/10.1126/science.abi8794}
  {https://www.science.org/doi/pdf/10.1126/science.abi8794} \BibitemShut
  {NoStop}%
\bibitem [{\citenamefont {Smith}\ \emph {et~al.}(2022)\citenamefont {Smith},
  \citenamefont {Jobst}, \citenamefont {Green},\ and\ \citenamefont
  {Pollmann}}]{Smith2022}%
  \BibitemOpen
  \bibfield  {author} {\bibinfo {author} {\bibfnamefont {Adam}\ \bibnamefont
  {Smith}}, \bibinfo {author} {\bibfnamefont {Bernhard}\ \bibnamefont {Jobst}},
  \bibinfo {author} {\bibfnamefont {Andrew~G.}\ \bibnamefont {Green}}, \ and\
  \bibinfo {author} {\bibfnamefont {Frank}\ \bibnamefont {Pollmann}},\
  }\bibfield  {title} {\enquote {\bibinfo {title} {Crossing a topological phase
  transition with a quantum computer},}\ }\href {\doibase
  10.1103/PhysRevResearch.4.L022020} {\bibfield  {journal} {\bibinfo  {journal}
  {Phys. Rev. Res.}\ }\textbf {\bibinfo {volume} {4}},\ \bibinfo {pages}
  {L022020} (\bibinfo {year} {2022})}\BibitemShut {NoStop}%
\bibitem [{\citenamefont {Lu}\ \emph {et~al.}(2022)\citenamefont {Lu},
  \citenamefont {Lessa}, \citenamefont {Kim},\ and\ \citenamefont
  {Hsieh}}]{Lu2022}%
  \BibitemOpen
  \bibfield  {author} {\bibinfo {author} {\bibfnamefont {Tsung-Cheng}\
  \bibnamefont {Lu}}, \bibinfo {author} {\bibfnamefont {Leonardo~A.}\
  \bibnamefont {Lessa}}, \bibinfo {author} {\bibfnamefont {Isaac~H.}\
  \bibnamefont {Kim}}, \ and\ \bibinfo {author} {\bibfnamefont {Timothy~H.}\
  \bibnamefont {Hsieh}},\ }\bibfield  {title} {\enquote {\bibinfo {title}
  {Measurement as a shortcut to long-range entangled quantum matter},}\ }\href
  {\doibase 10.1103/PRXQuantum.3.040337} {\bibfield  {journal} {\bibinfo
  {journal} {PRX Quantum}\ }\textbf {\bibinfo {volume} {3}},\ \bibinfo {pages}
  {040337} (\bibinfo {year} {2022})}\BibitemShut {NoStop}%
\bibitem [{\citenamefont {Mi}\ \emph {et~al.}(2022)\citenamefont {Mi} \emph
  {et~al.}}]{Mi2022}%
  \BibitemOpen
  \bibfield  {author} {\bibinfo {author} {\bibfnamefont {Xiao}\ \bibnamefont
  {Mi}} \emph {et~al.},\ }\bibfield  {title} {\enquote {\bibinfo {title}
  {Time-crystalline eigenstate order on a quantum processor},}\ }\href
  {\doibase 10.1038/s41586-021-04257-w} {\bibfield  {journal} {\bibinfo
  {journal} {Nature}\ }\textbf {\bibinfo {volume} {601}},\ \bibinfo {pages}
  {531--536} (\bibinfo {year} {2022})}\BibitemShut {NoStop}%
\bibitem [{\citenamefont {Dumitrescu}\ \emph {et~al.}(2022)\citenamefont
  {Dumitrescu}, \citenamefont {Bohnet}, \citenamefont {Gaebler}, \citenamefont
  {Hankin}, \citenamefont {Hayes}, \citenamefont {Kumar}, \citenamefont
  {Neyenhuis}, \citenamefont {Vasseur},\ and\ \citenamefont
  {Potter}}]{Dumitrescu2022}%
  \BibitemOpen
  \bibfield  {author} {\bibinfo {author} {\bibfnamefont {Philipp~T.}\
  \bibnamefont {Dumitrescu}}, \bibinfo {author} {\bibfnamefont {Justin~G.}\
  \bibnamefont {Bohnet}}, \bibinfo {author} {\bibfnamefont {John~P.}\
  \bibnamefont {Gaebler}}, \bibinfo {author} {\bibfnamefont {Aaron}\
  \bibnamefont {Hankin}}, \bibinfo {author} {\bibfnamefont {David}\
  \bibnamefont {Hayes}}, \bibinfo {author} {\bibfnamefont {Ajesh}\ \bibnamefont
  {Kumar}}, \bibinfo {author} {\bibfnamefont {Brian}\ \bibnamefont
  {Neyenhuis}}, \bibinfo {author} {\bibfnamefont {Romain}\ \bibnamefont
  {Vasseur}}, \ and\ \bibinfo {author} {\bibfnamefont {Andrew~C.}\ \bibnamefont
  {Potter}},\ }\bibfield  {title} {\enquote {\bibinfo {title} {Dynamical
  topological phase realized in a trapped-ion quantum simulator},}\ }\href
  {\doibase 10.1038/s41586-022-04853-4} {\bibfield  {journal} {\bibinfo
  {journal} {Nature}\ }\textbf {\bibinfo {volume} {607}},\ \bibinfo {pages}
  {463--467} (\bibinfo {year} {2022})}\BibitemShut {NoStop}%
\bibitem [{\citenamefont {Cruz}\ \emph {et~al.}(2019)\citenamefont {Cruz},
  \citenamefont {Fournier}, \citenamefont {Gremion}, \citenamefont {Jeannerot},
  \citenamefont {Komagata}, \citenamefont {Tosic}, \citenamefont
  {Thiesbrummel}, \citenamefont {Chan}, \citenamefont {Macris}, \citenamefont
  {Dupertuis},\ and\ \citenamefont {Javerzac-Galy}}]{Cruz2019}%
  \BibitemOpen
  \bibfield  {author} {\bibinfo {author} {\bibfnamefont {Diogo}\ \bibnamefont
  {Cruz}}, \bibinfo {author} {\bibfnamefont {Romain}\ \bibnamefont {Fournier}},
  \bibinfo {author} {\bibfnamefont {Fabien}\ \bibnamefont {Gremion}}, \bibinfo
  {author} {\bibfnamefont {Alix}\ \bibnamefont {Jeannerot}}, \bibinfo {author}
  {\bibfnamefont {Kenichi}\ \bibnamefont {Komagata}}, \bibinfo {author}
  {\bibfnamefont {Tara}\ \bibnamefont {Tosic}}, \bibinfo {author}
  {\bibfnamefont {Jarla}\ \bibnamefont {Thiesbrummel}}, \bibinfo {author}
  {\bibfnamefont {Chun~Lam}\ \bibnamefont {Chan}}, \bibinfo {author}
  {\bibfnamefont {Nicolas}\ \bibnamefont {Macris}}, \bibinfo {author}
  {\bibfnamefont {Marc-André}\ \bibnamefont {Dupertuis}}, \ and\ \bibinfo
  {author} {\bibfnamefont {Clément}\ \bibnamefont {Javerzac-Galy}},\
  }\bibfield  {title} {\enquote {\bibinfo {title} {Efficient quantum algorithms
  for ghz and w states, and implementation on the ibm quantum computer},}\
  }\href {\doibase https://doi.org/10.1002/qute.201900015} {\bibfield
  {journal} {\bibinfo  {journal} {Advanced Quantum Technologies}\ }\textbf
  {\bibinfo {volume} {2}},\ \bibinfo {pages} {1900015} (\bibinfo {year}
  {2019})}\BibitemShut {NoStop}%
\bibitem [{\citenamefont {Liao}\ and\ \citenamefont {Feder}(2021)}]{Liao2021}%
  \BibitemOpen
  \bibfield  {author} {\bibinfo {author} {\bibfnamefont {Pengcheng}\
  \bibnamefont {Liao}}\ and\ \bibinfo {author} {\bibfnamefont {David~L.}\
  \bibnamefont {Feder}},\ }\bibfield  {title} {\enquote {\bibinfo {title}
  {Graph-state representation of the toric code},}\ }\href {\doibase
  10.1103/PhysRevA.104.012432} {\bibfield  {journal} {\bibinfo  {journal}
  {Phys. Rev. A}\ }\textbf {\bibinfo {volume} {104}},\ \bibinfo {pages}
  {012432} (\bibinfo {year} {2021})}\BibitemShut {NoStop}%
\bibitem [{\citenamefont {K\"onig}\ \emph {et~al.}(2009)\citenamefont
  {K\"onig}, \citenamefont {Reichardt},\ and\ \citenamefont
  {Vidal}}]{konig2009exact}%
  \BibitemOpen
  \bibfield  {author} {\bibinfo {author} {\bibfnamefont {Robert}\ \bibnamefont
  {K\"onig}}, \bibinfo {author} {\bibfnamefont {Ben~W.}\ \bibnamefont
  {Reichardt}}, \ and\ \bibinfo {author} {\bibfnamefont {Guifr\'e}\
  \bibnamefont {Vidal}},\ }\bibfield  {title} {\enquote {\bibinfo {title}
  {Exact entanglement renormalization for string-net models},}\ }\href
  {\doibase 10.1103/PhysRevB.79.195123} {\bibfield  {journal} {\bibinfo
  {journal} {Phys. Rev. B}\ }\textbf {\bibinfo {volume} {79}},\ \bibinfo
  {pages} {195123} (\bibinfo {year} {2009})}\BibitemShut {NoStop}%
\bibitem [{\citenamefont {Briegel}\ and\ \citenamefont
  {Raussendorf}(2001)}]{Briegel2001}%
  \BibitemOpen
  \bibfield  {author} {\bibinfo {author} {\bibfnamefont {Hans~J.}\ \bibnamefont
  {Briegel}}\ and\ \bibinfo {author} {\bibfnamefont {Robert}\ \bibnamefont
  {Raussendorf}},\ }\bibfield  {title} {\enquote {\bibinfo {title} {Persistent
  entanglement in arrays of interacting particles},}\ }\href {\doibase
  10.1103/PhysRevLett.86.910} {\bibfield  {journal} {\bibinfo  {journal} {Phys.
  Rev. Lett.}\ }\textbf {\bibinfo {volume} {86}},\ \bibinfo {pages} {910--913}
  (\bibinfo {year} {2001})}\BibitemShut {NoStop}%
\bibitem [{\citenamefont {Son}\ \emph {et~al.}(2012)\citenamefont {Son},
  \citenamefont {Amico},\ and\ \citenamefont {Vedral}}]{Son2012}%
  \BibitemOpen
  \bibfield  {author} {\bibinfo {author} {\bibfnamefont {W.}~\bibnamefont
  {Son}}, \bibinfo {author} {\bibfnamefont {L.}~\bibnamefont {Amico}}, \ and\
  \bibinfo {author} {\bibfnamefont {V.}~\bibnamefont {Vedral}},\ }\bibfield
  {title} {\enquote {\bibinfo {title} {Topological order in 1d cluster state
  protected by symmetry},}\ }\href {\doibase 10.1007/s11128-011-0346-7}
  {\bibfield  {journal} {\bibinfo  {journal} {Quantum Information Processing}\
  }\textbf {\bibinfo {volume} {11}},\ \bibinfo {pages} {1961--1968} (\bibinfo
  {year} {2012})}\BibitemShut {NoStop}%
\bibitem [{\citenamefont {Chen}\ \emph {et~al.}(2011)\citenamefont {Chen},
  \citenamefont {Gu},\ and\ \citenamefont {Wen}}]{Chen2011}%
  \BibitemOpen
  \bibfield  {author} {\bibinfo {author} {\bibfnamefont {Xie}\ \bibnamefont
  {Chen}}, \bibinfo {author} {\bibfnamefont {Zheng-Cheng}\ \bibnamefont {Gu}},
  \ and\ \bibinfo {author} {\bibfnamefont {Xiao-Gang}\ \bibnamefont {Wen}},\
  }\bibfield  {title} {\enquote {\bibinfo {title} {Classification of gapped
  symmetric phases in one-dimensional spin systems},}\ }\href {\doibase
  10.1103/PhysRevB.83.035107} {\bibfield  {journal} {\bibinfo  {journal} {Phys.
  Rev. B}\ }\textbf {\bibinfo {volume} {83}},\ \bibinfo {pages} {035107}
  (\bibinfo {year} {2011})}\BibitemShut {NoStop}%
\bibitem [{\citenamefont {Chou}\ \emph {et~al.}(2018)\citenamefont {Chou},
  \citenamefont {Nandkishore},\ and\ \citenamefont {Radzihovsky}}]{ChouTI}%
  \BibitemOpen
  \bibfield  {author} {\bibinfo {author} {\bibfnamefont {Yang-Zhi}\
  \bibnamefont {Chou}}, \bibinfo {author} {\bibfnamefont {Rahul~M.}\
  \bibnamefont {Nandkishore}}, \ and\ \bibinfo {author} {\bibfnamefont {Leo}\
  \bibnamefont {Radzihovsky}},\ }\bibfield  {title} {\enquote {\bibinfo {title}
  {Gapless insulating edges of dirty interacting topological insulators},}\
  }\href {\doibase 10.1103/PhysRevB.98.054205} {\bibfield  {journal} {\bibinfo
  {journal} {Phys. Rev. B}\ }\textbf {\bibinfo {volume} {98}},\ \bibinfo
  {pages} {054205} (\bibinfo {year} {2018})}\BibitemShut {NoStop}%
\bibitem [{\citenamefont {Kimchi}\ \emph {et~al.}(2020)\citenamefont {Kimchi},
  \citenamefont {Chou}, \citenamefont {Nandkishore},\ and\ \citenamefont
  {Radzihovsky}}]{KimchiTI}%
  \BibitemOpen
  \bibfield  {author} {\bibinfo {author} {\bibfnamefont {Itamar}\ \bibnamefont
  {Kimchi}}, \bibinfo {author} {\bibfnamefont {Yang-Zhi}\ \bibnamefont {Chou}},
  \bibinfo {author} {\bibfnamefont {Rahul~M.}\ \bibnamefont {Nandkishore}}, \
  and\ \bibinfo {author} {\bibfnamefont {Leo}\ \bibnamefont {Radzihovsky}},\
  }\bibfield  {title} {\enquote {\bibinfo {title} {Anomalous localization at
  the boundary of an interacting topological insulator},}\ }\href {\doibase
  10.1103/PhysRevB.101.035131} {\bibfield  {journal} {\bibinfo  {journal}
  {Phys. Rev. B}\ }\textbf {\bibinfo {volume} {101}},\ \bibinfo {pages}
  {035131} (\bibinfo {year} {2020})}\BibitemShut {NoStop}%
\bibitem [{\citenamefont {Yoshida}(2016)}]{Yoshida2016}%
  \BibitemOpen
  \bibfield  {author} {\bibinfo {author} {\bibfnamefont {Beni}\ \bibnamefont
  {Yoshida}},\ }\bibfield  {title} {\enquote {\bibinfo {title} {Topological
  phases with generalized global symmetries},}\ }\href {\doibase
  10.1103/PhysRevB.93.155131} {\bibfield  {journal} {\bibinfo  {journal} {Phys.
  Rev. B}\ }\textbf {\bibinfo {volume} {93}},\ \bibinfo {pages} {155131}
  (\bibinfo {year} {2016})}\BibitemShut {NoStop}%
\bibitem [{\citenamefont {Levin}\ and\ \citenamefont {Gu}(2012)}]{Levin2012}%
  \BibitemOpen
  \bibfield  {author} {\bibinfo {author} {\bibfnamefont {Michael}\ \bibnamefont
  {Levin}}\ and\ \bibinfo {author} {\bibfnamefont {Zheng-Cheng}\ \bibnamefont
  {Gu}},\ }\bibfield  {title} {\enquote {\bibinfo {title} {Braiding statistics
  approach to symmetry-protected topological phases},}\ }\href {\doibase
  10.1103/PhysRevB.86.115109} {\bibfield  {journal} {\bibinfo  {journal} {Phys.
  Rev. B}\ }\textbf {\bibinfo {volume} {86}},\ \bibinfo {pages} {115109}
  (\bibinfo {year} {2012})}\BibitemShut {NoStop}%
\bibitem [{Note1()}]{Note1}%
  \BibitemOpen
  \bibinfo {note} {The Levin-Gu state can be obtained by acting on $|\psi
  _H\rangle $ with $CZ$ on every neighboring pair of spins and $Z$ on every
  spin \cite {Wei2018a}.}\BibitemShut {Stop}%
\bibitem [{\citenamefont {Raussendorf}\ \emph {et~al.}(2019)\citenamefont
  {Raussendorf}, \citenamefont {Okay}, \citenamefont {Wang}, \citenamefont
  {Stephen},\ and\ \citenamefont {Nautrup}}]{Raussendorf2019}%
  \BibitemOpen
  \bibfield  {author} {\bibinfo {author} {\bibfnamefont {Robert}\ \bibnamefont
  {Raussendorf}}, \bibinfo {author} {\bibfnamefont {Cihan}\ \bibnamefont
  {Okay}}, \bibinfo {author} {\bibfnamefont {Dong-Sheng}\ \bibnamefont {Wang}},
  \bibinfo {author} {\bibfnamefont {David~T.}\ \bibnamefont {Stephen}}, \ and\
  \bibinfo {author} {\bibfnamefont {Hendrik~Poulsen}\ \bibnamefont {Nautrup}},\
  }\bibfield  {title} {\enquote {\bibinfo {title} {Computationally universal
  phase of quantum matter},}\ }\href {\doibase 10.1103/PhysRevLett.122.090501}
  {\bibfield  {journal} {\bibinfo  {journal} {Phys. Rev. Lett.}\ }\textbf
  {\bibinfo {volume} {122}},\ \bibinfo {pages} {090501} (\bibinfo {year}
  {2019})}\BibitemShut {NoStop}%
\bibitem [{\citenamefont {You}\ \emph {et~al.}(2018)\citenamefont {You},
  \citenamefont {Devakul}, \citenamefont {Burnell},\ and\ \citenamefont
  {Sondhi}}]{You2018}%
  \BibitemOpen
  \bibfield  {author} {\bibinfo {author} {\bibfnamefont {Yizhi}\ \bibnamefont
  {You}}, \bibinfo {author} {\bibfnamefont {Trithep}\ \bibnamefont {Devakul}},
  \bibinfo {author} {\bibfnamefont {F.~J.}\ \bibnamefont {Burnell}}, \ and\
  \bibinfo {author} {\bibfnamefont {S.~L.}\ \bibnamefont {Sondhi}},\ }\bibfield
   {title} {\enquote {\bibinfo {title} {Subsystem symmetry protected
  topological order},}\ }\href {\doibase 10.1103/PhysRevB.98.035112} {\bibfield
   {journal} {\bibinfo  {journal} {Phys. Rev. B}\ }\textbf {\bibinfo {volume}
  {98}},\ \bibinfo {pages} {035112} (\bibinfo {year} {2018})}\BibitemShut
  {NoStop}%
\bibitem [{\citenamefont {Devakul}\ \emph {et~al.}(2019)\citenamefont
  {Devakul}, \citenamefont {You}, \citenamefont {Burnell},\ and\ \citenamefont
  {Sondhi}}]{Devakul2019}%
  \BibitemOpen
  \bibfield  {author} {\bibinfo {author} {\bibfnamefont {Trithep}\ \bibnamefont
  {Devakul}}, \bibinfo {author} {\bibfnamefont {Yizhi}\ \bibnamefont {You}},
  \bibinfo {author} {\bibfnamefont {F.~J.}\ \bibnamefont {Burnell}}, \ and\
  \bibinfo {author} {\bibfnamefont {S.~L.}\ \bibnamefont {Sondhi}},\ }\bibfield
   {title} {\enquote {\bibinfo {title} {{Fractal Symmetric Phases of
  Matter}},}\ }\href {\doibase 10.21468/SciPostPhys.6.1.007} {\bibfield
  {journal} {\bibinfo  {journal} {SciPost Phys.}\ }\textbf {\bibinfo {volume}
  {6}},\ \bibinfo {pages} {7} (\bibinfo {year} {2019})}\BibitemShut {NoStop}%
\bibitem [{\citenamefont {Devakul}\ \emph {et~al.}(2018)\citenamefont
  {Devakul}, \citenamefont {Williamson},\ and\ \citenamefont
  {You}}]{Devakul2018class}%
  \BibitemOpen
  \bibfield  {author} {\bibinfo {author} {\bibfnamefont {Trithep}\ \bibnamefont
  {Devakul}}, \bibinfo {author} {\bibfnamefont {Dominic~J.}\ \bibnamefont
  {Williamson}}, \ and\ \bibinfo {author} {\bibfnamefont {Yizhi}\ \bibnamefont
  {You}},\ }\bibfield  {title} {\enquote {\bibinfo {title} {Classification of
  subsystem symmetry-protected topological phases},}\ }\href {\doibase
  10.1103/PhysRevB.98.235121} {\bibfield  {journal} {\bibinfo  {journal} {Phys.
  Rev. B}\ }\textbf {\bibinfo {volume} {98}},\ \bibinfo {pages} {235121}
  (\bibinfo {year} {2018})}\BibitemShut {NoStop}%
\bibitem [{\citenamefont {You}\ \emph {et~al.}(2020)\citenamefont {You},
  \citenamefont {Devakul}, \citenamefont {Burnell},\ and\ \citenamefont
  {Sondhi}}]{You2020}%
  \BibitemOpen
  \bibfield  {author} {\bibinfo {author} {\bibfnamefont {Yizhi}\ \bibnamefont
  {You}}, \bibinfo {author} {\bibfnamefont {Trithep}\ \bibnamefont {Devakul}},
  \bibinfo {author} {\bibfnamefont {F.J.}\ \bibnamefont {Burnell}}, \ and\
  \bibinfo {author} {\bibfnamefont {S.L.}\ \bibnamefont {Sondhi}},\ }\bibfield
  {title} {\enquote {\bibinfo {title} {Symmetric fracton matter: Twisted and
  enriched},}\ }\href {\doibase https://doi.org/10.1016/j.aop.2020.168140}
  {\bibfield  {journal} {\bibinfo  {journal} {Annals of Physics}\ }\textbf
  {\bibinfo {volume} {416}},\ \bibinfo {pages} {168140} (\bibinfo {year}
  {2020})}\BibitemShut {NoStop}%
\bibitem [{\citenamefont {Raussendorf}\ \emph {et~al.}(2005)\citenamefont
  {Raussendorf}, \citenamefont {Bravyi},\ and\ \citenamefont
  {Harrington}}]{Raussendorf2005a}%
  \BibitemOpen
  \bibfield  {author} {\bibinfo {author} {\bibfnamefont {Robert}\ \bibnamefont
  {Raussendorf}}, \bibinfo {author} {\bibfnamefont {Sergey}\ \bibnamefont
  {Bravyi}}, \ and\ \bibinfo {author} {\bibfnamefont {Jim}\ \bibnamefont
  {Harrington}},\ }\bibfield  {title} {\enquote {\bibinfo {title} {Long-range
  quantum entanglement in noisy cluster states},}\ }\href {\doibase
  10.1103/PhysRevA.71.062313} {\bibfield  {journal} {\bibinfo  {journal} {Phys.
  Rev. A}\ }\textbf {\bibinfo {volume} {71}},\ \bibinfo {pages} {062313}
  (\bibinfo {year} {2005})}\BibitemShut {NoStop}%
\bibitem [{\citenamefont {Roberts}\ and\ \citenamefont
  {Bartlett}(2020)}]{Roberts2020}%
  \BibitemOpen
  \bibfield  {author} {\bibinfo {author} {\bibfnamefont {Sam}\ \bibnamefont
  {Roberts}}\ and\ \bibinfo {author} {\bibfnamefont {Stephen~D.}\ \bibnamefont
  {Bartlett}},\ }\bibfield  {title} {\enquote {\bibinfo {title}
  {Symmetry-protected self-correcting quantum memories},}\ }\href {\doibase
  10.1103/PhysRevX.10.031041} {\bibfield  {journal} {\bibinfo  {journal} {Phys.
  Rev. X}\ }\textbf {\bibinfo {volume} {10}},\ \bibinfo {pages} {031041}
  (\bibinfo {year} {2020})}\BibitemShut {NoStop}%
\bibitem [{\citenamefont {Walker}\ and\ \citenamefont
  {Wang}(2012)}]{Walker2012}%
  \BibitemOpen
  \bibfield  {author} {\bibinfo {author} {\bibfnamefont {Kevin}\ \bibnamefont
  {Walker}}\ and\ \bibinfo {author} {\bibfnamefont {Zhenghan}\ \bibnamefont
  {Wang}},\ }\bibfield  {title} {\enquote {\bibinfo {title} {(3+1)-tqfts and
  topological insulators},}\ }\href {\doibase 10.1007/s11467-011-0194-z}
  {\bibfield  {journal} {\bibinfo  {journal} {Frontiers of Physics}\ }\textbf
  {\bibinfo {volume} {7}},\ \bibinfo {pages} {150--159} (\bibinfo {year}
  {2012})}\BibitemShut {NoStop}%
\bibitem [{\citenamefont {Arrighi}\ \emph {et~al.}(2011)\citenamefont
  {Arrighi}, \citenamefont {Nesme},\ and\ \citenamefont
  {Werner}}]{Arrighi2011}%
  \BibitemOpen
  \bibfield  {author} {\bibinfo {author} {\bibfnamefont {Pablo}\ \bibnamefont
  {Arrighi}}, \bibinfo {author} {\bibfnamefont {Vincent}\ \bibnamefont
  {Nesme}}, \ and\ \bibinfo {author} {\bibfnamefont {Reinhard}\ \bibnamefont
  {Werner}},\ }\bibfield  {title} {\enquote {\bibinfo {title} {Unitarity plus
  causality implies localizability},}\ }\href {\doibase
  https://doi.org/10.1016/j.jcss.2010.05.004} {\bibfield  {journal} {\bibinfo
  {journal} {Journal of Computer and System Sciences}\ }\textbf {\bibinfo
  {volume} {77}},\ \bibinfo {pages} {372--378} (\bibinfo {year}
  {2011})}\BibitemShut {NoStop}%
\bibitem [{\citenamefont {Schumacher}\ and\ \citenamefont
  {Werner}(2004)}]{Schumacher2004}%
  \BibitemOpen
  \bibfield  {author} {\bibinfo {author} {\bibfnamefont {B.}~\bibnamefont
  {Schumacher}}\ and\ \bibinfo {author} {\bibfnamefont {R.~F.}\ \bibnamefont
  {Werner}},\ }\href@noop {} {\enquote {\bibinfo {title} {Reversible quantum
  cellular automata},}\ } (\bibinfo {year} {2004}),\ \Eprint
  {http://arxiv.org/abs/arXiv:quant-ph/0405174} {arXiv:quant-ph/0405174}
  \BibitemShut {NoStop}%
\bibitem [{Note2()}]{Note2}%
  \BibitemOpen
  \bibinfo {note} {Indeed, if we let $\protect \ind (Q)$ be the rational-valued
  index of $Q$ as defined in Ref.~\cite {Gross2012}, then it is straightforward
  to show that $\protect \ind (Q^{-1})=\protect \ind (Q)^{-1}=\protect \ind
  (\protect \bar {Q})$, which implies that $Q^{-1}$ and $\protect \bar {Q}$
  differ by an FDQC in 1D \cite {Gross2012}.}\BibitemShut {Stop}%
\bibitem [{\citenamefont {Arrighi}\ \emph {et~al.}(2008)\citenamefont
  {Arrighi}, \citenamefont {Nesme},\ and\ \citenamefont
  {Werner}}]{Arrighi2008}%
  \BibitemOpen
  \bibfield  {author} {\bibinfo {author} {\bibfnamefont {Pablo}\ \bibnamefont
  {Arrighi}}, \bibinfo {author} {\bibfnamefont {Vincent}\ \bibnamefont
  {Nesme}}, \ and\ \bibinfo {author} {\bibfnamefont {Reinhard}\ \bibnamefont
  {Werner}},\ }\bibfield  {title} {\enquote {\bibinfo {title} {One-dimensional
  quantum cellular automata over finite, unbounded configurations},}\ }in\
  \href@noop {} {\emph {\bibinfo {booktitle} {Language and Automata Theory and
  Applications}}},\ \bibinfo {editor} {edited by\ \bibinfo {editor}
  {\bibfnamefont {Carlos}\ \bibnamefont {Mart{\'i}n-Vide}}, \bibinfo {editor}
  {\bibfnamefont {Friedrich}\ \bibnamefont {Otto}}, \ and\ \bibinfo {editor}
  {\bibfnamefont {Henning}\ \bibnamefont {Fernau}}}\ (\bibinfo  {publisher}
  {Springer Berlin Heidelberg},\ \bibinfo {address} {Berlin, Heidelberg},\
  \bibinfo {year} {2008})\ pp.\ \bibinfo {pages} {64--75}\BibitemShut {NoStop}%
\bibitem [{\citenamefont {Cirac}\ \emph {et~al.}(2017)\citenamefont {Cirac},
  \citenamefont {Perez-Garcia}, \citenamefont {Schuch},\ and\ \citenamefont
  {Verstraete}}]{Cirac2017}%
  \BibitemOpen
  \bibfield  {author} {\bibinfo {author} {\bibfnamefont {J~Ignacio}\
  \bibnamefont {Cirac}}, \bibinfo {author} {\bibfnamefont {David}\ \bibnamefont
  {Perez-Garcia}}, \bibinfo {author} {\bibfnamefont {Norbert}\ \bibnamefont
  {Schuch}}, \ and\ \bibinfo {author} {\bibfnamefont {Frank}\ \bibnamefont
  {Verstraete}},\ }\bibfield  {title} {\enquote {\bibinfo {title} {Matrix
  product unitaries: structure, symmetries, and topological invariants},}\
  }\href {\doibase 10.1088/1742-5468/aa7e55} {\bibfield  {journal} {\bibinfo
  {journal} {Journal of Statistical Mechanics: Theory and Experiment}\ }\textbf
  {\bibinfo {volume} {2017}},\ \bibinfo {pages} {083105} (\bibinfo {year}
  {2017})}\BibitemShut {NoStop}%
\bibitem [{\citenamefont {Burnell}\ \emph {et~al.}(2014)\citenamefont
  {Burnell}, \citenamefont {Chen}, \citenamefont {Fidkowski},\ and\
  \citenamefont {Vishwanath}}]{Burnell2014}%
  \BibitemOpen
  \bibfield  {author} {\bibinfo {author} {\bibfnamefont {F.~J.}\ \bibnamefont
  {Burnell}}, \bibinfo {author} {\bibfnamefont {Xie}\ \bibnamefont {Chen}},
  \bibinfo {author} {\bibfnamefont {Lukasz}\ \bibnamefont {Fidkowski}}, \ and\
  \bibinfo {author} {\bibfnamefont {Ashvin}\ \bibnamefont {Vishwanath}},\
  }\bibfield  {title} {\enquote {\bibinfo {title} {Exactly soluble model of a
  three-dimensional symmetry-protected topological phase of bosons with surface
  topological order},}\ }\href {\doibase 10.1103/PhysRevB.90.245122} {\bibfield
   {journal} {\bibinfo  {journal} {Phys. Rev. B}\ }\textbf {\bibinfo {volume}
  {90}},\ \bibinfo {pages} {245122} (\bibinfo {year} {2014})}\BibitemShut
  {NoStop}%
\bibitem [{\citenamefont {Pollmann}\ and\ \citenamefont
  {Turner}(2012)}]{Pollmann2012a}%
  \BibitemOpen
  \bibfield  {author} {\bibinfo {author} {\bibfnamefont {Frank}\ \bibnamefont
  {Pollmann}}\ and\ \bibinfo {author} {\bibfnamefont {Ari~M.}\ \bibnamefont
  {Turner}},\ }\bibfield  {title} {\enquote {\bibinfo {title} {Detection of
  symmetry-protected topological phases in one dimension},}\ }\href {\doibase
  10.1103/PhysRevB.86.125441} {\bibfield  {journal} {\bibinfo  {journal} {Phys.
  Rev. B}\ }\textbf {\bibinfo {volume} {86}},\ \bibinfo {pages} {125441}
  (\bibinfo {year} {2012})}\BibitemShut {NoStop}%
\bibitem [{\citenamefont {P\'erez-Garc\'{\i}a}\ \emph
  {et~al.}(2008)\citenamefont {P\'erez-Garc\'{\i}a}, \citenamefont {Wolf},
  \citenamefont {Sanz}, \citenamefont {Verstraete},\ and\ \citenamefont
  {Cirac}}]{Perez-Garcia2008}%
  \BibitemOpen
  \bibfield  {author} {\bibinfo {author} {\bibfnamefont {D.}~\bibnamefont
  {P\'erez-Garc\'{\i}a}}, \bibinfo {author} {\bibfnamefont {M.~M.}\
  \bibnamefont {Wolf}}, \bibinfo {author} {\bibfnamefont {M.}~\bibnamefont
  {Sanz}}, \bibinfo {author} {\bibfnamefont {F.}~\bibnamefont {Verstraete}}, \
  and\ \bibinfo {author} {\bibfnamefont {J.~I.}\ \bibnamefont {Cirac}},\
  }\bibfield  {title} {\enquote {\bibinfo {title} {String order and symmetries
  in quantum spin lattices},}\ }\href {\doibase 10.1103/PhysRevLett.100.167202}
  {\bibfield  {journal} {\bibinfo  {journal} {Phys. Rev. Lett.}\ }\textbf
  {\bibinfo {volume} {100}},\ \bibinfo {pages} {167202} (\bibinfo {year}
  {2008})}\BibitemShut {NoStop}%
\bibitem [{\citenamefont {Fisher}\ \emph {et~al.}(2022)\citenamefont {Fisher},
  \citenamefont {Khemani}, \citenamefont {Nahum},\ and\ \citenamefont
  {Vijay}}]{fisheretalreview}%
  \BibitemOpen
  \bibfield  {author} {\bibinfo {author} {\bibfnamefont {Matthew P.~A.}\
  \bibnamefont {Fisher}}, \bibinfo {author} {\bibfnamefont {Vedika}\
  \bibnamefont {Khemani}}, \bibinfo {author} {\bibfnamefont {Adam}\
  \bibnamefont {Nahum}}, \ and\ \bibinfo {author} {\bibfnamefont {Sagar}\
  \bibnamefont {Vijay}},\ }\bibfield  {title} {\enquote {\bibinfo {title}
  {Random quantum circuits},}\ }\href@noop {} {\  (\bibinfo {year} {2022})},\
  \Eprint {http://arxiv.org/abs/arXiv:2207.14280} {arXiv:2207.14280}
  \BibitemShut {NoStop}%
\bibitem [{\citenamefont {Lavasani}\ \emph {et~al.}(2021)\citenamefont
  {Lavasani}, \citenamefont {Alavirad},\ and\ \citenamefont
  {Barkeshli}}]{LavasanietalSPT}%
  \BibitemOpen
  \bibfield  {author} {\bibinfo {author} {\bibfnamefont {Ali}\ \bibnamefont
  {Lavasani}}, \bibinfo {author} {\bibfnamefont {Yahya}\ \bibnamefont
  {Alavirad}}, \ and\ \bibinfo {author} {\bibfnamefont {Maissam}\ \bibnamefont
  {Barkeshli}},\ }\bibfield  {title} {\enquote {\bibinfo {title}
  {Measurement-induced topological entanglement transitions in symmetric random
  quantum circuits},}\ }\href {\doibase 10.1038/s41567-020-01112-z} {\bibfield
  {journal} {\bibinfo  {journal} {Nature Physics}\ }\textbf {\bibinfo {volume}
  {17}},\ \bibinfo {pages} {342--347} (\bibinfo {year} {2021})}\BibitemShut
  {NoStop}%
\bibitem [{\citenamefont {Friedman}\ \emph {et~al.}(2022)\citenamefont
  {Friedman}, \citenamefont {Hart},\ and\ \citenamefont
  {Nandkishore}}]{friedmanetaladaptive}%
  \BibitemOpen
  \bibfield  {author} {\bibinfo {author} {\bibfnamefont {Aaron~J.}\
  \bibnamefont {Friedman}}, \bibinfo {author} {\bibfnamefont {Oliver}\
  \bibnamefont {Hart}}, \ and\ \bibinfo {author} {\bibfnamefont {Rahul}\
  \bibnamefont {Nandkishore}},\ }\bibfield  {title} {\enquote {\bibinfo {title}
  {Measurement-induced phases of matter require adaptive dynamics},}\
  }\href@noop {} {\  (\bibinfo {year} {2022})},\ \Eprint
  {http://arxiv.org/abs/arXiv:2210.07256} {arXiv:2210.07256} \BibitemShut
  {NoStop}%
\bibitem [{\citenamefont {Bahri}\ \emph {et~al.}(2015)\citenamefont {Bahri},
  \citenamefont {Vosk}, \citenamefont {Altman},\ and\ \citenamefont
  {Vishwanath}}]{bahri2015localization}%
  \BibitemOpen
  \bibfield  {author} {\bibinfo {author} {\bibfnamefont {Yasaman}\ \bibnamefont
  {Bahri}}, \bibinfo {author} {\bibfnamefont {Ronen}\ \bibnamefont {Vosk}},
  \bibinfo {author} {\bibfnamefont {Ehud}\ \bibnamefont {Altman}}, \ and\
  \bibinfo {author} {\bibfnamefont {Ashvin}\ \bibnamefont {Vishwanath}},\
  }\bibfield  {title} {\enquote {\bibinfo {title} {Localization and topology
  protected quantum coherence at the edge of hot matter},}\ }\href {\doibase
  10.1038/ncomms8341} {\bibfield  {journal} {\bibinfo  {journal} {Nature
  Communications}\ }\textbf {\bibinfo {volume} {6}},\ \bibinfo {pages} {7341}
  (\bibinfo {year} {2015})}\BibitemShut {NoStop}%
\bibitem [{\citenamefont {Sang}\ and\ \citenamefont
  {Hsieh}(2021)}]{sang2021measurement}%
  \BibitemOpen
  \bibfield  {author} {\bibinfo {author} {\bibfnamefont {Shengqi}\ \bibnamefont
  {Sang}}\ and\ \bibinfo {author} {\bibfnamefont {Timothy~H.}\ \bibnamefont
  {Hsieh}},\ }\bibfield  {title} {\enquote {\bibinfo {title}
  {Measurement-protected quantum phases},}\ }\href {\doibase
  10.1103/PhysRevResearch.3.023200} {\bibfield  {journal} {\bibinfo  {journal}
  {Phys. Rev. Research}\ }\textbf {\bibinfo {volume} {3}},\ \bibinfo {pages}
  {023200} (\bibinfo {year} {2021})}\BibitemShut {NoStop}%
\bibitem [{\citenamefont {Wei}(2018)}]{Wei2018a}%
  \BibitemOpen
  \bibfield  {author} {\bibinfo {author} {\bibfnamefont {Tzu-Chieh}\
  \bibnamefont {Wei}},\ }\bibfield  {title} {\enquote {\bibinfo {title}
  {Quantum spin models for measurement-based quantum computation},}\ }\href
  {\doibase 10.1080/23746149.2018.1461026} {\bibfield  {journal} {\bibinfo
  {journal} {Advances in Physics: X}\ }\textbf {\bibinfo {volume} {3}},\
  \bibinfo {pages} {1461026} (\bibinfo {year} {2018})}\BibitemShut {NoStop}%
\bibitem [{\citenamefont {Raussendorf}\ \emph {et~al.}(2006)\citenamefont
  {Raussendorf}, \citenamefont {Harrington},\ and\ \citenamefont
  {Goyal}}]{Raussendorf2006}%
  \BibitemOpen
  \bibfield  {author} {\bibinfo {author} {\bibfnamefont {R.}~\bibnamefont
  {Raussendorf}}, \bibinfo {author} {\bibfnamefont {J.}~\bibnamefont
  {Harrington}}, \ and\ \bibinfo {author} {\bibfnamefont {K.}~\bibnamefont
  {Goyal}},\ }\bibfield  {title} {\enquote {\bibinfo {title} {A fault-tolerant
  one-way quantum computer},}\ }\href {\doibase 10.1016/j.aop.2006.01.012}
  {\bibfield  {journal} {\bibinfo  {journal} {Annals of Physics}\ }\textbf
  {\bibinfo {volume} {321}},\ \bibinfo {pages} {2242 -- 2270} (\bibinfo {year}
  {2006})}\BibitemShut {NoStop}%
\bibitem [{\citenamefont {Wang}\ and\ \citenamefont
  {Senthil}(2014)}]{WangSenthil}%
  \BibitemOpen
  \bibfield  {author} {\bibinfo {author} {\bibfnamefont {Chong}\ \bibnamefont
  {Wang}}\ and\ \bibinfo {author} {\bibfnamefont {T.}~\bibnamefont {Senthil}},\
  }\bibfield  {title} {\enquote {\bibinfo {title} {Interacting fermionic
  topological insulators/superconductors in three dimensions},}\ }\href
  {\doibase 10.1103/PhysRevB.89.195124} {\bibfield  {journal} {\bibinfo
  {journal} {Phys. Rev. B}\ }\textbf {\bibinfo {volume} {89}},\ \bibinfo
  {pages} {195124} (\bibinfo {year} {2014})}\BibitemShut {NoStop}%
\bibitem [{\citenamefont {Piroli}\ \emph
  {et~al.}(2021{\natexlab{b}})\citenamefont {Piroli}, \citenamefont {Turzillo},
  \citenamefont {Shukla},\ and\ \citenamefont {Cirac}}]{Piroli2021a}%
  \BibitemOpen
  \bibfield  {author} {\bibinfo {author} {\bibfnamefont {Lorenzo}\ \bibnamefont
  {Piroli}}, \bibinfo {author} {\bibfnamefont {Alex}\ \bibnamefont {Turzillo}},
  \bibinfo {author} {\bibfnamefont {Sujeet~K}\ \bibnamefont {Shukla}}, \ and\
  \bibinfo {author} {\bibfnamefont {J~Ignacio}\ \bibnamefont {Cirac}},\
  }\bibfield  {title} {\enquote {\bibinfo {title} {Fermionic quantum cellular
  automata and generalized matrix-product unitaries},}\ }\href {\doibase
  10.1088/1742-5468/abd30f} {\bibfield  {journal} {\bibinfo  {journal} {Journal
  of Statistical Mechanics: Theory and Experiment}\ }\textbf {\bibinfo {volume}
  {2021}},\ \bibinfo {pages} {013107} (\bibinfo {year}
  {2021}{\natexlab{b}})}\BibitemShut {NoStop}%
\bibitem [{\citenamefont {Barkeshli}\ \emph {et~al.}(2019)\citenamefont
  {Barkeshli}, \citenamefont {Bonderson}, \citenamefont {Cheng},\ and\
  \citenamefont {Wang}}]{Barkeshli2019}%
  \BibitemOpen
  \bibfield  {author} {\bibinfo {author} {\bibfnamefont {Maissam}\ \bibnamefont
  {Barkeshli}}, \bibinfo {author} {\bibfnamefont {Parsa}\ \bibnamefont
  {Bonderson}}, \bibinfo {author} {\bibfnamefont {Meng}\ \bibnamefont {Cheng}},
  \ and\ \bibinfo {author} {\bibfnamefont {Zhenghan}\ \bibnamefont {Wang}},\
  }\bibfield  {title} {\enquote {\bibinfo {title} {Symmetry fractionalization,
  defects, and gauging of topological phases},}\ }\href {\doibase
  10.1103/PhysRevB.100.115147} {\bibfield  {journal} {\bibinfo  {journal}
  {Phys. Rev. B}\ }\textbf {\bibinfo {volume} {100}},\ \bibinfo {pages}
  {115147} (\bibinfo {year} {2019})}\BibitemShut {NoStop}%
\bibitem [{\citenamefont {Nandkishore}\ and\ \citenamefont
  {Hermele}(2019)}]{fractonarcmp}%
  \BibitemOpen
  \bibfield  {author} {\bibinfo {author} {\bibfnamefont {Rahul~M.}\
  \bibnamefont {Nandkishore}}\ and\ \bibinfo {author} {\bibfnamefont {Michael}\
  \bibnamefont {Hermele}},\ }\bibfield  {title} {\enquote {\bibinfo {title}
  {Fractons},}\ }\href {\doibase 10.1146/annurev-conmatphys-031218-013604}
  {\bibfield  {journal} {\bibinfo  {journal} {Annual Review of Condensed Matter
  Physics}\ }\textbf {\bibinfo {volume} {10}},\ \bibinfo {pages} {295--313}
  (\bibinfo {year} {2019})}\BibitemShut {NoStop}%
\bibitem [{\citenamefont {Pretko}\ \emph {et~al.}(2020)\citenamefont {Pretko},
  \citenamefont {Chen},\ and\ \citenamefont {You}}]{pretkoreview}%
  \BibitemOpen
  \bibfield  {author} {\bibinfo {author} {\bibfnamefont {Michael}\ \bibnamefont
  {Pretko}}, \bibinfo {author} {\bibfnamefont {Xie}\ \bibnamefont {Chen}}, \
  and\ \bibinfo {author} {\bibfnamefont {Yizhi}\ \bibnamefont {You}},\
  }\bibfield  {title} {\enquote {\bibinfo {title} {Fracton phases of matter},}\
  }\href {\doibase 10.1142/S0217751X20300033} {\bibfield  {journal} {\bibinfo
  {journal} {International Journal of Modern Physics A}\ }\textbf {\bibinfo
  {volume} {35}},\ \bibinfo {pages} {2030003} (\bibinfo {year}
  {2020})}\BibitemShut {NoStop}%
\bibitem [{\citenamefont {Gromov}\ and\ \citenamefont
  {Radzihovsky}(2022)}]{radzihovskyreview}%
  \BibitemOpen
  \bibfield  {author} {\bibinfo {author} {\bibfnamefont {Andrey}\ \bibnamefont
  {Gromov}}\ and\ \bibinfo {author} {\bibfnamefont {Leo}\ \bibnamefont
  {Radzihovsky}},\ }\bibfield  {title} {\enquote {\bibinfo {title} {Fracton
  matter},}\ }\href@noop {} {\  (\bibinfo {year} {2022})},\ \Eprint
  {http://arxiv.org/abs/arXiv:2211.05130} {arXiv:2211.05130} \BibitemShut
  {NoStop}%
\bibitem [{\citenamefont {Vijay}\ \emph {et~al.}(2016)\citenamefont {Vijay},
  \citenamefont {Haah},\ and\ \citenamefont {Fu}}]{VijayHaahFu}%
  \BibitemOpen
  \bibfield  {author} {\bibinfo {author} {\bibfnamefont {Sagar}\ \bibnamefont
  {Vijay}}, \bibinfo {author} {\bibfnamefont {Jeongwan}\ \bibnamefont {Haah}},
  \ and\ \bibinfo {author} {\bibfnamefont {Liang}\ \bibnamefont {Fu}},\
  }\bibfield  {title} {\enquote {\bibinfo {title} {Fracton topological order,
  generalized lattice gauge theory, and duality},}\ }\href {\doibase
  10.1103/PhysRevB.94.235157} {\bibfield  {journal} {\bibinfo  {journal} {Phys.
  Rev. B}\ }\textbf {\bibinfo {volume} {94}},\ \bibinfo {pages} {235157}
  (\bibinfo {year} {2016})}\BibitemShut {NoStop}%
\bibitem [{\citenamefont {Shi}\ and\ \citenamefont {Lu}(2018)}]{decipher}%
  \BibitemOpen
  \bibfield  {author} {\bibinfo {author} {\bibfnamefont {Bowen}\ \bibnamefont
  {Shi}}\ and\ \bibinfo {author} {\bibfnamefont {Yuan-Ming}\ \bibnamefont
  {Lu}},\ }\bibfield  {title} {\enquote {\bibinfo {title} {Deciphering the
  nonlocal entanglement entropy of fracton topological orders},}\ }\href
  {\doibase 10.1103/PhysRevB.97.144106} {\bibfield  {journal} {\bibinfo
  {journal} {Phys. Rev. B}\ }\textbf {\bibinfo {volume} {97}},\ \bibinfo
  {pages} {144106} (\bibinfo {year} {2018})}\BibitemShut {NoStop}%
\bibitem [{\citenamefont {Ma}\ \emph {et~al.}(2018)\citenamefont {Ma},
  \citenamefont {Schmitz}, \citenamefont {Parameswaran}, \citenamefont
  {Hermele},\ and\ \citenamefont {Nandkishore}}]{maetal}%
  \BibitemOpen
  \bibfield  {author} {\bibinfo {author} {\bibfnamefont {Han}\ \bibnamefont
  {Ma}}, \bibinfo {author} {\bibfnamefont {A.~T.}\ \bibnamefont {Schmitz}},
  \bibinfo {author} {\bibfnamefont {S.~A.}\ \bibnamefont {Parameswaran}},
  \bibinfo {author} {\bibfnamefont {Michael}\ \bibnamefont {Hermele}}, \ and\
  \bibinfo {author} {\bibfnamefont {Rahul~M.}\ \bibnamefont {Nandkishore}},\
  }\bibfield  {title} {\enquote {\bibinfo {title} {Topological entanglement
  entropy of fracton stabilizer codes},}\ }\href {\doibase
  10.1103/PhysRevB.97.125101} {\bibfield  {journal} {\bibinfo  {journal} {Phys.
  Rev. B}\ }\textbf {\bibinfo {volume} {97}},\ \bibinfo {pages} {125101}
  (\bibinfo {year} {2018})}\BibitemShut {NoStop}%
\bibitem [{\citenamefont {Schmitz}\ \emph {et~al.}(2018)\citenamefont
  {Schmitz}, \citenamefont {Ma}, \citenamefont {Nandkishore},\ and\
  \citenamefont {Parameswaran}}]{schmitzetal}%
  \BibitemOpen
  \bibfield  {author} {\bibinfo {author} {\bibfnamefont {A.~T.}\ \bibnamefont
  {Schmitz}}, \bibinfo {author} {\bibfnamefont {Han}\ \bibnamefont {Ma}},
  \bibinfo {author} {\bibfnamefont {Rahul~M.}\ \bibnamefont {Nandkishore}}, \
  and\ \bibinfo {author} {\bibfnamefont {S.~A.}\ \bibnamefont {Parameswaran}},\
  }\bibfield  {title} {\enquote {\bibinfo {title} {Recoverable information and
  emergent conservation laws in fracton stabilizer codes},}\ }\href {\doibase
  10.1103/PhysRevB.97.134426} {\bibfield  {journal} {\bibinfo  {journal} {Phys.
  Rev. B}\ }\textbf {\bibinfo {volume} {97}},\ \bibinfo {pages} {134426}
  (\bibinfo {year} {2018})}\BibitemShut {NoStop}%
\bibitem [{\citenamefont {He}\ \emph {et~al.}(2018)\citenamefont {He},
  \citenamefont {Zheng}, \citenamefont {Bernevig},\ and\ \citenamefont
  {Regnault}}]{bernevigentanglement}%
  \BibitemOpen
  \bibfield  {author} {\bibinfo {author} {\bibfnamefont {Huan}\ \bibnamefont
  {He}}, \bibinfo {author} {\bibfnamefont {Yunqin}\ \bibnamefont {Zheng}},
  \bibinfo {author} {\bibfnamefont {B.~Andrei}\ \bibnamefont {Bernevig}}, \
  and\ \bibinfo {author} {\bibfnamefont {Nicolas}\ \bibnamefont {Regnault}},\
  }\bibfield  {title} {\enquote {\bibinfo {title} {Entanglement entropy from
  tensor network states for stabilizer codes},}\ }\href {\doibase
  10.1103/PhysRevB.97.125102} {\bibfield  {journal} {\bibinfo  {journal} {Phys.
  Rev. B}\ }\textbf {\bibinfo {volume} {97}},\ \bibinfo {pages} {125102}
  (\bibinfo {year} {2018})}\BibitemShut {NoStop}%
\bibitem [{\citenamefont {Shirley}\ \emph {et~al.}(2019)\citenamefont
  {Shirley}, \citenamefont {Slagle},\ and\ \citenamefont
  {Chen}}]{shirleyfoliated}%
  \BibitemOpen
  \bibfield  {author} {\bibinfo {author} {\bibfnamefont {Wilbur}\ \bibnamefont
  {Shirley}}, \bibinfo {author} {\bibfnamefont {Kevin}\ \bibnamefont {Slagle}},
  \ and\ \bibinfo {author} {\bibfnamefont {Xie}\ \bibnamefont {Chen}},\
  }\bibfield  {title} {\enquote {\bibinfo {title} {{Universal entanglement
  signatures of foliated fracton phases}},}\ }\href {\doibase
  10.21468/SciPostPhys.6.1.015} {\bibfield  {journal} {\bibinfo  {journal}
  {SciPost Phys.}\ }\textbf {\bibinfo {volume} {6}},\ \bibinfo {pages} {015}
  (\bibinfo {year} {2019})}\BibitemShut {NoStop}%
\bibitem [{\citenamefont {Haah}(2014)}]{Haah_2014}%
  \BibitemOpen
  \bibfield  {author} {\bibinfo {author} {\bibfnamefont {Jeongwan}\
  \bibnamefont {Haah}},\ }\bibfield  {title} {\enquote {\bibinfo {title}
  {Bifurcation in entanglement renormalization group flow of a gapped spin
  model},}\ }\href {\doibase 10.1103/physrevb.89.075119} {\bibfield  {journal}
  {\bibinfo  {journal} {Physical Review B}\ }\textbf {\bibinfo {volume} {89}}
  (\bibinfo {year} {2014}),\ 10.1103/physrevb.89.075119}\BibitemShut {NoStop}%
\bibitem [{\citenamefont {Dua}\ \emph {et~al.}(2020)\citenamefont {Dua},
  \citenamefont {Sarkar}, \citenamefont {Williamson},\ and\ \citenamefont
  {Cheng}}]{Dua_2020}%
  \BibitemOpen
  \bibfield  {author} {\bibinfo {author} {\bibfnamefont {Arpit}\ \bibnamefont
  {Dua}}, \bibinfo {author} {\bibfnamefont {Pratyush}\ \bibnamefont {Sarkar}},
  \bibinfo {author} {\bibfnamefont {Dominic~J.}\ \bibnamefont {Williamson}}, \
  and\ \bibinfo {author} {\bibfnamefont {Meng}\ \bibnamefont {Cheng}},\
  }\bibfield  {title} {\enquote {\bibinfo {title} {Bifurcating
  entanglement-renormalization group flows of fracton stabilizer models},}\
  }\href {\doibase 10.1103/physrevresearch.2.033021} {\bibfield  {journal}
  {\bibinfo  {journal} {Physical Review Research}\ }\textbf {\bibinfo {volume}
  {2}} (\bibinfo {year} {2020}),\ 10.1103/physrevresearch.2.033021}\BibitemShut
  {NoStop}%
\bibitem [{\citenamefont {San~Miguel}\ \emph {et~al.}(2021)\citenamefont
  {San~Miguel}, \citenamefont {Dua},\ and\ \citenamefont
  {Williamson}}]{Miguel_2021}%
  \BibitemOpen
  \bibfield  {author} {\bibinfo {author} {\bibfnamefont {Jonathan~Francisco}\
  \bibnamefont {San~Miguel}}, \bibinfo {author} {\bibfnamefont {Arpit}\
  \bibnamefont {Dua}}, \ and\ \bibinfo {author} {\bibfnamefont {Dominic~J.}\
  \bibnamefont {Williamson}},\ }\bibfield  {title} {\enquote {\bibinfo {title}
  {Bifurcating subsystem symmetric entanglement renormalization in two
  dimensions},}\ }\href {\doibase 10.1103/PhysRevB.103.035148} {\bibfield
  {journal} {\bibinfo  {journal} {Phys. Rev. B}\ }\textbf {\bibinfo {volume}
  {103}},\ \bibinfo {pages} {035148} (\bibinfo {year} {2021})}\BibitemShut
  {NoStop}%
\bibitem [{\citenamefont {Hastings}\ and\ \citenamefont
  {Haah}(2021)}]{Hastings_2021}%
  \BibitemOpen
  \bibfield  {author} {\bibinfo {author} {\bibfnamefont {Matthew~B.}\
  \bibnamefont {Hastings}}\ and\ \bibinfo {author} {\bibfnamefont {Jeongwan}\
  \bibnamefont {Haah}},\ }\bibfield  {title} {\enquote {\bibinfo {title}
  {Dynamically generated logical qubits},}\ }\href {\doibase
  10.22331/q-2021-10-19-564} {\bibfield  {journal} {\bibinfo  {journal}
  {Quantum}\ }\textbf {\bibinfo {volume} {5}},\ \bibinfo {pages} {564}
  (\bibinfo {year} {2021})}\BibitemShut {NoStop}%
\bibitem [{\citenamefont {Regnault}\ and\ \citenamefont
  {Nandkishore}(2016)}]{RegnaultFloquet}%
  \BibitemOpen
  \bibfield  {author} {\bibinfo {author} {\bibfnamefont {Nicolas}\ \bibnamefont
  {Regnault}}\ and\ \bibinfo {author} {\bibfnamefont {Rahul}\ \bibnamefont
  {Nandkishore}},\ }\bibfield  {title} {\enquote {\bibinfo {title} {Floquet
  thermalization: Symmetries and random matrix ensembles},}\ }\href {\doibase
  10.1103/PhysRevB.93.104203} {\bibfield  {journal} {\bibinfo  {journal} {Phys.
  Rev. B}\ }\textbf {\bibinfo {volume} {93}},\ \bibinfo {pages} {104203}
  (\bibinfo {year} {2016})}\BibitemShut {NoStop}%
\bibitem [{\citenamefont {Ranard}\ \emph {et~al.}(2022)\citenamefont {Ranard},
  \citenamefont {Walter},\ and\ \citenamefont {Witteveen}}]{Ranard_2022}%
  \BibitemOpen
  \bibfield  {author} {\bibinfo {author} {\bibfnamefont {Daniel}\ \bibnamefont
  {Ranard}}, \bibinfo {author} {\bibfnamefont {Michael}\ \bibnamefont
  {Walter}}, \ and\ \bibinfo {author} {\bibfnamefont {Freek}\ \bibnamefont
  {Witteveen}},\ }\bibfield  {title} {\enquote {\bibinfo {title} {A converse to
  lieb{\textendash}robinson bounds in one dimension using index theory},}\
  }\href {\doibase 10.1007/s00023-022-01193-x} {\bibfield  {journal} {\bibinfo
  {journal} {Annales Henri Poincar{\'{e}}}\ }\textbf {\bibinfo {volume} {23}},\
  \bibinfo {pages} {3905--3979} (\bibinfo {year} {2022})}\BibitemShut {NoStop}%
\bibitem [{\citenamefont {Tran}\ \emph {et~al.}(2021)\citenamefont {Tran},
  \citenamefont {Guo}, \citenamefont {Deshpande}, \citenamefont {Lucas},\ and\
  \citenamefont {Gorshkov}}]{LucasGorshkov}%
  \BibitemOpen
  \bibfield  {author} {\bibinfo {author} {\bibfnamefont {Minh~C.}\ \bibnamefont
  {Tran}}, \bibinfo {author} {\bibfnamefont {Andrew~Y.}\ \bibnamefont {Guo}},
  \bibinfo {author} {\bibfnamefont {Abhinav}\ \bibnamefont {Deshpande}},
  \bibinfo {author} {\bibfnamefont {Andrew}\ \bibnamefont {Lucas}}, \ and\
  \bibinfo {author} {\bibfnamefont {Alexey~V.}\ \bibnamefont {Gorshkov}},\
  }\bibfield  {title} {\enquote {\bibinfo {title} {Optimal state transfer and
  entanglement generation in power-law interacting systems},}\ }\href {\doibase
  10.1103/PhysRevX.11.031016} {\bibfield  {journal} {\bibinfo  {journal} {Phys.
  Rev. X}\ }\textbf {\bibinfo {volume} {11}},\ \bibinfo {pages} {031016}
  (\bibinfo {year} {2021})}\BibitemShut {NoStop}%
\bibitem [{\citenamefont {Potter}\ and\ \citenamefont
  {Morimoto}(2017)}]{Potter2017}%
  \BibitemOpen
  \bibfield  {author} {\bibinfo {author} {\bibfnamefont {Andrew~C.}\
  \bibnamefont {Potter}}\ and\ \bibinfo {author} {\bibfnamefont {Takahiro}\
  \bibnamefont {Morimoto}},\ }\bibfield  {title} {\enquote {\bibinfo {title}
  {Dynamically enriched topological orders in driven two-dimensional
  systems},}\ }\href {\doibase 10.1103/PhysRevB.95.155126} {\bibfield
  {journal} {\bibinfo  {journal} {Phys. Rev. B}\ }\textbf {\bibinfo {volume}
  {95}},\ \bibinfo {pages} {155126} (\bibinfo {year} {2017})}\BibitemShut
  {NoStop}%
\bibitem [{\citenamefont {Roy}\ and\ \citenamefont {Harper}(2017)}]{Roy2017}%
  \BibitemOpen
  \bibfield  {author} {\bibinfo {author} {\bibfnamefont {Rahul}\ \bibnamefont
  {Roy}}\ and\ \bibinfo {author} {\bibfnamefont {Fenner}\ \bibnamefont
  {Harper}},\ }\bibfield  {title} {\enquote {\bibinfo {title} {Floquet
  topological phases with symmetry in all dimensions},}\ }\href {\doibase
  10.1103/PhysRevB.95.195128} {\bibfield  {journal} {\bibinfo  {journal} {Phys.
  Rev. B}\ }\textbf {\bibinfo {volume} {95}},\ \bibinfo {pages} {195128}
  (\bibinfo {year} {2017})}\BibitemShut {NoStop}%
\bibitem [{\citenamefont {Tantivasadakarn}\ and\ \citenamefont
  {Vishwanath}(2022)}]{Tantivasadakarn2022}%
  \BibitemOpen
  \bibfield  {author} {\bibinfo {author} {\bibfnamefont {Nathanan}\
  \bibnamefont {Tantivasadakarn}}\ and\ \bibinfo {author} {\bibfnamefont
  {Ashvin}\ \bibnamefont {Vishwanath}},\ }\bibfield  {title} {\enquote
  {\bibinfo {title} {Symmetric finite-time preparation of cluster states via
  quantum pumps},}\ }\href {\doibase 10.1103/PhysRevLett.129.090501} {\bibfield
   {journal} {\bibinfo  {journal} {Phys. Rev. Lett.}\ }\textbf {\bibinfo
  {volume} {129}},\ \bibinfo {pages} {090501} (\bibinfo {year}
  {2022})}\BibitemShut {NoStop}%
\bibitem [{\citenamefont {Shiozaki}(2022)}]{Shiozaki2022}%
  \BibitemOpen
  \bibfield  {author} {\bibinfo {author} {\bibfnamefont {Ken}\ \bibnamefont
  {Shiozaki}},\ }\bibfield  {title} {\enquote {\bibinfo {title} {Adiabatic
  cycles of quantum spin systems},}\ }\href {\doibase
  10.1103/PhysRevB.106.125108} {\bibfield  {journal} {\bibinfo  {journal}
  {Phys. Rev. B}\ }\textbf {\bibinfo {volume} {106}},\ \bibinfo {pages}
  {125108} (\bibinfo {year} {2022})}\BibitemShut {NoStop}%
\bibitem [{\citenamefont {Van~den Nest}\ \emph {et~al.}(2004)\citenamefont
  {Van~den Nest}, \citenamefont {Dehaene},\ and\ \citenamefont
  {De~Moor}}]{VandenNest2004}%
  \BibitemOpen
  \bibfield  {author} {\bibinfo {author} {\bibfnamefont {Maarten}\ \bibnamefont
  {Van~den Nest}}, \bibinfo {author} {\bibfnamefont {Jeroen}\ \bibnamefont
  {Dehaene}}, \ and\ \bibinfo {author} {\bibfnamefont {Bart}\ \bibnamefont
  {De~Moor}},\ }\bibfield  {title} {\enquote {\bibinfo {title} {Graphical
  description of the action of local clifford transformations on graph
  states},}\ }\href {\doibase 10.1103/PhysRevA.69.022316} {\bibfield  {journal}
  {\bibinfo  {journal} {Phys. Rev. A}\ }\textbf {\bibinfo {volume} {69}},\
  \bibinfo {pages} {022316} (\bibinfo {year} {2004})}\BibitemShut {NoStop}%
\end{thebibliography}%

\appendix

\section{Finite time protocols} \label{app:finitetime}

In this section, we discuss $k$-local protocols which are not strictly finite-depth but can nonetheless be generated in finite time by a $k$-local Hamiltonian. The advantage of the circuits constructed here is that all interactions are geometrically local, except for one special qubit which can interact with every other qubit. That is, we add in a ``one-to-all'' interaction. While the special qubit is involved in the circuit used to create the SPT-ordered state, at the end of the circuit it remains decoupled from the rest of the spins. This can be viewed as modeling the scenario of a central spin problem, or a cavity QED setup in which atoms in a cavity all couple to a cavity mode. As a tradeoff, these circuits are no longer finite-depth, but they can nonetheless be implemented via the finite time evolution of a symmetric $k$-local Hamiltonian.

As a first example, we again consider the 1D cluster state $|\psi_C\rangle$. Suppose we have an even number $N$ of qubits on a ring indexed as $i=1,\dots N$, plus an additional qubit indexed by $i=0$ which lives in the middle of the ring. Suppose in addition that this qubit transforms under the symmetry as an even qubit, so that $X_{\mathrm{even}} = \prod_{i=0}^{N/2}X_{2i}$, and consider the initial state $|\psi_C\rangle\otimes |+\rangle_0$. Then the following $k$-local gates are symmetric,
\begin{equation}
    V_i = CZ_{0,2i-1}CZ_{2i-1,2i}CZ_{2i,2i+1}CZ_{2i+1,0}.
\end{equation}
This gate is pictured in Fig.~\ref{fig:onetoall}. Then we have,
\begin{equation}
    \left( \prod_{i=1}^{N/2} V_i \right)|\psi_C\rangle \otimes |+\rangle_0= |+\rangle^{\otimes N}\otimes |+\rangle_0,
\end{equation}
so this disentangles the cluster state. We remark that, although there is intermediate entanglement between the central qubit and the ring, in the end, the central qubit remains in a product state. This circuit is not strictly finite-depth since every gate acts on the central qubit 0, so if we require layers to have non-overlapping gates it would require a linear number of layers. However, since all gates commute, they can be applied in finite time. 

\begin{figure}
    \centering
    \includegraphics[scale=0.33]{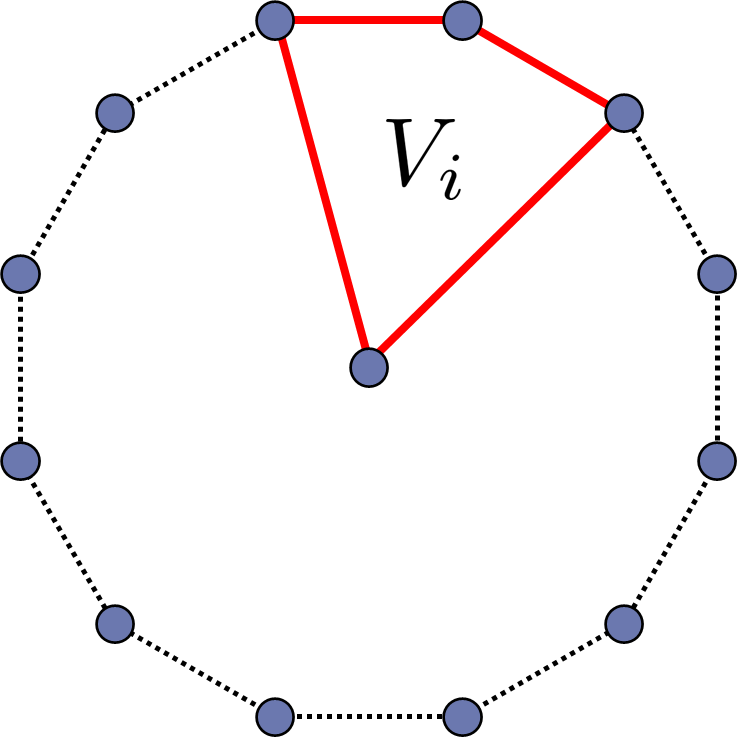}
    \hspace{5mm}
    \includegraphics[scale=0.35]{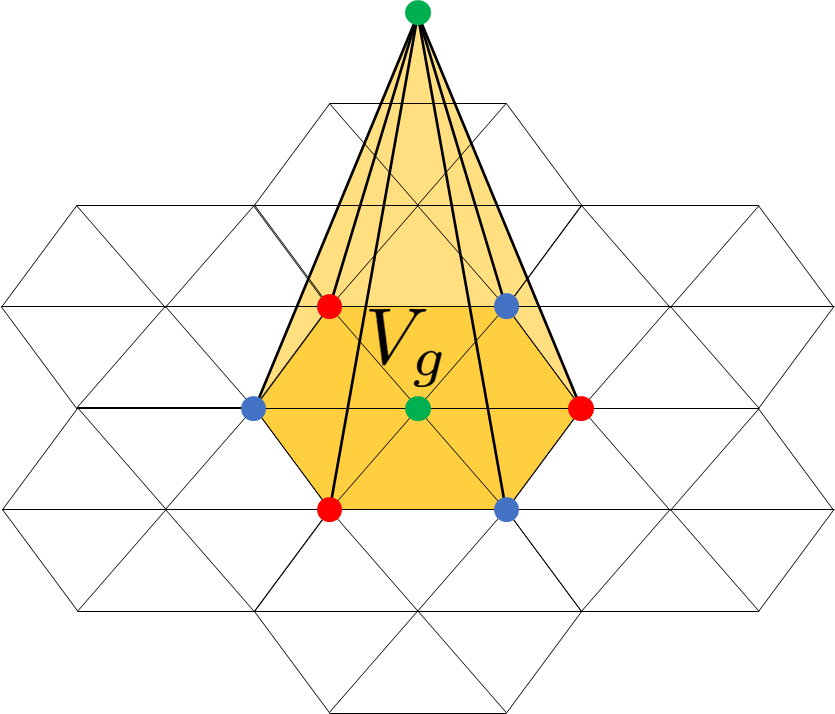}
    \caption{Left: One gate in the one-to-all circuit for preparing the 1D cluster state, where each solid line represents a $CZ$ gate. Right: One gate in the one-to-all circuit for the 2D hypergraph state where each colored triangular face represents a $CCZ$ gate. In both cases, the product of all gates results in all unitaries involving the special qubit canceling pairwise.}
    \label{fig:onetoall}
\end{figure}

The same procedure works for 2D SPTs. Consider again the hypergraph state $|\psi_H\rangle$ on a closed 2D surface. We again add a special qubit 0 which we assume transforms like a qubit on a green-colored site. Let $G$ denote the set of all green qubits on the 2D surface (not including the special qubit). Then, for every $g\in G$ let $g_j$ for $j=1,\dots,6$ denote the 6 qubits neighboring $g$ which alternate between red and blue. Then define the $k$-local symmetric gates,
\begin{equation}
    V_g = \prod_{j=1}^6 CCZ_{g,g_j,g_{j+1}} CCZ_{0,g_j,g_{j+1}}
\end{equation}
The gate is pictured in Fig.~\ref{fig:onetoall}. Then we have,
\begin{equation}
    \left(\prod_{g\in G} V_g\right) |\psi_H\rangle\otimes|+\rangle_0 = |+\rangle^{\otimes N}\otimes|+\rangle_0.
\end{equation}

This argument extends to all cocycle states in all dimensions as before. We remark that, in all cases, it is important that the special qubit transforms in a certain way under the symmetry. If the symmetry did not act on the special qubit, the $k$-local gates we defined would not be symmetric.

This construction can be understood in terms of the path integral representation of SPT order. By viewing the spatial manifold on which the SPT is defined as the boundary of some space-time in one higher dimension, one can construct the SPT on the boundary using a product of local symmetric gates in the bulk \cite{Chen2013}. The circuits we described are exactly of this form, where the bulk consists of a single ancilla qubit that couples to all qubits on the boundary. These same ideas have been used to define quantum pumps that pump a $d$-dimensional SPT from a $(d+1)$-dimensional bulk to the boundary. Such pumps have been constructed for general SPTs \cite{Potter2017,Roy2017,Tantivasadakarn2022,Shiozaki2022} and consist of symmetric gates in the bulk.

We remark that the notion of finite-time preparation is likely too powerful when it comes to the classification of phases. Indeed, any stabilizer state such as the GHZ state and the toric code ground state is equivalent to a graph state up to local unitaries \cite{VandenNest2004}. A graph state is any state that can be prepared from a product state of all $|+\rangle$ states using $CZ$ gates between pairs of qubits. Since all of these $CZ$'s commute, the graph state can be prepared in finite time using an Ising-type Hamiltonian. However, an important caveat is that, in the case of the graph states which are equivalent to the GHZ and toric code states, there are qubits that must interact with a number of other qubits that is extensive in linear system size \cite{Liao2021}, which is somewhat unphysical. For example, if we 
were to impose the physical constraint that the total strength of interactions involving any one qubit is finite in the thermodynamic limit, \textit{i.e.} a finite energy density, we must scale down the interaction strength of each Ising term accordingly. This has the consequence of requiring an interaction time that grows as the linear system size, which is consistent with the linear circuit depth. Conversely, if we do not scale the interactions down in this way, then we don't have a sensible thermodynamic limit - the energy density diverges as we make system size large, indicating that our effective low energy description in terms of an Ising Hamiltonian ceases to be a good approximation. 

{
\section{Examples of $k$-local circuits for QCA} \label{sec:qca_ex}

In this section, we illustrate our general construction of \fdqck{}'s for QCA with a number of examples.

\subsection{1D shift QCA} \label{sec:qca_ex_shift}

The 1D shift QCA which translates all sites to the right by one, can be represented in the standard form with $d=2$, $\ell=1$, and $r=4$. The $u$ and $v$ operators are defined as, 
\begin{equation}
u= \sum_{ij} |ij\rangle (\langle i|\otimes\langle j|),
\end{equation}
and,
\begin{equation}
v = \sum_{ij} (|i\rangle\otimes |j\rangle) \langle ij|,
\end{equation}
where $u$ maps from a pair of two-dimensional Hilbert spaces each spanned by the states $|0\rangle$ and $|1\rangle$ into a four-dimensional Hilbert space spanned by $|00\rangle$, $|01\rangle$, $|10\rangle$, and $|11\rangle$, and similarly for $v$. Graphically, we can draw these operators as,
\begin{equation} \label{eq:shift_standard_form}
    \includegraphics[scale=0.22]{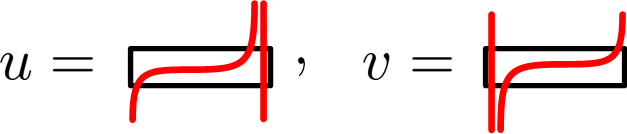}.
\end{equation}
One can see how these operators implement the shift QCA by drawing the whole circuit,
\begin{equation} \label{eq:shift_qca_standard}
    \includegraphics[scale=0.44]{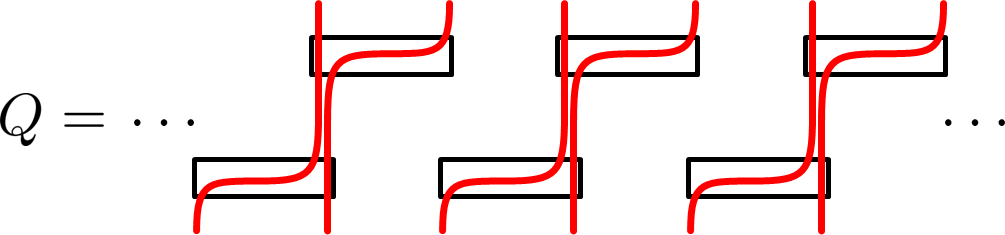}
\end{equation}
Following the lines, it is clear that every site is translated to the right by one.

The spatial inversions of $u$ and $v$ are obtained by swapping the left and right input and output Hilbert spaces, giving $\bar{u} = \sum_{ij} |ij\rangle (\langle j|\otimes\langle i|)$ and $\bar{v} = \sum_{ij} (|j\rangle\otimes |i\rangle) \langle ij|$.
From this we compute, 
\begin{equation}
w=\bar{v}u = \sum_{ij} (|j\rangle\otimes |i\rangle)( \langle i|\otimes \langle j|)=\mathrm{SWAP},
\end{equation}
and $\bar{w}=w$. Next, note that the circuit $V_R$ defined in Eq.~\ref{eq:vr} consists of two types of gates, the simple SWAP gate $S_i$ and the gate $V_i = Q_A^{-1} S_i Q_A$, which in the present case is also a SWAP gate. 

Finally, we apply the gates $S_i$, $V_i$, and $w$, which are all SWAP gates, in the order described by Eq.~\ref{eq:qcafdqck}. This gives a depth-4 \fdqck composed of 2-local SWAP gates realizing the shift QCA, as shown in Fig.~\ref{fig:shift_qca_general}. The resulting \fdqck is similar to that shown in Fig.~\ref{fig:shift_qca}, although somewhat more complicated as a consequence of the more general construction it comes from.

\begin{figure}
    \centering
    \includegraphics[scale=0.44]{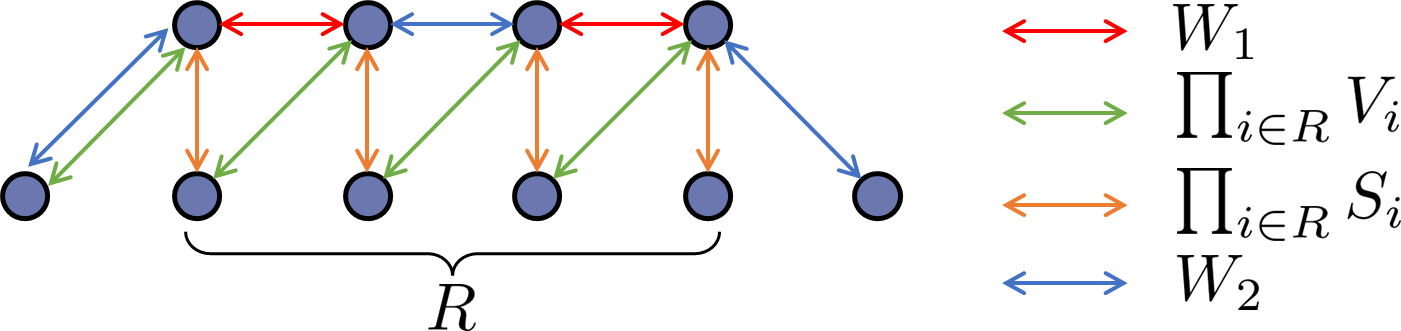}
    \caption{\fdqck for the 1D shift QCA resulting from the general contruction of Sec.~\ref{sec:qca}. The lower (upper) row of sites belongs to the $A$ ($B$) system. The SWAP gates are applied in the order red, green, yellow, blue, which have the combined effect of shifting all sites by one counterclockwise. Notice the similar trapezoidal arrangement of sites compared to the general structure pictured in Fig.~\ref{fig:vr}.}
    \label{fig:shift_qca_general}
\end{figure}

\subsection{1D cluster state} \label{sec:qca_ex_cluster}

As discussed in Sec.~\ref{sec:qca_symm}, we can also apply our construction to the FDQC's which create SPT fixed-point states. We illustrate this for the simple example of the 1D cluster state. The FDQC which constructs the 1D cluster state is given in Eq.~\ref{eq:1d_clus}. Observe that this circuit commutes with the $\mathbb{Z}_2\times\mathbb{Z}_2$ symmetry generated by $X_{\mathrm{odd}}$ and $X_{\mathrm{even}}$ as a whole, but the individual gates which compose the circuit do not commute with the symmetry. Therefore, this is not a symmetric FDQC. Now, we can apply our general construction with $Q=\prod_i CZ_{i,i+1}$ to obtain a symmetric \fdqck that implements the same operator. 

The operators $u$ and $v$ are both equal to $CZ$. Note that $CZ$ is symmetric under spatial inversion. Then, we have $w=I$, and the \fdqck is simply given by $V_R$ (Eq.~\ref{eq:vr}). As before, $V_R$ is composed of SWAP gates $S_i$ and the gates $V_i$. The latter is equal to the following gate,
\begin{equation}
    \includegraphics[scale=0.44]{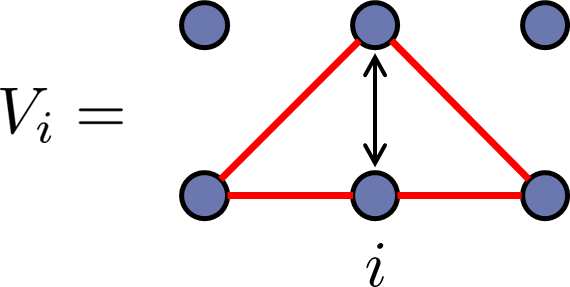}
\end{equation}
where the lower (upper) row of qubits belongs to the $A$ ($B$) system, and the four $CZ$ gates (red lines) act before the SWAP gate (arrow).
This particular product of CZ gates commutes with the $\mathbb{Z}_2\times\mathbb{Z}_2$ symmetry, as do the SWAP gates $S_i$. It is straightforward to check that $V_R = (\prod_{i\in R} S_i)(\prod_{i\in R} V_i)$ is equal to the following operator,
\begin{equation}
    \includegraphics[scale=0.44]{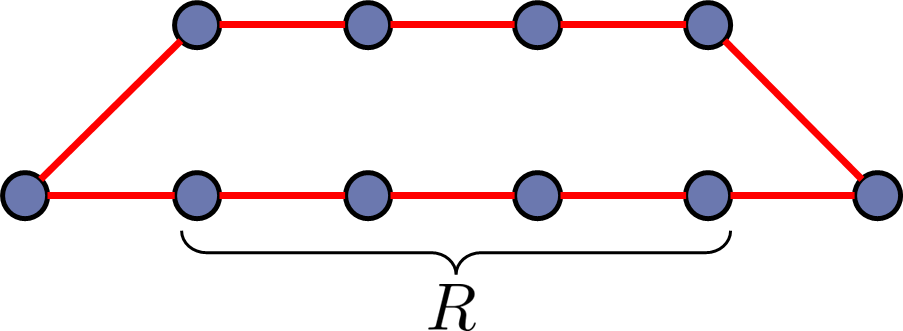}
\end{equation}
which is exactly $Q$ applied to a finite ring. Therefore, this gives an alternative construction for a symmetric \fdqck realizing the 1D cluster state which is similar to, but distinct from, the circuit in Eq.~\ref{eq:1dspt_disent}.

\subsection{2D diagonal shift QCA} \label{sec:qca_ex_2dshift}

Here we show how our construction applies to higher-dimensional QCA by considering the 2D diagonal-shift QCA which translates all sites up and to the right by one site. This example highlights the need for the inductive part of the proof, where we decompose the QCA into products of lower-dimensional QCA until a \fdqck is created. 

We consider implementing the operation on a 2D torus of size $L_x\times L_y$. As described in Sec.~\ref{sec:qca_higherdim}, we first compactify along the $y$ direction and write the QCA in the standard form. This requires $d=2^{L_y}$, $\ell=1$, and $r=2^{2L_y}$. Then we have, 
\begin{equation}
u=\sum_{\vec{i},\vec{j}} |s(\vec{i})\vec{j}\rangle (\langle \vec{i}|\otimes\langle \vec{j}|),
\end{equation}
and,
\begin{equation}
v=\sum_{\vec{i},\vec{j}} (|\vec{i}\rangle \otimes |s(\vec{j})\rangle)\langle \vec{i}\vec{j}|,
\end{equation}
where $\vec{i}=(i_1,\dots i_{L_y})$ and $s(\vec{i}) = (i_{L_y}, i_1,\dots i_{L_y-1})$ is a vertical translation of $\vec{i}$. The spatial inversions $\bar{u}$ and $\bar{v}$ are defined as in the case of the 1D shift QCA. We then calculate,
\begin{equation}
\begin{aligned}
    w &=\sum_{\vec{i},\vec{j}} (|s(\vec{j})\rangle \otimes |s(\vec{i}))\rangle)(\langle \vec{i}|\otimes \langle \vec{j}|) \\
    &=(\mathrm{SHIFT}_y\otimes \mathrm{SHIFT}_y) \overrightarrow{\mathrm{SWAP}}
\end{aligned}
\end{equation}
where $\overrightarrow{\mathrm{SWAP}}$ swaps the two columns indexed by $\vec{i}$ and $\vec{j}$, and $\mathrm{SHIFT}_y$ translates all sites in a column up by one. We observe here that $w$ is composed of lower-dimensional QCA, and is not a FDQC itself. This illustrates the need for the inductive part of the proof. We can express $w$ as an \fdqck as shown in Sec.~\ref{sec:qca_ex_shift}. 

To obtain the complete \fdqck realizing the diagonal shift QCA, we combine $V_R$, which is just a sequence of two products of pairwise SWAPs between sites, with products of $w$ and $\bar{w}$ as described in Eq.~\ref{eq:qcafdqck}.
}

\section{Properties of the operator $w$} \label{app:w}
The following Lemma is used in Sec.~\ref{sec:qca} to show that the operator $w$ is locality-preserving.
\begin{lemma}
    Consider a system of qubits divided into $A$ and $B=\bar{A}$ regions. Let $O$ be an operator acting on the whole system. Using Schmidt decomposition, one can always write $O$ as,
        \begin{align}
            O=\sum_{k=1}^M A_k \otimes B_k,
        \end{align}
        such that $A_k$ and $B_k$ are operators acting on regions $A$ and $B$ respectively and, all $A_k$ operators are linearly independent and all $B_k$ operators are also linearly independent.  Assume $O$ acts trivially on a qubit $j$. Then, if $O$ is written as above, each $A_k$ and each $B_k$ should act trivially on qubit $j$ as well.
    \begin{proof}
    Without loss of generality, we assume $j\in A$, from which it trivially follows that $B_k$ operators act as identity on $j$. Let $P_j$ be an arbitrary operator supported on qubit $j$. Since $O$ acts trivially on $j$, we have $[O,P_j]=0$ and thus,
    \begin{align}
        \sum_{k=1}^M [A_k,P_j]\otimes B_k=0.
    \end{align}
    Let $|\psi \rangle$ be an arbitrary state in the region $A$. If we multiply both sides of the above equality by $|\psi\rangle \langle \psi|\otimes \mathbb{I}_B$ and trace over $A$ we get,
    \begin{align}
        \sum_{k=1}^M \langle \psi |[A_k,P_j] | \psi \rangle B_k=0
    \end{align}
    Since $B_k$s are linearly independent, we find that $\langle \psi |[A_k,P_j] | \psi \rangle=0$ for all $k$. Since $|\psi \rangle$ was arbitrary, it follows that $[A_k,P_j]=0$. Finally, since $P_j$ was arbitrary, we conclude that $A_k$ should act trivially on qubit $j$. 

    \end{proof}
\end{lemma}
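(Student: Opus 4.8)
The plan is to prove the statement by exploiting the linear independence of the $B_k$ operators together with the fact that $O$ commutes with every operator supported on the qubit $j$. Without loss of generality I would assume $j \in A$, so that the $B_k$ trivially act as the identity on $j$ and only the claim about the $A_k$ requires work. The key structural observation is that ``$O$ acts trivially on qubit $j$'' is equivalent to the statement that $[O, P_j] = 0$ for every operator $P_j$ supported on $j$; this is the characterization I would use rather than anything more geometric.

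First I would fix an arbitrary operator $P_j$ on qubit $j$ and write out $[O,P_j] = \sum_k [A_k, P_j] \otimes B_k = 0$, using that $P_j$ commutes with all the $B_k$ since they live on the complementary region. Next I would isolate a single term by projecting: for an arbitrary state $|\psi\rangle$ on region $A$, multiply by $|\psi\rangle\langle\psi| \otimes \mathbb{I}_B$ and take the partial trace over $A$, which kills the $A$-dependence and leaves $\sum_k \langle\psi|[A_k,P_j]|\psi\rangle\, B_k = 0$. Since the $B_k$ are linearly independent, each scalar coefficient must vanish: $\langle\psi|[A_k,P_j]|\psi\rangle = 0$ for all $k$ and all $|\psi\rangle$. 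Because $|\psi\rangle$ ranges over all states of $A$ (and one can also use expectation values in superpositions, or equivalently polarization), this forces $[A_k, P_j] = 0$ as an operator identity. Finally, since $P_j$ was an arbitrary operator on qubit $j$, each $A_k$ commutes with everything on $j$, hence acts trivially there.

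I do not expect a serious obstacle here; the argument is essentially a ``peeling off'' of one Schmidt term at a time. The one point that deserves a little care is the step from $\langle\psi|[A_k,P_j]|\psi\rangle = 0$ for all $|\psi\rangle$ to $[A_k,P_j]=0$: a Hermitian operator with all vanishing diagonal matrix elements (in every basis, equivalently all expectation values) is zero, but $[A_k,P_j]$ need not be Hermitian a priori, so I would either note that vanishing of $\langle\psi|M|\psi\rangle$ for all $|\psi\rangle$ already implies $M=0$ over $\mathbb{C}$ (by the polarization identity, since $\langle\psi|M|\phi\rangle$ is recovered from diagonal values on $|\psi\rangle \pm |\phi\rangle$ and $|\psi\rangle \pm i|\phi\rangle$), or simply run $P_j$ over a spanning set and invoke that commuting with all of them is equivalent to commuting with each. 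Either way this is routine. The other mild subtlety is making sure the partial-trace manipulation is legitimate for operators (not states), but since we are just applying a linear functional $X \mapsto \mathrm{Tr}_A[(|\psi\rangle\langle\psi|\otimes \mathbb{I}_B)X]$ to the operator identity, linearity does all the work and no positivity is needed.
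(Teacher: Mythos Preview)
Your proposal is correct and follows essentially the same route as the paper's proof: assume $j\in A$, commute $O$ with an arbitrary $P_j$, apply the linear functional $\mathrm{Tr}_A[(|\psi\rangle\langle\psi|\otimes\mathbb{I}_B)\,\cdot\,]$, and use linear independence of the $B_k$ to conclude. Your extra remark about needing polarization (since $[A_k,P_j]$ is not a priori Hermitian) is a small refinement the paper omits, but otherwise the arguments coincide step for step.
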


In the following Lemma, we show that the unitary $w$ defined in Sec.~\ref{sec:qca} commutes with all global symmetries of $Q$.

\begin{lemma}
 If $[Q,S]=0$ where $S = s^{\otimes N}$ then $[w,S]=0$.
 \begin{proof}
 Recall the Margolus representation of $Q$ in Eq.~\ref{eq:standard} which defines $Q$ in terms of matrices $u$ and $v$. If $[Q,S]=0$, then $S^{-1} Q S = Q$ so $u'=u(s\otimes s)$ and $v'=(s^{-1}\otimes s^{-1})v$ define the same QCA as $u$ and $v$. Then, by Theorem 3.10 of \cite{Cirac2017}, there must exist unitaries $x$ and $y$ such that $u'=(x\otimes y)u$ and $v'=v(y^{-1}\otimes x^{-1})$ (note that Eq.~35b in \cite{Cirac2017} implies the condition on $v$ via Eq.~29b therein). Defining $w'=\bar{v'}u'$, we have, $$w'=\bar{v}(x^{-1}\otimes y^{-1})(x\otimes y)u = \bar{v}u\equiv w,$$ so, $$w'\equiv(s^{-1}\otimes s^{-1})w(s\otimes s)=w,$$ which gives $[w,S]=0$.
 \end{proof}
\end{lemma}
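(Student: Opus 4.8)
The plan is to exploit the \emph{rigidity} of the Margolus (standard) form of a 1D QCA: the only freedom in writing $Q = (\prod_i v_{2i-1,2i})(\prod_i u_{2i,2i+1})$ with prescribed unit range is a ``gauge'' transformation acting on the auxiliary spaces $\mathbb{C}^\ell$ and $\mathbb{C}^r$ that enter the representation, as established by the uniqueness theorem for standard forms (Theorem~3.10 of \cite{Cirac2017}; see also \cite{Schumacher2004}). A priori the hypothesis $[Q,S]=0$ constrains only the global operator $Q$, whereas $w=\bar v u$ is built from purely local Margolus data; the point of the argument is that the rigidity of the standard form is exactly what lets this global constraint descend to the local statement $[w,s\otimes s]=0$. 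Concretely, I would rewrite the symmetry constraint as an equality $S^{-1}QS=Q$ of two standard forms and track what the gauge relating them forces upon $w$.

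First I would conjugate the standard form of $Q$ by $S=s^{\otimes N}$. Within each of the two layers the gates $u_{2i,2i+1}$ (respectively $v_{2i-1,2i}$) have pairwise disjoint support that together tiles the whole chain, so the on-site factors of $s$ telescope layer-by-layer into $S$ and $S^{-1}$. Precisely, setting $u' = u(s\otimes s)$ and $v' = (s^{-1}\otimes s^{-1})v$ one gets $(\prod_i v'_{2i-1,2i})(\prod_i u'_{2i,2i+1}) = S^{-1}QS = Q$, using $[Q,S]=0$ at the last step; hence $(u',v')$ is a second standard form for the \emph{same} QCA. The only fiddly point here is the index bookkeeping showing the telescoping is exact, which is immediate from disjointness of supports within a layer.

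Next I would invoke the uniqueness theorem: since $(u',v')$ and $(u,v)$ represent the same $Q$, there exist unitaries $x$ on $\mathbb{C}^\ell$ and $y$ on $\mathbb{C}^r$ with $u' = (x\otimes y)u$, and the accompanying glueing constraint then forces $v' = v(y^{-1}\otimes x^{-1})$ on the $r,\ell$ input legs in the order dictated by the circuit (cf.\ Eqs.~29b and 35b of \cite{Cirac2017}). Then I would compute $w':=\bar{v'}u'$ in two ways. Using the gauge form, spatial reversal turns the right-leg $x^{-1}$ into a left-leg $x^{-1}$, etc., so $\bar{v'} = \bar v\,(x^{-1}\otimes y^{-1})$ and $w' = \bar v\,(x^{-1}\otimes y^{-1})(x\otimes y)\,u = \bar v u = w$. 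Using instead $u' = u(s\otimes s)$, $v' = (s^{-1}\otimes s^{-1})v$, and the fact that $s^{-1}\otimes s^{-1}$ sits on the two physical output legs of $v$ and is symmetric under their exchange, I get $\bar{v'} = (s^{-1}\otimes s^{-1})\bar v$, hence $w' = (s^{-1}\otimes s^{-1})\,\bar v u\,(s\otimes s) = (s^{-1}\otimes s^{-1})\,w\,(s\otimes s)$. Equating the two expressions yields $(s\otimes s)w = w(s\otimes s)$, and since $w$ is supported on two adjacent sites this is exactly $[w,S]=0$.

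I expect the main obstacle to be a matter of conventions rather than of substance: keeping straight which auxiliary leg ($\ell$ or $r$, left or right) each gauge unitary sits on, and checking that spatial reversal $\bar{\,\cdot\,}$ conjugates these placements in the stated way, so that $x$ and $y$ genuinely cancel in $w'$ while the $s$'s survive. One should also confirm the hypotheses of the uniqueness theorem are in force (unit range, arranged by blocking sites) and handle the degenerate cases $\ell=1$ or $r=1$; beyond that the argument is a short algebraic manipulation, and the analogue of Lemma~1 above plays no role here.
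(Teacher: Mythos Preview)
Your proposal is correct and follows essentially the same route as the paper: conjugate the Margolus data by $s\otimes s$ to obtain a second standard form $(u',v')$ of the same QCA, invoke the uniqueness theorem of \cite{Cirac2017} to relate the two by a gauge $(x,y)$ on the auxiliary legs, then compute $w'=\bar{v'}u'$ in the two ways and compare. Your write-up is in fact more explicit than the paper's about the telescoping in the conjugation step and about why spatial reversal sends $v(y^{-1}\otimes x^{-1})$ to $\bar v(x^{-1}\otimes y^{-1})$ while leaving the symmetric factor $(s^{-1}\otimes s^{-1})$ on the output legs unchanged, but the logical skeleton is identical.
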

Lastly, in the following Lemma, we argue that if $Q$ is transitionally invariant in the compactified direction, then $w$ is also transitionally invariant in the compactified direction.
\begin{lemma} \label{lemma:trans}
    If $T$ is a translation along the compactified directions, then for any operator $O$ supported on supersites $2i$ and $2i+1$, we have $w T O T^\dagger w^\dagger =T w O w^\dagger T^\dagger$. 
\end{lemma}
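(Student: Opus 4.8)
The plan is to mirror, in the translation setting, the structure of the locality-preservation argument that was used to show $w$ is a $(d-1)$-dimensional QCA. Translation invariance of $Q$ in the compactified directions means $T Q T^\dagger = Q$, where $T$ acts on the supersites as the identity but permutes sites within each supersite according to a shift in the compactified directions. The key observation is that this translation $T$ commutes with the supersite structure used in the Margolus representation \eqref{eq:standard}, so it acts ``on-site'' with respect to the $u$ and $v$ blocks: conjugation by $T$ sends the pair $(u,v)$ to $(u', v')$ with $u' = u\,(t\otimes t)$ and $v' = (t^{-1}\otimes t^{-1})\,v$, where $t$ is the compactified-direction shift on a single supersite.

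From here I would run exactly the argument of Lemma~2. Since $u'$ and $v'$ define the same QCA $Q$ as $u$ and $v$, Theorem~3.10 of \cite{Cirac2017} gives unitaries $x,y$ with $u' = (x\otimes y)u$ and $v' = v(y^{-1}\otimes x^{-1})$. Forming $w' = \bar{v'}u'$ we get
\begin{equation}
    w' = \bar{v}\,(x^{-1}\otimes y^{-1})(x\otimes y)\,u = \bar{v}u = w,
\end{equation}
and separately $w' = (t^{-1}\otimes t^{-1})\,w\,(t\otimes t)$ by construction, so $(t^{-1}\otimes t^{-1})\,w\,(t\otimes t) = w$, i.e. $w$ commutes with $t\otimes t$ acting on the two supersites it is supported on. For an operator $O$ on supersites $2i,2i+1$, conjugating the identity $[w_{2i,2i+1}, t_{2i}\otimes t_{2i+1}] = 0$ then yields $w T O T^\dagger w^\dagger = T w O w^\dagger T^\dagger$, which is the claim. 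One subtlety to address: $T$ is the full translation on the quasi-1D chain of supersites, so I should note that $T$ restricted to supersites $2i,2i+1$ is $t_{2i}\otimes t_{2i+1}$ and that $w$ is supported only there, so the intertwining relation localizes cleanly; the other factors of $T$ (on supersites away from $2i,2i+1$) commute with $w_{2i,2i+1}$ trivially.

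The main obstacle I anticipate is making precise the statement that conjugating the Margolus data $(u,v)$ by the compactified translation really does produce the stated transformed data $(u',v')$ with $t$ acting separately on each supersite — this requires that blocking into supersites is compatible with the translation, i.e. that the chosen unit cell of the quasi-1D chain is left invariant (as a set of physical sites) by the compactified shift, which holds because we compactified along those directions and translation invariance is assumed there. Once that compatibility is in hand, the rest is a verbatim repetition of Lemma~2 with $s \to t$, so I would state it briefly and refer back to that proof rather than rewriting it. A second, minor point is to confirm that $\bar{v'} = \bar{v}(x^{-1}\otimes y^{-1})$ follows from $v' = v(y^{-1}\otimes x^{-1})$ under spatial reversal, which is immediate since reversal just swaps the two tensor factors and leaves the single-supersite unitaries $x,y$ untouched up to that swap — exactly as in the proof of Lemma~2.
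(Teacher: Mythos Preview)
Your proposal is correct and takes a genuinely different route from the paper. The paper works directly with the Schmidt-type decomposition $uOu^{-1}=\sum_k A^k\otimes B^k$ and the analogous decomposition for $TOT^\dagger$, uses translation invariance of $Q$ to identify $\sum_k \widetilde{C}^k\otimes\widetilde{D}^k$ with $\sum_k TC^kT^\dagger\otimes TD^kT^\dagger$, and then carries out a graphical partial-trace argument to pass from this tensor-product relation on four supersites to the overlapping-product identity $\sum_k \widetilde{C}^k\widetilde{D}^k = T\bigl(\sum_k C^kD^k\bigr)T^\dagger$ on two supersites, before finishing with a SWAP. You instead observe that translation along the compactified directions is literally an on-site unitary $T=t^{\otimes N}$ at the supersite level, so the argument of Lemma~2 applies verbatim with $s\mapsto t$, yielding the stronger operator identity $[w,\,t\otimes t]=0$ from which the conjugation statement follows immediately. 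Your route is shorter, exhibits Lemma~3 as a direct corollary of Lemma~2, and makes transparent that compactified translations are not structurally different from global on-site symmetries in this framework; the paper's approach, by contrast, is more self-contained in that it reuses the $A^k,B^k,C^k,D^k$ machinery already set up for locality preservation rather than invoking the Cirac rigidity theorem a second time, at the price of the additional tensor-network manipulation.
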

\begin{proof}
Let $O$ be an operator supported on supersites $2i$ and $2i+1$. Let $ u O u^{-1} =\sum_k A^k \otimes B^k$, where $A^k$ ($B^k$) are linearly independent operators supported on the $l$-dimensional ($r$-dimensional) Hilbert space that comes out of $u$. We also define $C^k=v A^k v^{-1}$ and $D^k=v B^k v^{-1}$. On the other hand, let $u (TOT^\dagger)u^{-1}=\sum_k \widetilde{A}^k \otimes \widetilde{B}^k$, where $\widetilde{A}^k$ ($\widetilde{B}^k$) are linearly independent operators supported on the $l$-dimensional ($r$-dimensional) Hilbert spaces. We also define $\widetilde{C}^k=v \widetilde{A}_k v^{-1}$ and $\widetilde{D}^k=v \widetilde{B}^k v^{-1}$. Note that $Q O Q^{-1}=\sum_k C^k \otimes D^k$, where $C^k$ operators act on supersites $2i-1$ and $2i$ and $D^k$ acts on supersites $2i+1$ and $2i+2$. On the other hand, we have $Q (TOT^\dagger) Q^{-1}= \sum_k \widetilde{C}^k\otimes \widetilde{D}^k$. Since $Q$ is translationally invariant, we have
\begin{align}\label{eq:translationCD}
    \sum_k \widetilde{C}^k \otimes \widetilde{D}^k=\sum_k T C^k T^\dagger \otimes T D^k T^\dagger.
\end{align}
Now, when we act on $TOT^\dagger$ with $w$ we get,
\begin{align}
 w (TOT^\dagger) w^\dagger=\sum_k \overline{\widetilde{C}^k}~\overline{\widetilde{D}^k}=\text{SWAP} \Big[\sum_k \widetilde{C}^k\widetilde{D}^k \Big] \text{SWAP},
\end{align}
where $\text{SWAP}$ exchanges the two super sites on which $w$ acts. However, it follows from  Eq.~\ref{eq:translationCD} 
 that $\sum_k\widetilde{C}^k\widetilde{D}^k=\sum_k T C^k D^k T^\dagger$ which is easy to see graphically,
 \begin{align}
     &\includegraphics[width=0.95\columnwidth]{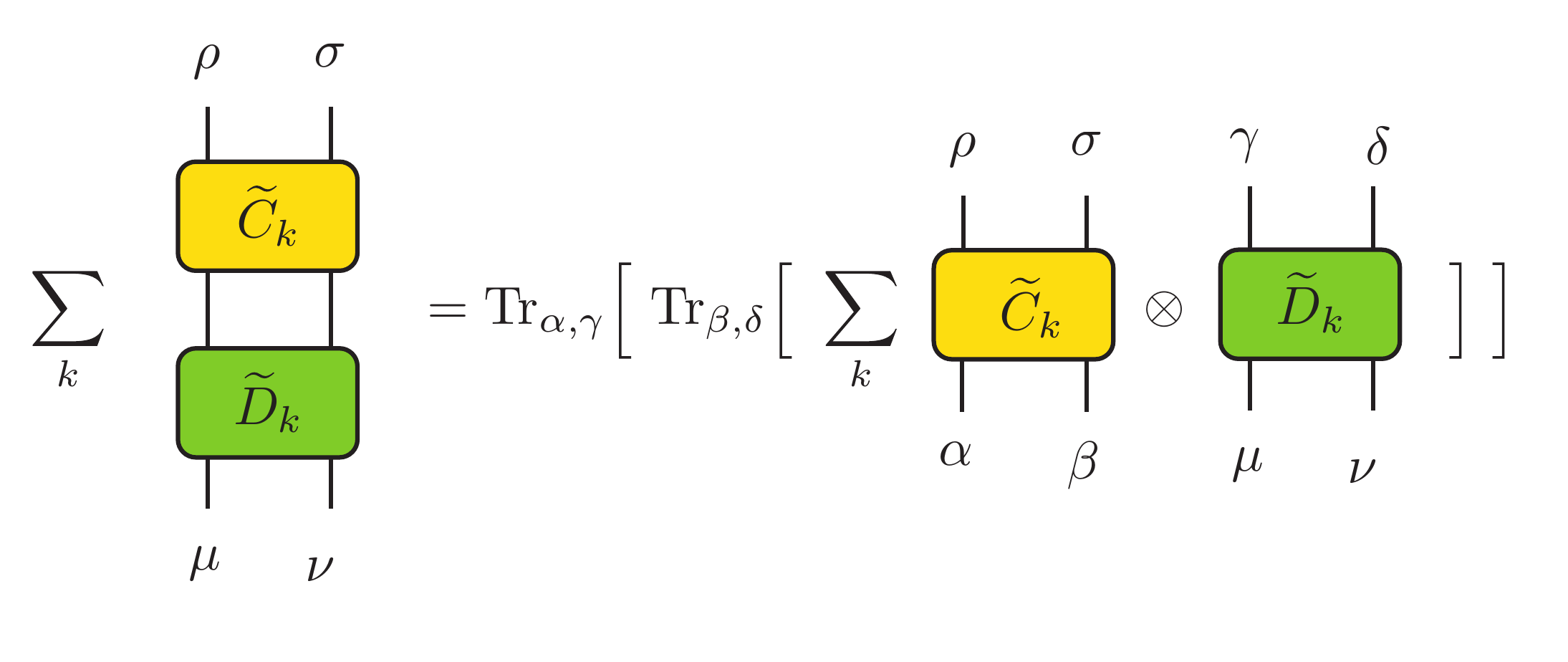}\nonumber\\
     &\includegraphics[width=0.95\columnwidth]{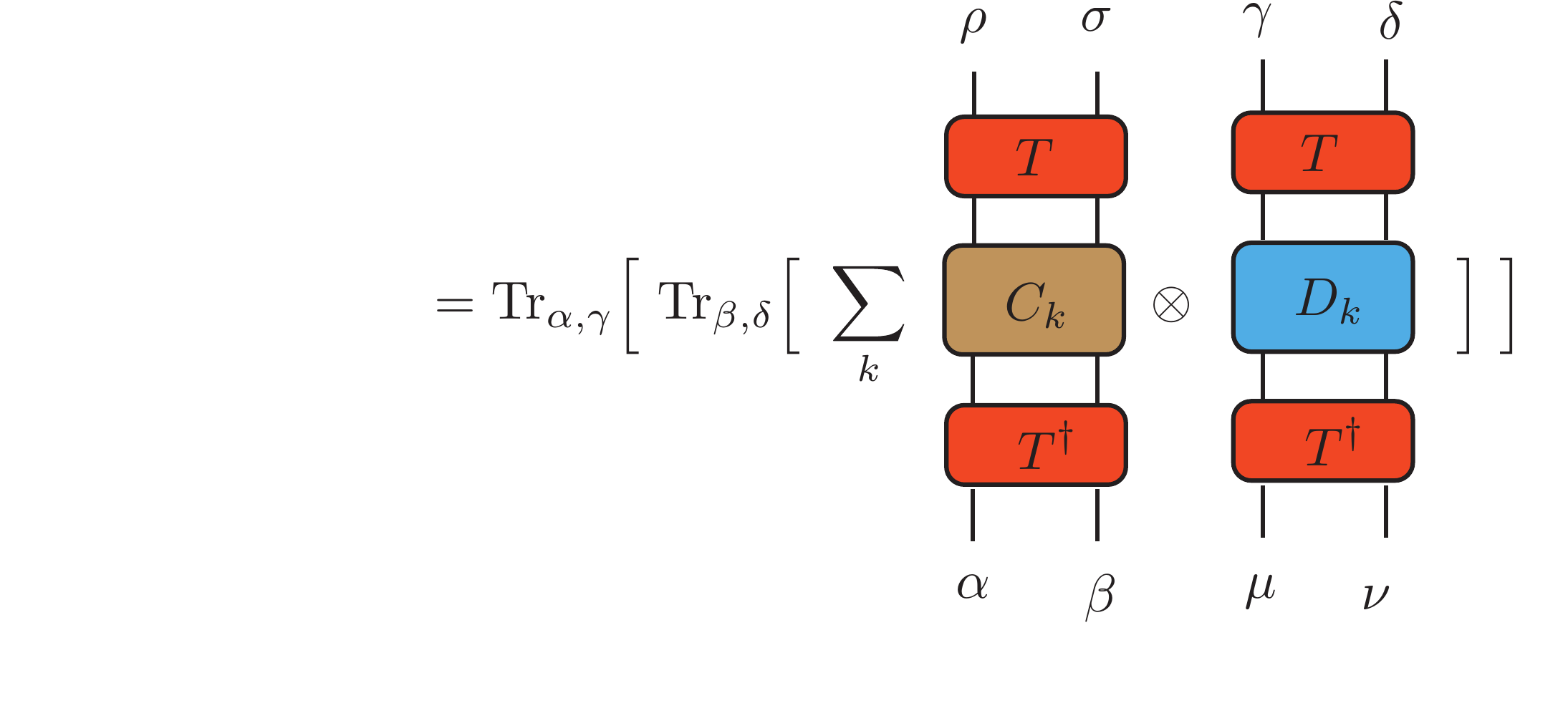}\nonumber\\
     &\includegraphics[width=0.95\columnwidth]{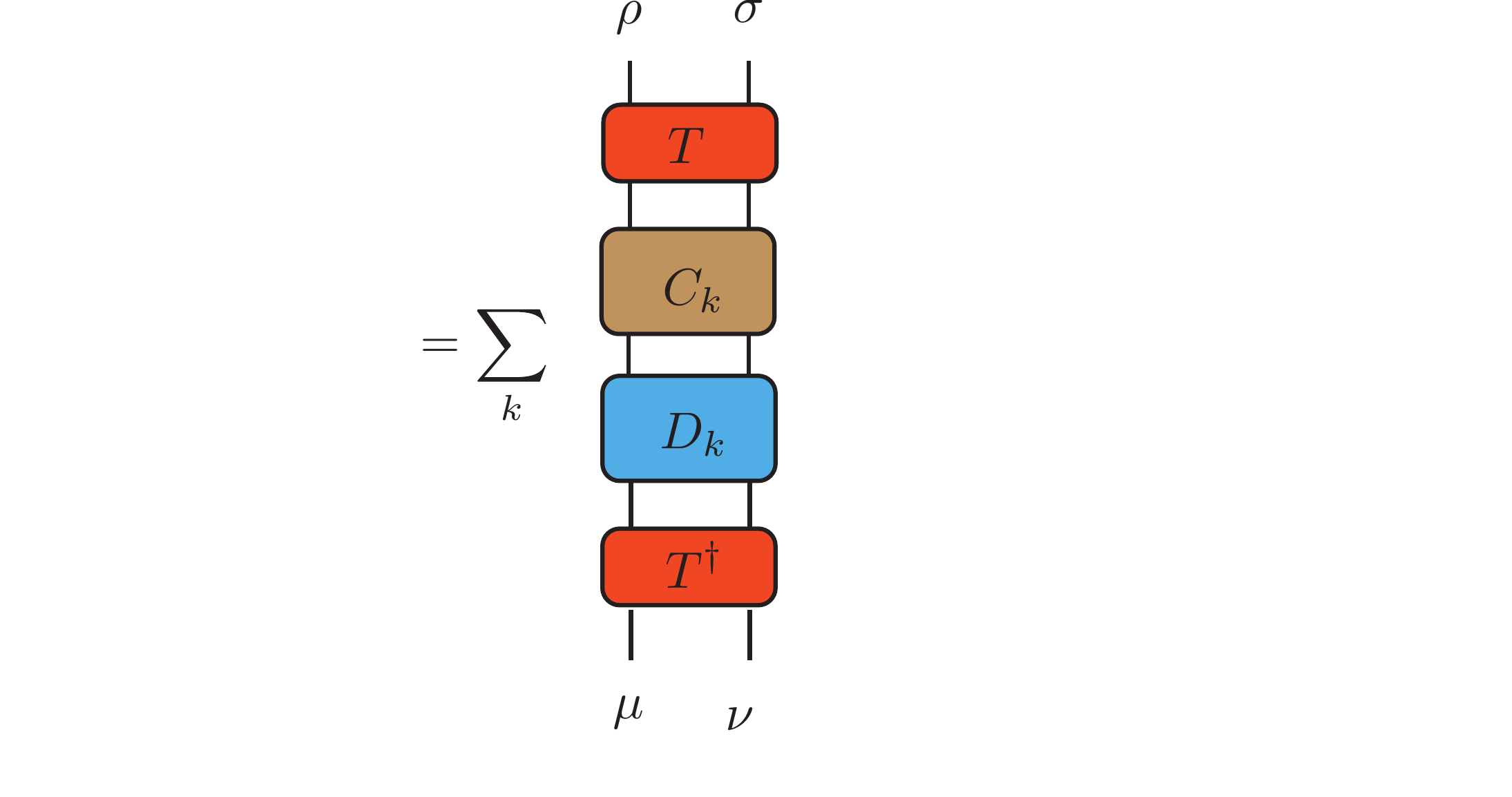}
 \end{align}
 where $\text{Tr}_{\alpha,\beta}$ means contracting the $\alpha$ and $\beta$ indices. In the first line, we have used the linearity of the trace to move the sum over $k$ inside the trace. To go to the second line, we have used Eq.~\ref{eq:translationCD}, and the third line follows by moving the sum outside the traces and contracting the relevant indices. Lastly, since the $\text{SWAP}$ between supersites commutes with the translation $T$ along the compactified directions, we find that,
 \begin{align}
     w (TOT^\dagger) w^\dagger&=\text{SWAP}~T \Big[\sum_k C^k D^k \Big]T~\text{SWAP}\nonumber\\
     &=T\Big[\sum_k \overline{C^k}~\overline{D^k}\Big]T \nonumber\\
     &=T~wOw^\dagger ~ T^\dagger.
 \end{align}
\end{proof}

\section{Light cone argument for $k$-local non-triviality}\label{app:QECC_klocal_nontriviality}
In this section, we briefly review the proof for $k$-local non-triviality of code states of quantum error correcting codes. The argument is basically the same as the light cone argument presented in Ref.~\cite{Bravyi2006}, which was used to show that topological states on manifolds of non-zero genus can not be prepared by constant depth local unitaries, but with the slight modification of replacing locality with $k$-locality. 

Let $\ket{\psi_1}$ and $\ket{\psi_2}$ be two orthogonal code states of a $N$ qubit quantum code. Furthermore, assume the corresponding quantum error correcting code has distance $d$, so for any operator which acts on less than $d$ qubits, we have,
\begin{align}\label{eq:ecc}
    \langle \psi_1 | O |\psi_1 \rangle =\langle \psi_2 | O |\psi_2 \rangle.
\end{align}
Let $U$ be a $k$-local circuit that prepares $\ket{\psi_1}$ in depth $D$, starting from the trivial state $\ket{0}^{\otimes N}$,
\begin{align}\label{eq:qeccU}
    \ket{\psi_1}=U \ket{0}^{\otimes N}
\end{align}
In the following, we show that $D\ge \log_k(d)$. Assume it is not true, meaning that $D< \log_k(d)$. Let $\pi_j=\ketbra{0}_j$ denote the projection operator that projects the $j$'th qubit into the $\ket{0}$ state. Note that $\pi_j$ has only support on qubit $j$. Since $U$ is a $k$-local circuit of depth $D$, the operator $U \pi_j U^\dagger$ can have non-trivial support on at most $k^D$ qubits, which is less than $d$ (because of the assumption $D<\log_k(d)$). Therefore, due to Eq.~\ref{eq:ecc} we have,
\begin{align}
    \expval{U \pi_j U^\dagger}{\psi_2}=\expval{U \pi_j U^\dagger}{\psi_1}=1,
\end{align}
where we have used Eq.~\ref{eq:qeccU} in the last step.  Since $j$ was arbitrary, we should have $U^\dagger \ket{\psi_2}=e^{i\theta}\ket{0}^{\otimes N}$, for some phase  $\theta$, or equivalently $\ket{\psi_2}=e^{i\theta}\ket{\psi_1}$. But this contradicts the assumption that $\ket{\psi_1}$ and $\ket{\psi_2}$ are orthogonal code states, hence $D \ge \log_k(d)$.

The above argument shows that the code states of quantum error correcting codes with a non-zero number of logical qubits, whose distance goes to infinity in the thermodynamic limit, are $k$-local non-trivial, meaning that they cannot be prepared by a $k$-local constant depth unitary circuit. This includes for example the ground states of the toric code on a torus. However, the argument above says nothing about the complexity of preparing the ground state of the topological Hamiltonians such as the toric code on a sphere that has genus zero, because in this case, the ground state is unique (so there is no other orthogonal ground state $\ket{\psi_2}$). Similarly, the light cone argument does not work for SPT states on closed manifolds since the ground state of an SPT Hamiltonian on a closed manifold is unique. While a more involved argument\cite{Aharonov2018} shows that topological states on zero genus surfaces are still $k$-local non-trivial, our result shows that SPT states in contrast are all $k$-local trivial. It is worth noting that although SPT phases on manifolds with open boundary conditions have degenerate ground states, the light cone argument is still inapplicable. This is because, while Eq.~\ref{eq:ecc} holds for symmetric local operators, it can be violated by symmetric but $k$-local  operators. 

{

\section{More details on the numerical study of SPT order in monitored random circuits}\label{app:spt_numerics}
\begin{figure}
    \centering
    \subfigure[]{{\includegraphics[width=0.49\linewidth]{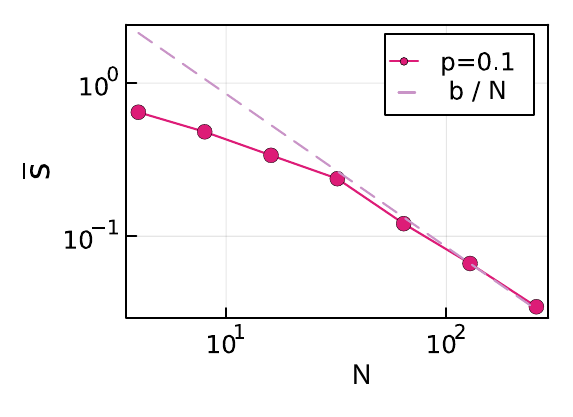}}}
    \subfigure[]{{\includegraphics[width=0.49\linewidth]{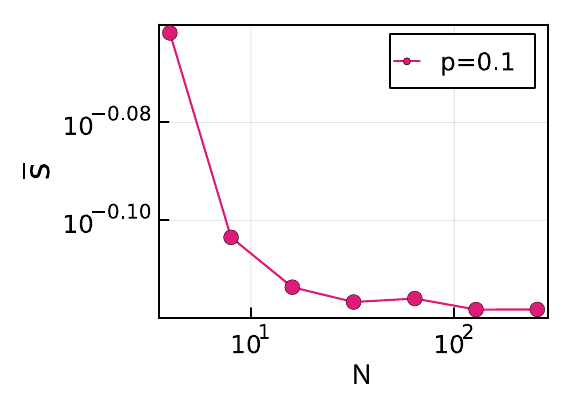}}}
    \subfigure[]{{\includegraphics[width=0.49\linewidth]{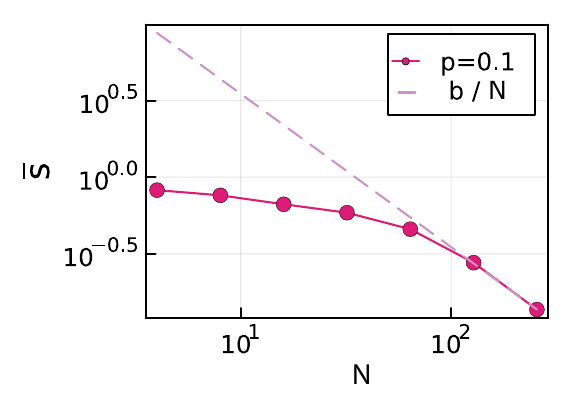}}}
    \subfigure[]{{\includegraphics[width=0.49\linewidth]{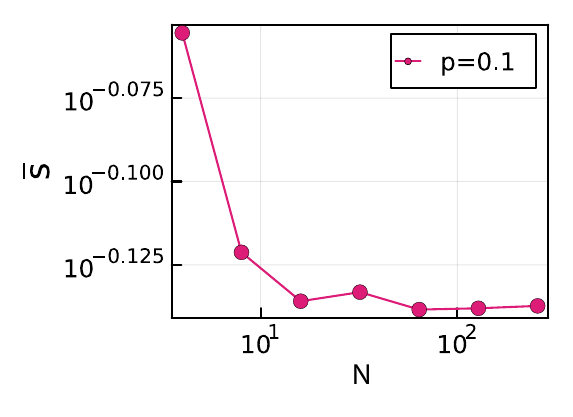}}}
    \caption{
    Average string order parameter $\bar{s}$ versus $N$ at fixed $p=0.1$ in the late-time states of monitored random circuits described in Section \ref{sec:randomcirc} where the unitary gates are chosen randomly from (a) local two-qubit Clifford unitaries, (b) local two-qubit $\mathbb{Z}_2 \times \mathbb{Z}_2$ symmetric Clifford unitaries, (c) 2-local $\mathbb{Z}_2 \times \mathbb{Z}_2$ symmetric Clifford unitaries, and (d) 2-local $\mathbb{Z}_2^N$ symmetric Clifford unitaries.
    }
    \label{fig:sbar_plots_fixed_p}
\end{figure}

In this section, we provide additional details about the transitions that were discussed in Section \ref{sec:randomcirc}. The order parameter that we used to probe the late-time states was given as,
\begin{align}\label{eq:sgp}
    s=\frac{2}{N(N-1)}\sum_{a<b}S(a,b)^2,
\end{align}
with $S(a,b)$ denoting the string order parameter defined in Eq.~\ref{eq:sop}. A state in the SPT phase is characterized by a finite non-zero value of $S(a,b)$ for sufficiently far apart $a$ and $b$. In this case, the sum in Eq.~\ref{eq:sgp} is dominated by sites $a$ and $b$ that are $O(N)$ far apart, and since there are $O(N^2)$ number of such $(a,b)$ pairs, one expects that the parameter $s$ goes to some finite non-zero value in the thermodynamic limit. On the other hand, for a trivial state $S(a,b)$ goes to zero exponentially fast, so only the local terms with $b-a$ smaller than correlation length contributes to the sum in Eq.~\ref{eq:sgp}, and since there are $O(N)$ of such terms, one expects $s$ to drop as $1/N$ for large $N$ and goes to $0$ in the thermodynamic limit. This scaling can be seen in our setup as expected. Fig.~\ref{fig:sbar_plots_fixed_p} shows the order parameter $s$ as a function of $N$ for fixed $p=0.1$, with both axes scaling logarithmically. Note that $s$ vanishes as $1/N$ for large $N$ in panel (a) and (c), which correspond to local (but not necessarily symmetric) $2$-qubit unitaries and $2$-local symmetric unitaries respectively, while it saturates to a finite non-zero value in panel (b) and (d) which correspond to local symmetric $2$-qubit unitaries and $2$-local ``locally" symmetric unitaries (see Section \ref{sec:randomcirc}) respectively.

\begin{figure}[b]
    \centering
    \subfigure[]{{\includegraphics[width=0.49\linewidth]{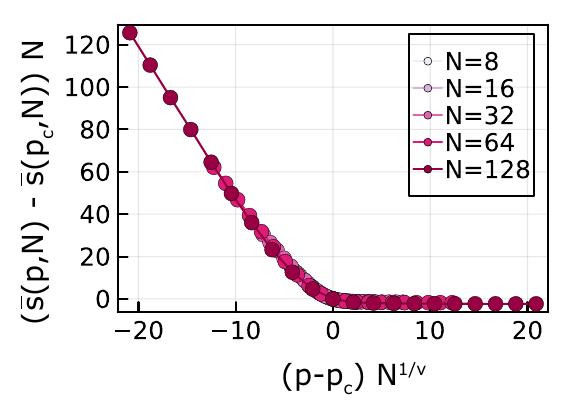}}}
    \subfigure[]{{\includegraphics[width=0.49\linewidth]{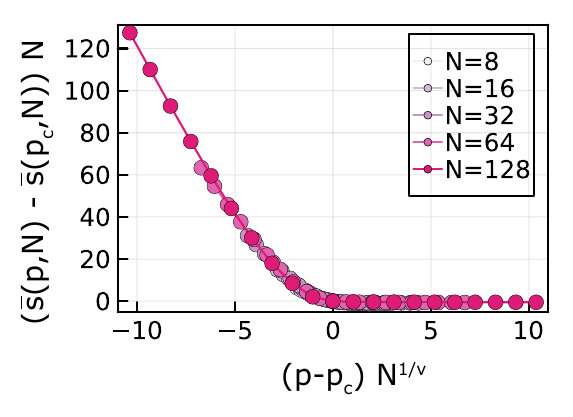}}}
    \caption{
    Data collapse of data points shown in Fig.~\ref{fig:sbar_plots} for (a) when gates are chosen to be local symmetric Cliffords, with $p_c=0.50(5)$ and $\nu=1.3(2)$ and (b) when gates are chosen to be $2$-local locally symmetric Cliffords with $p_c=0.50(5)$ and $\nu=1.6(1)$. 
    }
    \label{fig:s_datacollapse}
\end{figure}
Moreover, for monitored circuits where the SPT order survives up to nonzero values of $p$, i.e. for local symmetric unitaries and $2$-local unitaries that are locally symmetric (panels (b) and (d) in Fig.~\ref{fig:sbar_plots_fixed_p}), one can study the phase transition at $p_c$ by using data collapse for finite size systems. Following Ref.~\cite{sang2021measurement} we assume the following scaling form near the critical point,
\begin{align}
    s(p,N)-s(p_c,N)=N^{-1}~F[(p-p_c)~N^{1/\nu}],
\end{align}
where $p_c$ is the critical value for applying unitary gates and $\nu$ is the correlation length critical exponent, and we can search for values of $p_c$ and $\nu$ that result in the best data collapse. Fig.~\ref{fig:s_datacollapse} shows the best collapse using the data points shown in Fig.~\ref{fig:sbar_plots}, finding $p_c=0.50(5)$ for both local symmetric gates as well as $2$-local locally symmetric gates. On the other hand, the correlation length critical exponent $\nu$ for local symmetric gates and $2$-local locally symmetric gates are $\nu=1.3(2)$ and $\nu=1.6(1)$. In principle one needs to get more data near the critical point and use them for data collapse to get better estimates of $p_c$ and $\nu$.
}

\end{document}